\newcommand{\dd}{\textrm{d}}
\newcommand{\N}{\mathbb{N}}
\newcommand{\Z}{\mathbb{Z}}
\newcommand{\R}{\mathbb{R}}
\newcommand{\C}{\mathbb{C}}
\DeclareMathOperator{\sh}{sh}
\DeclareMathOperator{\ch}{ch}
\DeclareMathOperator{\argch}{argch}
\newcommand{\gO}[3]{\underset{{#2}\to{#3}}{O}\left({#1}\right)} 
\newcommand{\po}[3]{\underset{{#2}\to{#3}}{o}\left({#1}\right)} 
\newtheorem{defi}{Definition}
\newtheorem{lemm}{Lemma}
\newtheorem{coro}{Corollary}
\newtheorem{thm}{Theorem}
\newtheorem{prop}{Proposition}
\theoremstyle{remark}
\newtheorem{rema}{Remark}
\newenvironment{assu}[1]
  {\innerassu}
  {\endinnerassu}
\title{Inverse Regge poles problem on a warped ball} 
\author{Jack Borthwick\thanks{University Bourgogne Franche-Comté, Laboratoire de mathématiques de Besançon, UMR CNRS 6623, Université de Franche-Comté, 16, route de Gray, 25030 Besançon cedex, \href{mailto:jack.borthwick@math.cnrs.fr}{jack.borthwick@math.cnrs.fr}}, Nabile Boussaid\thanks{Laboratoire de mathématiques de Besançon, UMR CNRS 6623, Université de Franche-Comté, 16, route de Gray, 25030 Besançon cedex, \href{mailto:nabile.boussaid@univ-fcomte.fr}{nabile.boussaid@univ-fcomte.fr}}, Thierry Daudé\thanks{Laboratoire de mathématiques de Besançon, UMR CNRS 6623, Université de Franche-Comté, 16, route de Gray, 25030 Besançon cedex, \href{mailto:thierry.daude@univ-fcomte.fr}{thierry.daude@univ-fcomte.fr}}}\date{}
\begin{document}
\maketitle

\begin{abstract}
In this paper, we study a new type of inverse problem on warped product Riemannian manifolds with connected boundary that we name warped balls. Using the symmetry of the geometry, we first define the set of Regge poles as the poles of the meromorphic continuation of the Dirichlet-to-Neumann map with respect to the complex angular momentum appearing in the separation of variables procedure. These Regge poles can also be viewed as the set of eigenvalues and resonances of a one-dimensional Schrödinger equation on the half-line, obtained after separation of variables. Secondly, we find a precise asymptotic localisation of the Regge poles in the complex plane and prove that they uniquely determine the warping function of the warped balls.  
\end{abstract}


\section{Introduction}

\subsection{The model of warped balls and the statement of the inverse problem} 

This paper is devoted to the study of an inverse problem on warped product manifolds with connected boundary from a new set of spectral data that we refer to as \emph{Regge poles}. 

Precisely, let $K$ be a compact $(n-1)$-dimensional ($n\geq 2$) Riemannian manifold with metric $g_K$ and consider the warped product: \[\overline{M}=(0,1]\times K,\] with interior $M=(0,1)\times K$ and boundary $\partial M = \{1\} \times K$, equipped with the metric:
\[g = c^2(r)\left(\dd r^2+ r^2g_K \right),\] for some function $c>0$. Setting $x=-\ln r$ brings the metric $g$ into the form:
\[g=c(e^{-x})^2e^{-2x}\left( \dd x^2 + g_K \right)\equiv f(x)^2g_0, \]
so that $g$ is conformally equivalent to the product metric:
\[ 
g_0 = \dd x^2+g_K,
\]
on $[0,+\infty) \times K$. The conformal factor is defined for $x>0$ by \[f(x)=c(e^{-x})e^{-x}>0.\]
We emphasise that under these general assumptions, the above metrics are not necessarily regular, but can have a conical singularity at $r=0$. 
Actually, this is, in some sense, the generic situation as one proves\footnote{see \cite[Section 4.3.4]{Pe2016}} that the metric $g$ is regular if and only if:
\begin{enumerate}\item the odd-order derivatives $c^{(2k+1)}(0)$ vanish for every $k\in \N$, \item $K = \mathbb{S}^{n-1}$ and $g_{K}= d\Omega^2$ where $d\Omega^2$ is the round metric on $\mathbb{S}^{n-1}$. \end{enumerate}

Our ultimate goal is to determine the conformal factor $c$ (or equivalently $f$) from the knowledge of the \emph{Regge poles} (see Definition~\ref{defi:regge_poles}) in addition to some boundary datum. 
For simplicity we shall make the assumption that the conformal factor $c$ is a small perturbations of $1$, in the sense that:
\begin{equation} \label{hyp:cf1} \begin{cases} c= 1 + \tilde{V}>0,\tilde{V}\in L^{\infty}_{\textrm{comp}}((0,1]), \\ \exists d\in \R_+^*, \tilde{V} > -1 +d. \end{cases}\tag{CF1}\end{equation}
Or, in terms of $f$: 
\begin{equation} \label{hyp:cf'} \begin{cases} f= e^{-x}+V>0, V\in L^{\infty}_{\textrm{comp}}(\mathbb{R}_+),\\ \exists d\in \R_+^*, e^xV > -1 +d. \end{cases} \tag {CF1'}\end{equation}
We shall require some minimal regularity in the sense that:
\begin{equation}\label{hyp:cf2}  V\in C^1_c(\R_+), V'\in W^{1,\infty}(\R_+) \tag{CF2}.\end{equation}

The starting point towards defining the Regge poles is to consider the \emph{Dirichlet-to-Neumann} operator, a natural quantity in the study of inverse problems. In fact, we shall consider a slight generalisation of this allowing for fixed non-zero energy $\lambda>0$. Consider first, for any $\phi\in H^{\frac{1}{2}}(\partial M)$, the (non-homogenous) Dirichlet boundary condition problem in the natural Hilbert space: $L^2(M)$:
\begin{equation} \label{eq:dirichlet_problem} \begin{cases} -\Delta_g u = \lambda u \,\, \textrm{on $M$}, \\ \phantom{-}u=\phi \,\, \textrm{on $\partial M$}.\end{cases}\end{equation}
Recall that: $L^2(M)=L^2(\R_+\times K, f^n \dd x \dd K)$.

Under the assumptions~(\ref{hyp:cf'},~\ref{hyp:cf2}), we will show that given any $\phi \in H^{\frac{1}{2}}(\partial M)$ there is a unique solution $u_\phi$ to problem~\eqref{eq:dirichlet_problem} in $H^1(M)$ and define the Dirichlet-to-Neumann operator (at fixed energy $\lambda)$ to be the operator: \[\Lambda(\lambda): H^{\frac{1}{2}}(\partial M) \rightarrow H^{-\frac{1}{2}}(\partial M),\] mapping any $\phi \in H^{\frac{1}{2}}(\partial M)$ to the boundary value of the (outgoing) normal derivative of the solution to problem~\eqref{eq:dirichlet_problem}, in other words:
\[ \Lambda(\lambda) \phi = -\partial_x u_\phi(0, \cdot). \]

Since $g$ is conformally equivalent to the product metric, the Laplacian $\Delta_g$ is closely related to the product Laplacian $\Delta_{g_0}$. In fact, the change of variable $v=f^{\frac{n}{2}-1}u$, shows that problem~\eqref{eq:dirichlet_problem} is equivalent to:
\begin{equation} \label{eq:dirichlet_problem2} \begin{cases} -\Delta_{g_0}v +q_f v = \lambda f^2 v, \, x>0, \\ \phantom{-}v(0,\cdot)=f^{\frac{n}{2}-1}(0)\phi.\end{cases}\end{equation} 
with $v\in L^2(\mathbb{R}_+\times K , f^2\dd x\dd K)=L^2(\mathbb{R}_+,f^2\dd x )\otimes L^2(K)\footnote{Because $f$ depends only on $x$.}$. The potential $q_f$ is given by:
\begin{equation} q_f = \frac{(f^{\frac{n}{2}-1})''}{f^{\frac{n}{2}-1}}.  \end{equation}

The symmetry of the overall manifold can now be used to its full advantage. Indeed, the Laplacian $-\Delta_K$ on the compact manifold $K$ has compact resolvent and $L^2(K)$ can be decomposed onto an orthonormal basis of eigenvectors $(Y_k)_{k\in \mathbb{N}}$. Let $(\mu_k^2)_{k\in \N}$ denote the eigenvalues of $-\Delta_K$, counted with multiplicity and ordered such that $0=\mu_0^2<\mu_1^2\leq \mu_2^2 \leq \dots $. 
Recall now that:
\[ \Delta_{g_0}=\partial_x^2 + \Delta_K.\]
Since the coefficients of $-\Delta_{g_0}v + q_f -\lambda f^2$ depend only on the coordinate $x$, the operator is stable on any of the subspaces: \[E_k \equiv L^2(\mathbb{R}_+, f^2\dd x)\otimes\textrm{span}~\{Y_k\}.\] On each $E_k$ it reduces to a $1$-dimensional Schrödinger operator and \eqref{eq:dirichlet_problem2} can be written in terms of the components of the decompositions:
\[\phi = \displaystyle \sum_{k\in \mathbb{N}} \phi_kY_k, \quad v=\displaystyle \sum_{k\in \mathbb{N}} v_k Y_k,\]
as:
\begin{equation} \label{DirichletProblem1D}\begin{cases} -v_k'' + (q_f - \lambda f^2) v_k= -\mu_k^2 v_k, \\ v_k(0)=f^{\frac{n}{2}-1}(0)\phi_k. \end{cases}\end{equation}
A key point is that introducing:
\[ \begin{gathered} Q_f = q_f - \frac{(n-2)^2}{4}-\lambda(f-e^{-x})(f+e^{-x}), \\ z_k^2= \mu_k^2 + \frac{(n-2)^2}{4}, \end{gathered}\] the equation on each component can be rewritten as the Schrödinger equation on the halfline:
\begin{equation}\label{eq:schro_v} -v'' + (Q_f - \lambda e^{-2x}) v= -z^2 v,\end{equation}
where we have dropped, for convenience, explicit dependence on the index $k$. Assumptions~(\ref{hyp:cf'},~\ref{hyp:cf2}) ensure that the potential $Q_f \in L^\infty_{\textrm{comp}}(\R^+)$. The Schrödinger operator: \[H = -\frac{d^2}{dx^2} + Q_f(x) - \lambda e^{-2x},\] with Dirichlet boundary conditions at $x=0$ will have a central importance in this work; it will be referred to as the \textit{associated Schrödinger operator}.

Lastly, we point out that our spectral study will put emphasis on the spectral parameter $-z^2$ appearing in \eqref{eq:schro_v} as opposed to the more natural parameter $\lambda$ which will be fixed throughout. In particular, $(-z^2)$ will be not be restricted to the discrete values $-z_k^2$ and we will be especially interested in what happens for complex values. From this perspective, it can be naturally viewed as a \emph{complex angular momentum} coming from the separation of variables. 

Let us now take a closer look at the Dirichlet-to-Neumann operator in terms of $v$. It is also stable on any of the subspaces $E_k$ and its restriction can be written:
\[\Lambda^k(\lambda) \phi_k = -u_k'(0)= -\frac{u'_k(0)}{u_k(0)}\phi_k = -\left[ \frac{v'_k(0)}{v_k(0)} - \left(\frac{n}{2}-1\right)\frac{f'(0)}{f(0)}\right] \phi_k.\]

Let $\phi \in H^{\frac{1}{2}}(M)$ as in the previous paragraph, in Section~\ref{sec:exist_unique} we will show that when $n\geq 3$ the components $v_k$ of $v=f^{\frac{n}{2}-1}u_{\phi}$ satisfy: $v_k, v'_k \in L^2(\R^+)$. Hence, the term $\frac{v'_k(0)}{v_k(0)}$ is in fact the value of the \emph{Weyl-Titchmarsh} function for the Dirichlet Schrödinger operator $H$ at $-z_k^2$. We will use the usual notation\footnote{A way one can understand this notation is as follows, let $\mathcal{S}$ be the Riemann surface $\mathcal{S}=\{(w,z)\in\C^2, w+z^2=0\}$.  $\mathcal{S}$ is in fact diffeomorphic to $\mathbb{C}$ and the variable $z$ is a \emph{global} coordinate. The Weyl-Titchmarsh function can first be defined as a function on the open set $U=\mathcal{S}\cap\{(z,w)\in \mathbb{C}^2, \Re z >0\}$, on which we could use the variable $w=-z^2$ as local coordinate. The notation $M(-z^2)$ denotes the value of $M$ defined on $U$ at the point $(-z^2,z)\in U$. We will show that $M$ has a meromorphic extension to all of $\mathcal{S}\cong \C$, and, in order to avoid confusion and simplify notation in later analysis, we will write $m(z)$ to denote the value of the meromorphic extension of $M$ at any point $(-z^2,z)$ of $\mathcal{S}$. This is equivalent to expressing $M$ in the global coordinate chart given by $z$. }:
\[
  \frac{v_k'(0)}{v_k(0)} = M(-z_k^2),
\]  
for the Weyl-Titchmarsh function associated to \eqref{eq:schro_v}. 

In summary, the Dirichlet-to-Neumann operator $\Lambda(\lambda)$ can thus be diagonalized onto the Hilbert basis of eigenvectors $(Y_k)_{k \in \N}$ and on each invariant subspace $E_k$, it acts as an operator of multiplication by:
\[
  \Lambda^k(\lambda) = \Lambda(\lambda, z_k) =  -\left[ M(-z_k^2) - \left(\frac{n}{2}-1\right)\frac{f'(0)}{f(0)}\right].
\]
Note that $\Lambda(\lambda)$ is completely determined by the Weyl-Titchmarsh function $M$, the shifted eigenvalues $z_k^2$ of the transversal Laplacian $-\Delta_K$ and the quotient $\frac{f'(0)}{f(0)}$.
 
We will prove that in our particular model, the Weyl-Titchmarsh function $M(-z^2)$ is meromorphic on $\C$. By analogy with scattering potential problems \cite{Re1959, AlRe1965}, we then define:
\begin{defi}\label{defi:regge_poles} The \textbf{Regge poles} are the poles of the meromorphic extension to $\mathbb{C}$ of the Weyl-Titchmarsh function $M(-z^2)$ of the operator $H$. \end{defi} It is well known (see for instance \cite{RaSi2000}) that these poles can be equivalently defined as the set of Dirichlet eigenvalues and resonances of the operator $H$. They can therefore be thought of as resonances with respect to the shifted ``angular momentum'' $z$.

The question we want to address in the present work is then the following: \emph{does the knowledge of the Regge poles  of a warped ball $(M,g)$ determine uniquely the warping function $c$ (or $f$)} ? 

According to the previous observation, this problem amounts to studying an \emph{inverse resonance problem} for the half-line Schrödinger operator $H$. Note that, in the case of compactly supported potentials, such problems have been studied thoroughly in \cite{Ko2004, MSW2010}. One of the novelties of this work consists in considering \emph{non-compactly supported potentials} of the form: 
\[
  Q_f - \lambda e^{-2x}, \quad Q_f \in L^\infty_{\textrm{comp}}(\R^+).
\]

\subsection{The results and strategy of the proof} 

Our results on the Regge poles problem are twofold. First, we give a positive answer to the uniqueness inverse Regge poles problem:
\begin{thm} \label{MainInverse}
If two warping factors $f$ and $\tilde{f}$ satisfying~\eqref{hyp:cf'} and~\eqref{hyp:cf2} have the same Regge poles $\{(\alpha_k)_{k \geq 0} \cup (\beta_j)_{j \in \Z^*} \}$ then:
\[Q_f=Q_{\tilde{f}}.\] 
Hence, assuming $f(0)=\tilde{f}(0)$ and $f'(0)=\tilde{f}'(0)$, we have:
\[ f=\tilde{f}.\]
\end{thm}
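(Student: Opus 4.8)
\section*{Proof proposal for Theorem~\ref{MainInverse}}

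The plan is to read the statement as an inverse resonance problem for the associated Schrödinger operator $H$ and to proceed in three stages: first recover the Weyl--Titchmarsh function $m$ from the Regge poles, then recover the potential $Q_f$ from $m$, and finally recover $f$ from $Q_f$ together with the boundary data. Throughout I would freely use the diagonalisation of $\Lambda(\lambda)$ and the meromorphy of $M(-z^2)=m(z)$ on $\mathcal{S}\cong\mathbb{C}$ established above, and, crucially, the precise asymptotic localisation of the Regge poles obtained as the other main result of this work.

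\emph{Stage 1: from the poles to $m$.} The Regge poles are exactly the zeros of the Jost function $a(z):=\psi(0,z)$, where $\psi(\cdot,z)$ is the Weyl solution of \eqref{eq:schro_v} singled out by its behaviour at $+\infty$. Since $Q_f\in L^\infty_{\mathrm{comp}}(\R^+)$, on the complement of the support of $Q_f$ the equation \eqref{eq:schro_v} reduces, after the substitution $t=e^{-x}$, to Bessel's equation of order $z$ with argument $\sqrt{\lambda}\,t$; this fixes the explicit, universal ($f$-independent) asymptotics of $\psi$ at $+\infty$ and shows that $a$ is entire. I would first use the asymptotic localisation of the Regge poles to determine the order and type of $a$, hence the genus of the associated canonical product. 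A Hadamard factorisation then reconstructs $a$ from its zero set $\{(\alpha_k)_{k\ge0}\cup(\beta_j)_{j\in\Z^*}\}$ up to a factor $e^{P(z)}$ with $\deg P$ bounded by the order, and this factor together with the multiplicative constant is pinned down by the prescribed Bessel asymptotics of $\psi$ as $|z|\to\infty$, which coincide for $f$ and $\tilde{f}$. Consequently two warping factors with the same Regge poles share the same Jost function, $a=\tilde{a}$. To pass from $a$ to $m=\psi'(0,\cdot)/\psi(0,\cdot)$ I would exploit the $z\mapsto -z$ symmetry of the Bessel tail: the second solution is $\psi(\cdot,-z)$, and evaluating the $x$-independent Wronskian $W(\psi(\cdot,z),\psi(\cdot,-z))$ both at $+\infty$ (where it is a universal Bessel-type function of $z$) and at $x=0$ yields a functional relation linking $\psi'(0,\cdot)$ to $a$; together with the reality and growth structure of $m$ I expect this to determine $m$ from $a$ alone, so that $m=\tilde{m}$.

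\emph{Stages 2 and 3: from $m$ to $Q_f$ to $f$.} Both operators carry the identical long-range tail $-\lambda e^{-2x}$, so $H$ and $\tilde{H}$ differ only through the compactly supported part of their potentials; a Borg--Marchenko type uniqueness theorem—the half-line $m$-function determines the potential—then gives $Q_f-\lambda e^{-2x}=Q_{\tilde{f}}-\lambda e^{-2x}$, that is $Q_f=Q_{\tilde{f}}$, which is the first assertion. For the second I would unwind the definition $Q_f=q_f-\tfrac{(n-2)^2}{4}-\lambda(f-e^{-x})(f+e^{-x})$ with $q_f=(f^{\frac n2-1})''/f^{\frac n2-1}$. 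Writing $w=f^{\frac n2-1}>0$ (so $f=w^{2/(n-2)}$ for $n\ge3$), the equality $Q_f=Q_{\tilde{f}}$ becomes the second-order nonlinear ODE
\[
 w''=\Big(Q_f+\tfrac{(n-2)^2}{4}-\lambda e^{-2x}+\lambda\, w^{\frac{4}{n-2}}\Big)\,w,
\]
whose right-hand side is locally Lipschitz in $w$ on $\{w>0\}$. The data $f(0)=\tilde{f}(0)$ and $f'(0)=\tilde{f}'(0)$ fix $w(0)$ and $w'(0)$, so Cauchy--Lipschitz uniqueness forces $w=\tilde{w}$, hence $f=\tilde{f}$.

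The principal obstacle is Stage 1, and specifically the non-compact tail $-\lambda e^{-2x}$: unlike the compactly supported case of \cite{Ko2004,MSW2010}, the Jost solutions are Bessel rather than exponential, so the order/type computation, the normalisation of the Hadamard factor, and the Wronskian identity recovering $m$ from $a$ must all be carried out with the Bessel asymptotics. This is precisely where the precise asymptotic localisation of the Regge poles is indispensable, as it is what makes the canonical product well defined and the reconstruction of $a$, and thence of $m$, from the poles rigorous.
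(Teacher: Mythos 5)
Your overall architecture — Hadamard factorisation of the Jost function from its zeros, normalised by the universal right-half-plane Bessel asymptotics; then $m$; then Borg--Marchenko; then Cauchy--Lipschitz for $f$ — is the paper's second proof, and the first half of your Stage 1 is exactly Proposition~\ref{prop:jost_unique}, while Stages 2 and 3 match the paper. The gap is in the second half of Stage 1. The Wronskian identity
\[
\psi(0,z)\,\psi'(0,-z)-\psi'(0,z)\,\psi(0,-z)=\frac{2\sin \pi z}{\pi}
\]
is a single relation between the two unknown functions $\psi'(0,z)$ and $\psi'(0,-z)$ and does \emph{not} determine $\psi'(0,\cdot)$ from $a=\psi(0,\cdot)$: replacing $\psi'(0,z)$ by $\psi'(0,z)+a(z)h(z)$ with $h$ entire and even preserves the identity and the zero set of $a$ but changes $m$ by $h$. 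What the identity does give is the value of $\psi'(0,\cdot)$ (and, after iterated differentiation in $z$, its derivatives) \emph{at the zeros} of $a$, i.e.\ the principal parts of $m$ at the Regge poles. Passing from ``same poles and same principal parts'' to ``same $m$'' is precisely the hard step, and your phrase ``together with the reality and growth structure of $m$ I expect this to determine $m$'' is where the proof is missing. The paper fills it with a Mittag--Leffler-type residue expansion of $m$ (Theorem~\ref{thm:wt_expression}), obtained by integrating $h_z(\mu)m(\mu)$ over contours $\gamma_n$ chosen to avoid the zeros, on which $m$ is controlled via Lemmas~\ref{lemm:asymp_saut}, \ref{lemm:asymp_sector} and~\ref{lemm:lemm6brown}; this machinery needs Assumption~\ref{cf3}, which is not among the hypotheses of Theorem~\ref{MainInverse}.

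Two further points. First, your reliance on the asymptotic localisation of the Regge poles (Theorem~\ref{MainAsymptotic}) implicitly imports Assumption~\ref{cf3}, whereas the theorem is stated under \eqref{hyp:cf'} and \eqref{hyp:cf2} alone; the paper reaches that generality with a first proof that bypasses $m$ entirely, reading off from the Jost function the $S$-function $S(k)=\tilde{\psi}(0,ik)/\tilde{\psi}(0,-ik)$ and, via the Wronskian at the finitely many positive real zeros, the norming constants $m_k$, and then invoking Gel'fand--Levitan--Marchenko uniqueness to get $q=\tilde{q}$. Second, if you wish to stay on your route, the clean way to close the gap is: establish equality of all principal parts by the iterated Wronskian argument, so that $d=m-\tilde{m}$ is entire; note $d$ is even because $m(z)-m(-z)=-\frac{2\sin \pi z}{\pi a(z)a(-z)}$ holds with the same right-hand side once $a=\tilde{a}$; use \eqref{eq:asymp_M_php} to get $d(z)=O(|z|)$ on $\Re z\ge 0$, hence everywhere by evenness, so $d$ is affine, hence (being even) constant; and conclude $d\equiv 0$ from the normalisation $M(-z^2)=-z+o(1)$ as $z\to+\infty$. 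None of this is in your proposal as written, so as it stands Stage 1 does not yet yield $m=\tilde{m}$.
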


The second part of our results pertain to the distribution of the Regge poles. However, we need to make another assumption on the warping function $f = e^{-x} + V$. Precisely, we assume additionally that:
\begin{assu}{(CF3)}\phantom{a}
\label{cf3}
\begin{itemize}
\item supp\,$V = [0,a]$, 
\item There exists $p \in \N^*$, such that $V \in C^{p}(\R^+) \cap AC^{p+1}(\R^+)$, 
\item $\partial_x^{p+1} V$ is continuous $\R^+$ except at $a$ where it has distinct left and right limits.
\end{itemize} 
\end{assu}
Assumption~\ref{cf3} entails that the corresponding potential $Q_f$ satisfies: \[\textrm{supp}\,Q_f = [0,a], \quad Q_f \in C^{p-2}(\R^+) \cap AC^{p-1}(\R^+),\] and $\partial_x^{p-1} Q_f$ is continuous on $\R^+$ except at $a$ where it has a jump. (If $p=1$, we will understand this to mean that $Q_f$ is continuous on $\R^+$ except at $a$ where it has a jump.)
 
With this assumption, we obtain precise asymptotics of the Regge poles. Precisely, we have: \begin{thm} \label{MainAsymptotic}
The set of Regge poles is given by the union of two sequences 
\[
  \{(\alpha_k)_{k \geq 0} \cup (\beta_j)_{j \in \Z^*} \}, 
\]
where: 
\begin{enumerate}
\item The $\alpha_k$'s are real numbers asymptotically close to the negative integers $-k$ as $k \to \infty$, \textit{i.e.} for any $0< \delta <\frac{1}{2}$, there exists $N$ such that  for all $k \geq N$, 
$$
   |\alpha_k + k | < \delta.
$$
\item The $\beta_j$'s are complex numbers which form a set that is symmetric with respect to the real axis and satisfies for all $j >> 1$, 
\[
\begin{aligned}
\beta_{\pm j} = \pm i \frac{\pi}{2a} \left( 2|j| + \frac{p-1}{2} \pm (\textrm{sgn}(A) + 1)   \right) &- \frac{p+1}{2a} \log \frac{|j| \pi}{a} \\ &+ \frac{1}{2a} \log |A| (p-1)! + o(1),
\end{aligned}
\]
for a certain explicit constant $A$.
\end{enumerate}
\end{thm}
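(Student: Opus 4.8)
The plan is to recast the problem as the location of the zeros of the \emph{Jost-type function} $\mathcal{F}(z):=\psi(0,z)$, where $\psi(\cdot,z)$ denotes the solution of \eqref{eq:schro_v} that is recessive as $x\to+\infty$. Since the Weyl--Titchmarsh function is $m(z)=\psi'(0,z)/\psi(0,z)$, the Regge poles are precisely the zeros of $\mathcal F$ (away from common zeros of the two factors). The whole theorem therefore reduces to locating the zeros of a single holomorphic function and to controlling its behaviour as $|z|\to\infty$ in the complex plane.

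The first step is to solve the equation explicitly in the exterior region $x\ge a$, where $Q_f\equiv 0$. There $-v''-\lambda e^{-2x}v=-z^2v$ becomes, after the substitution $s=\sqrt\lambda\,e^{-x}$, Bessel's equation of order $z$, and the solution recessive at $+\infty$ (the one behaving like $e^{-zx}$ for $\Re z>0$) is $\psi(x,z)=J_z(\sqrt\lambda\,e^{-x})$. This fixes the Cauchy data $(\psi(a,z),\psi'(a,z))$ in terms of $J_z$ and $J_z'$ at the \emph{fixed} argument $\sqrt\lambda\,e^{-a}$; I would then feed in the uniform large-order asymptotics of $J_z$, valid sectorwise in the $z$-plane, to obtain their behaviour as $|z|\to\infty$.

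The central step is to propagate these data back across the support $[0,a]$. I would write $\psi$ on $[0,a]$ via the transfer matrix of \eqref{eq:schro_v} and expand it for large $|z|$ by the Liouville--Green/WKB approximation, the leading transport being $e^{\pm za}$, with the effect of $Q_f$ entering through a Volterra/variation-of-constants correction. The mechanism producing the precise asymptotics is repeated integration by parts in this correction: by Assumption~\ref{cf3} each integration gains a factor $z^{-1}$ until the jump of $\partial_x^{p-1}Q_f$ at $a$ (equivalently the jump of $\partial_x^{p+1}V$) yields a non-integrated boundary term, which carries the explicit constant $A$, the factorial $(p-1)!$, and an overall power $z^{-(p+1)}$ (the $p-1$ from the order of the discontinuity plus $2$ from the second-order nature of the equation). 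Combining the exterior Bessel asymptotics with this interior expansion gives an identity of the schematic form $\mathcal F(z)\sim c_1(z)\,e^{-za}+c_2(z)\,e^{za}$, whose zeros obey a transcendental balance
\[
 e^{2za}= A\,(p-1)!\,z^{-(p+1)}\,(1+o(1)),
\]
up to phase and algebraic factors that must be tracked carefully.

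The last step extracts the two families and localises them by Rouch\'e's theorem. For the $\beta_j$, taking logarithms of the balance and solving by successive approximation, the imaginary part $2az=2\pi i|j|+\cdots$ gives the leading $\pm i\frac{\pi}{a}|j|$, the factor $z^{-(p+1)}$ gives the correction $-\frac{p+1}{2a}\log\frac{|j|\pi}{a}$, and $A\,(p-1)!$ gives the additive constant $\frac{1}{2a}\log|A|(p-1)!$, the residual phases $\frac{p-1}{2}$ and $\pm(\mathrm{sgn}(A)+1)$ arising from the argument of the constant and from the Stokes structure of the Bessel asymptotics; a Rouch\'e argument on shrinking boxes about each predicted point upgrades this to the $o(1)$ bound and shows each box contains exactly one zero, while the reality of $Q_f$ forces the symmetry of the set about $\R$. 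For the $\alpha_k$ I would work near $z\approx -k$, where $\mathcal F$ is dominated by $J_z(\sqrt\lambda)$: since $J_{-k}(\sqrt\lambda)=(-1)^kJ_k(\sqrt\lambda)\to 0$ super-exponentially, $\mathcal F$ is small near each large negative integer, and a Rouch\'e comparison on a disc of radius $\delta$ about $-k$ (for $k\ge N$) produces exactly one real zero there. I expect the main obstacle to be this matching: obtaining the large-$|z|$ asymptotics of $\mathcal F$ \emph{uniformly} across the sectors and Stokes lines of the $z$-plane, since the exterior solution is a Bessel function of large order whose asymptotic form changes with $\arg z$ and must be glued consistently to the interior WKB expansion. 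Pinning down every constant --- in particular the exact coefficient $\frac{p+1}{2a}$ of the logarithm and the discrete phase shifts $\frac{p-1}{2}$ and $\pm(\mathrm{sgn}(A)+1)$ --- requires the \emph{subleading} terms of both expansions, and is where the real care lies.
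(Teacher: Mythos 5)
Your global strategy coincides with the paper's: the Regge poles are the zeros of the modified Jost function $\psi(0,z)$; the exterior region $x\ge a$ is solved exactly by $J_z(\sqrt\lambda e^{-x})$; the jump of $\partial_x^{p-1}Q_f$ at $a$ is extracted by $p+1$ integrations by parts in a Laplace-type integral and yields the transcendental balance $e^{2az}\approx A\,(p-1)!\,z^{-(p+1)}$; and the two families are localised by Rouch\'e together with the Hardy--Cartwright analysis. The one methodological difference is the interior propagation across $[0,a]$: you use a transfer-matrix/WKB expansion, whereas the paper writes $\psi(0,z)=J_z(\sqrt\lambda)+\int_0^{2a}K(0,s)J_z(\sqrt\lambda e^{-s})\,\dd s$ via the Gelfand--Levitan transformation operator of Section~\ref{sec:transfo}. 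The paper's choice means the only large-parameter input is the uniform large-order asymptotics of $J_z$ on $\C$ minus small disks about the negative integers (Lemma~\ref{lemm:asymptotic_bessel}), which sidesteps exactly the Stokes-line matching you flag as the hard point, and it identifies the constant $A$ through $\partial_s^pK(0,2a^-)$, whose value in terms of the jump of $\partial_x^{p-1}Q_f(a^-)$ is computed in Section~\ref{sec:transfo}. Your route is viable but note that the tail $-\lambda e^{-2x}$ is not compactly supported, so it must be folded into the interior WKB phase as well as the exterior Bessel data.

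There is, however, one step that fails as written: the localisation of the $\alpha_k$. You assert that near $z\approx -k$ the function $\mathcal F$ is dominated by $J_z(\sqrt\lambda)$, and that the smallness of $J_{-k}(\sqrt\lambda)$ produces a zero. Both claims are incorrect. In the left half-plane the perturbative contribution is not subordinate: by the very integration by parts you propose, the term carrying $e^{-2az}/z^{p+1}$ is exponentially \emph{larger} than the free term $J_z(\sqrt\lambda)\sim \lambda^{z/2}/(2^z\Gamma(z+1))$ as $\Re z\to-\infty$, since $|e^{-2az}|=e^{2a|\Re z|}$ there; a Rouch\'e comparison of $\mathcal F$ against $J_z(\sqrt\lambda)$ on $\partial D(-k,\delta)$ therefore cannot close. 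Moreover, smallness of a holomorphic function near a point does not by itself yield a zero. The correct mechanism (Lemma~\ref{lemm:asymp_sector}) is that in the sector $\Re z<-\varepsilon|\Im z|$ the dominant term of $\psi(0,z)$ is a nonzero constant times $\lambda^{z/2}e^{-2az}\big/\bigl(2^z\Gamma(z+1)z^{p+1}\bigr)$ up to algebraic factors, and the zeros near the negative integers are forced by the factor $1/\Gamma(z+1)$, which genuinely vanishes at $z=-k$; Rouch\'e must be run against \emph{that} comparison function, on circles $\partial D(-k,\delta)$ contained in the region where the asymptotics are valid. With this correction (and with the $\beta_j$ analysis restricted to the complementary sectors $0>\Re z>-\varepsilon|\Im z|$, where Lemma~\ref{lemm:asymp_saut} applies), the remainder of your argument matches the paper's proof.
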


Roughly speaking, the first sequence of Regge poles $(\alpha_k)$ is due to the exponentially decreasing potential $-\lambda e^{-2x}$ in the Schrödinger operator $H$, whereas the second, $(\beta_j)$, is due to the compactly supported perturbation $Q_f$.

\begin{rema} 
We emphasize that the \emph{large} Regge poles $(\alpha_k)_{k \geq 0}$ seem to be quite stable under compactly supported perturbations. We mean by this that if we replace $Q_f \in L^\infty([0,a])$ by another potential $\widetilde{Q_f} \in L^\infty([0,\tilde{a}])$, then the corresponding $\tilde{\alpha}_k$'s remain asymptotically close to the negative integers $-k$. 

In opposition with the above corollary, the Regge poles $(\beta_j)$ are quite unstable under compactly supported perturbation. This can be seen from the precise asymptotics given in Theorem \ref{MainAsymptotic}. Indeed, slight modifications  in the support $[0,a]$ of the perturbation, or in the value of the jump of $Q_f$ at $a$, entail dramatic changes in the asymptotics. This is a well-known phenomenon for scattering resonances \cite{DyZw2019} and black holes quasinormal modes \cite{DMBJ2021}.  	
\end{rema}

Theorem~\ref{MainInverse} will be proved according to two distinct strategies, the first will not require assumption~\ref{cf3} and will provide a general uniqueness result. The second will avail of~\ref{cf3} in order to follow the strategy of~\cite{BKW2003, BW2004}. This will enable us to obtain a formula for the Weyl-Titchmarsh function in terms of the Regge poles:
\begin{thm}\label{thm:wt_expression}
Under the assumptions given in the text, and assuming, for readability that all the Regge poles are simple then:
\[
  M(-z^2) = -z + \sum_{k \geq 0} \frac{a_k}{z - \alpha_k} + \sum_{j \in \Z^*} \frac{b_j}{z - \beta_j}.
\]
Similar expressions exists when the poles are not simple.
\end{thm}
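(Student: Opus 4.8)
The plan is to obtain the expansion as a Mittag-Leffler (partial fraction) decomposition of the meromorphic function $m(z) = M(-z^2)$, using the explicit structure of the associated Schrödinger operator $H$ together with the asymptotic localisation of the poles furnished by Theorem~\ref{MainAsymptotic}. First I would record a representation for $m$. Let $\Psi(x,z)$ denote the Jost solution of \eqref{eq:schro_v}, i.e. the solution that is $L^2$ at $+\infty$, normalised so that $\Psi(x,z)\sim e^{-zx}$ as $x\to+\infty$ for $\Re z>0$. Because $Q_f$ is compactly supported in $[0,a]$, the equation reduces for $x>a$ to $-v''-\lambda e^{-2x}v=-z^2v$, which after the substitution $t=e^{-x}$ is Bessel's equation of order $z$ with argument $\sqrt{\lambda}\,t$; hence $\Psi(x,z)$ is, up to normalisation, a Bessel function $J_z(\sqrt{\lambda}\,e^{-x})$ on $[a,+\infty)$ and extends to an entire function of $z$. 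One then has $m(z)=\Psi'(0,z)/\Psi(0,z)$, so that the Regge poles (Definition~\ref{defi:regge_poles}) are exactly the zeros of the Jost function $z\mapsto\Psi(0,z)$, and the coefficients $a_k,b_j$ are the residues of $m$ at $\alpha_k,\beta_j$ respectively.

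Next I would implement the contour-integral version of Mittag-Leffler's theorem applied to $F(z):=m(z)+z$, which is meromorphic with the same poles and residues as $m$. Fixing $z$ and choosing a sequence of contours $\Gamma_N$ (circles of radius $R_N\to\infty$, or rectangles adapted to the geometry) that avoid all poles, Cauchy's formula combined with the residue theorem gives
\[
  \frac{1}{2\pi i}\oint_{\Gamma_N} F(w)\left(\frac{1}{w-z}-\frac{1}{w}\right)\,\dd w
   = F(z)-F(0)-\sum_{\zeta\text{ inside }\Gamma_N} r_\zeta\left(\frac{1}{z-\zeta}+\frac{1}{\zeta}\right),
\]
where $r_\zeta$ is the residue at the pole $\zeta\in\{\alpha_k\}\cup\{\beta_j\}$. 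The subtracted $1/w$ provides the extra decay $\tfrac{z}{w(w-z)}$ needed to send the left-hand side to $0$. The crucial combinatorial input is that the contours can be chosen to thread the gaps between poles: by Theorem~\ref{MainAsymptotic} the $\alpha_k$ accumulate along the negative real axis with asymptotic spacing $1$, while the two families $\beta_{\pm j}$ march off to $\pm i\infty$ along logarithmic curves in the left half-plane with imaginary parts spaced asymptotically by $\pi/a$; this separation lets me pick $R_N$ (or rectangle sides) staying a fixed distance from every pole.

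The heart of the argument, and the step I expect to be the main obstacle, is the uniform bound $\sup_{\Gamma_N}|m(z)+z|\le C$ together with $m(z)+z\to 0$ along at least one ray, which jointly pin the entire part of the expansion to be exactly $-z$ (no constant and no higher-degree polynomial). On the compact interval $[0,a]$, a WKB / Volterra analysis of the solution carrying the boundary data from $x=a$ gives $\Psi'(0,z)/\Psi(0,z)=-z+O(1/z)$ in regions where $|\Re z|$ is large, which yields the $-z$ leading term and the vanishing constant. The delicate part is controlling $m$ on the portions of $\Gamma_N$ lying near the two sequences of $\beta_j$ and deep in the left half-plane, where one must patch this estimate with uniform (Debye-type) asymptotics of the Bessel function $J_z(\sqrt{\lambda}\,e^{-x})$ for large complex order $z$, uniformly as $w$ ranges over $\Gamma_N$ at a controlled distance from the zeros of the Jost function. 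Once these estimates are in place, letting $N\to\infty$ makes the contour integral vanish and yields
\[
  m(z)=-z+\sum_{k\ge 0}\frac{a_k}{z-\alpha_k}+\sum_{j\in\Z^*}\frac{b_j}{z-\beta_j},
\]
where the series are summed in the symmetric sense (pairing $\beta_j$ with $\beta_{-j}$ and truncating the $\alpha_k$ at finite order), so that the auxiliary $1/\zeta$ corrections and $F(0)$ cancel in the limit. Finally, for poles that are not simple the identical contour argument applies verbatim, the simple term $r_\zeta/(z-\zeta)$ being replaced by the full principal part of $m$ at $\zeta$, which gives the ``similar expressions'' alluded to in the statement.
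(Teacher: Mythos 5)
Your overall architecture (contour-integral Mittag--Leffler decomposition of $m(z)=\psi'(0,z)/\psi(0,z)$ over a sequence of expanding contours threading the gaps between the two families of poles) is the same as the paper's, but the quantitative heart of your argument fails. You propose to work with $F(z)=m(z)+z$ and the first-order kernel $\frac{1}{w-z}-\frac{1}{w}=\frac{z}{w(w-z)}$, justified by the claim that $\sup_{\Gamma_N}|m(w)+w|\le C$ and that $m(z)=-z+O(1/z)$ wherever $|\Re z|$ is large. This is false in the left half-plane. The reflection identity
\[
 m(z)=m(-z)-\frac{2\sin(\pi z)}{\pi\,\psi(0,z)\psi(0,-z)}
\]
together with the asymptotics of the Jost function in the sectors $\Re z<-\varepsilon|\Im z|$ gives $m(z)=m(-z)+o(1)$ there; since $m(w)\sim -w$ for $\Re w\to+\infty$, this means $m(z)\sim +z$, not $-z$, deep in the left half-plane, so $F(w)=m(w)+w\sim 2w$ is \emph{unbounded} on the left-hand portions of your contours. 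With your kernel the integrand is then of size $|w|^{-1}$ over arcs of length comparable to $|w|$, so the contour integral does not tend to $0$ (it contributes an uncontrolled term linear in $z$), and the expansion cannot be closed. There is no way to repair this by symmetric summation of the residues: the defect is on the contour, not in the ordering of the series.

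The fix is exactly what the paper does: because $m$ grows at most linearly on the contours, one must use the second-order subtracted kernel $h_z(\mu)=\left(\frac{z}{\mu}\right)^{2}\frac{1}{z-\mu}=\frac{1}{z-\mu}+\frac{1}{\mu}+\frac{z}{\mu^{2}}$ (up to sign), whose $|\mu|^{-3}$ decay beats the $O(|\mu|)$ growth of $m$. The price is an undetermined affine part $m(0)+z\,m'(0)$ in the resulting expansion, which the paper then pins down to $-z$ a posteriori using the classical Weyl--Titchmarsh asymptotics $M(-z^2)=-z+o(1)$ as $z\to+\infty$ along the reals. Your contour geometry (circular arcs of radii $(2n\pi+\tau)/2a$ near the imaginary axis, chosen via the Brown--Knowles--Weikard-type lower bound on the Jost function, joined to arcs in the sector where the Rouch\'e argument excludes zeros) is in the right spirit and matches the paper; but without upgrading the kernel and dropping the false boundedness claim, the proof does not go through.
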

The strategy is then as follows:
\begin{itemize}
\item \textbf{Step 1}: We consider the Jost solution $\psi(x,z)$ of the one-dimensional Schrödinger equation, that is the unique solution (up to multiplicative constant) of \eqref{eq:schro_v} that is $L^2$ at $x=+\infty$. The Weyl-Titchmarsh function $M(-z^2)$ is then defined for $\Re z >0$ by:
$$
  M(-z^2) = \frac{\psi'(0,z)}{\psi(0,z)}. 
$$
When $V = 0$, the Jost function $\psi(0,z)$ turns out to be explicitly given by a Bessel function. When $V \ne 0$, using the notion of transformation operators introduced by Marchenko (see for instance the presentation given in \cite{Le2018}), and a perturbation argument, we will show that the map $z \mapsto \psi(0,z)$ is an entire function of order $1$ and infinite type. Moreover, we obtain precise asymptotics of $\psi(0,z)$ for $|z|$ large in the complex plane. These asymptotics allow us to locate the zeros $\psi(0,z)$ by a standard Rouché argument and classical results on the location of zeros of Laplace transforms due to Hardy, Cartwright \cite{Ca1930, Ca1931, Ha1905} (we refer to Zworski \cite[p. 287]{Zw1987} for a similar application of these results). The zeros of $\psi(0,z)$ being the Regge poles of $(M,g)$, Theorem \ref{MainAsymptotic} will be proved.  
\item \textbf{Step 2}: Thanks to Step 1, the Weyl-Titchmarsh function $M(-z^2)$ has a meromorphic extension to the Riemann surface: \[ \{ (w,z)\in \C^2, w+z^2=0\}\cong \C,\] with poles given by the Regge poles. Using a Cauchy theorem on a well-chosen contour, we are able to express this function as in Theorem~\ref{thm:wt_expression}:
$$
  M(-z^2) = -z + \sum_{k \geq 0} \frac{a_k}{z - \alpha_k} + \sum_{j \in \Z^*} \frac{b_j}{z - \beta_j}, 
$$
where, in the above, we assume that all Regge poles are simple for readability.
Moreover, using essentially the Hadamard factorisation theorem for entire functions of finite order \cite{Le1996}, we can prove that the residues $a_k$ and $b_j$ only depend on the Regge poles $\{(\alpha_k)_{k \geq 0} \cup (\beta_j)_{j \in \Z^*} \}$. Consequently, we prove that the Weyl-Titchmarsh function $M$ is uniquely determined by the Regge poles. However, it is well-known that the potential $Q_f$ is uniquely determined from the Weyl-Titchmarsh function by the Borg-Marchenko theorem \cite{Be2001, GeSi2000, Si1999}. 

\end{itemize}

\subsection{Some bibliographical comments}

In this section, we give some references that solve inverse problems on warped product Riemannian manifolds similar to the one studied in this paper. 

In \cite{DKN2021, Ge2020, Ge2022}, it is shown that the Steklov spectrum, \textit{i.e.} the eigenvalues of the Dirichlet-to-Neumann map $\Lambda(0)$ at frequency $0$,  of a warped product with connected and disconnected boundary uniquely determines the warping function $f$. Moreover, logarithmic stability estimates are provided.  

In \cite{IsKo2017, IsKo2021}, some inverse spectral problems on compact and non-compact rotationally symmetric manifolds are studied. These are similar to the warped product Riemannian manifolds considered in this paper. In \cite{IsKo2017} for instance, it is shown that the spectral data consisting in the eigenvalues and norming constants of a family of one-dimensional Sturm-Liouville operators,  corresponding to the diagonalisation of the Laplace-Beltrami operator onto a fixed spherical harmonic, determines uniquely the \emph{compactly supported} potential. Moreover, the authors give an analytic isomorphism from the space of spectral data onto the space of functions describing the warping function, hence answering the inverse characterisation problem. In \cite{IsKo2021}, a similar uniqueness inverse result is obtained for a class of non-compact warped product Riemannian manifolds. Here, the spectral data are given by the set of eigenvalues and resonances of a family of one-dimensional Schrödinger operators, corresponding to the diagonalisation of the Laplace-Beltrami operator onto a fixed spherical harmonic. 

The results in \cite{IsKo2021} are the closest in spirit to the ones presented in this paper. The main difference with our work is that the authors only consider Schrödinger operators with \emph{compactly supported} potentials and their spectral data does not correspond to the Regge poles considered here.

\subsection{Content of the paper}

In Section~\ref{Preliminaries}, we provide some additional results on the geometric models of warped product Riemannian manifolds. In particular, we give the proof of the existence and uniqueness of a solution $u$ to the Dirichlet problem \eqref{eq:dirichlet_problem}. In Section~\ref{Unperturbed}, we study in detail the unperturbed model, that is the case $V=0$. The Jost function is shown to be merely a Bessel function and its behavior with respect to the complex angular momentum $z$ is studied. In Section~\ref{Perturbed}, we extend the analytic properties of the Jost function from the unperturbed case to the perturbed case. The main technique used here consists in using the so-called transformation operators that connect the solutions of the unperturbed Schrödinger equation to the solutions of the perturbed Schrödinger equation. Eventually, in Section~\ref{InversePb}, we put together all the previous results and prove our main theorems.

\section{Preliminary results} \label{Preliminaries}

\subsection{The relationship between $\Delta_g$ and $\Delta_{g_0}$}
To begin this section, we explain the equivalence between problems~\eqref{eq:dirichlet_problem} and~\eqref{eq:dirichlet_problem2} in a geometric manner by means of the conformal Laplacian. 
For this, recall that, in terms the coordinate $x=-\ln r \in \R^+$, warped-balls are described as $\overline{M}=\R^+ \times K$ equipped with the metric: 
\[g= f(x)^2 g_0, \]
where $f(x)=c(e^{-x})e^{-x}>0$ and $g_0 =  dx^2 + g_K$. 
$g$ and $g_0$ are therefore in the same conformal class.

For any $w \in \mathbb{R}$, let $\mathcal{E}[w]$ denote the module of sections of the bundle of conformal densities\footnote{A conformal density of weight $w$ can be thought of as a function on $M$ depending on the metric and homogeneous in the sense that: $f(x,\Omega^2 g)=\Omega^w f(x,g).$ Since any metric $g$ on $M$ defines a canonical volume density $\textrm{Vol}_g$, and that $\textrm{Vol}_{\Omega^2 g}=\Omega^n\textrm{Vol}_g$, one has a correspondence (that depends on the choice of metric) between the usual 1-density bundle and conformal densities of weight $-n$ via the map: $\phi \mapsto \frac{\phi}{\textrm{Vol}_g}$.} of weight $w$. The conformal Laplace operator is the operator acts from $\mathcal{E}[1-\frac{n}{2}]$ into $\mathcal{E}[-1-\frac{n}{2}]$ and can be calculated using any metric $g$ in a given conformal class according to the formula:
\[ \mathbf{g}^{ab}\nabla^g_a\nabla^g_b -\frac{(n-2)}{4(n-1)}\mathbf{g}^{ab}R^g_{ab}.\]
In this expression $R^g_{ab}$ is the Ricci tensor for the given metric $g$ and $\mathbf{g}$ is the \emph{conformal metric} $\mathbf{g}=(\sigma_g)^2g$ where $\sigma_g$ is the unique density of weight $1$ that evaluates to $1$ along the section $g$; unlike $g$; $\mathbf{g}$ is conformally invariant.

Using the conformal invariance of the conformal Laplacian in the specific case $g=f^2g_0$ and using bold letters to indicate that the conformal metric is used in contractions as opposed to a given metric in the conformal class, we find that for any conformal density $\sigma \in \mathcal{E}[1-\frac{n}{2}]$:
\[\mathbf{\Delta}_g\sigma = \mathbf{\Delta}_{g_0} \sigma -\frac{(n-2)}{4(n-1)}(\mathbf{R}_{g_0}-\mathbf{R}_g)\sigma.\]
Now:
\[\mathbf{R}_g-\mathbf{R}_{g_0}=\sigma_{g_0}^{-2}\left[(4-n)(n-1)\left(\frac{f'(r)}{f(r)}\right)^2  -2(n-1)\frac{f''(r)}{f(r)}.\right],\]
hence:
\[\begin{aligned} \mathbf{\Delta}_g\sigma&=\mathbf{\Delta}_{g_0}\sigma -\sigma^{-2}_{g_0}\left[\frac{(n-4)(n-2)}{4}\left(\frac{f'(r)}{f(r)}\right)^2 +\frac{n-2}{2}\frac{f''(r)}{f(r)} \right]\sigma,\\ &=\mathbf{\Delta}_{g_0}\sigma -\sigma^{-2}_{g_0}\left(\frac{(f^{\frac{n}{2}-1})''}{f^{\frac{n}{2}-1}} \right)\sigma.\end{aligned}\]
To reinterpret this relation in terms of functions set\footnote{$\sigma_{g}$ is parallel for the Levi-Civita connection of $g$ and similarly for $\sigma_{g_0}$.} $\sigma = u \sigma_{g}^{1-\frac{n}{2}}=v \sigma_{g_0}^{1-\frac{n}{2}}.$
Note that: \[ v=f^{\frac{n}{2}-1}u.\]
In particular, for any $\lambda \in \mathbb{R}_+$:
\[ \begin{aligned} -\Delta_g u = \lambda u &\Leftrightarrow -\mathbf{\Delta}_g \sigma=\lambda\sigma_{g}^{-2}\sigma, \\& \Leftrightarrow -\mathbf{\Delta}_{g_0}\sigma +\sigma^{-2}_{g_0}\left(\frac{(f^{\frac{n}{2}-1})''}{f^{\frac{n}{2}-1}} \right)\sigma = \lambda f^{2} \sigma_{g_0}^{-2}\sigma, \\&\Leftrightarrow -\Delta_{g_0}v + \left(\frac{(f^{\frac{n}{2}-1})''}{f^{\frac{n}{2}-1}} \right)v = \lambda f^2 v.  \end{aligned} \]


\subsection{A few properties of Bessel functions} \label{sec:bessel}
Bessel functions will play an important role in the sequel and we will need a few results regarding their behaviour as functions of \emph{their order}. We state and prove the relevant properties here. We first recall an integral formula~\cite[\textit{Schläfi's formula}, Equation~(4), \S 6.2 p.176 ]{watson1995treatise}:
\[\forall z\in \mathbb{C},  t\in \mathbb{R}_+^*, J_z(t)= \frac{1}{2\pi} \int_{-\pi}^\pi e^{-iz \theta}e^{i t \sin \theta} \textrm{d}\theta -\frac{\sin(\pi z)}{\pi}\int_0^\infty e^{-sz}e^{-t\sh s}\dd s, \] 
from which it is straightforward to see that $\psi_0: z\mapsto J_z(t)$ is an entire function. 
 In fact, from this representation we will show that $\psi_0$ is of finite order, this is contained in the following lemmata:
\begin{lemm}
\label{lemm:ordre_bessel}
For any compact interval $I=[\alpha,\beta]\subset \R_+^*$, there are constants $c_1,c_2>0$ such that: \[ |J_z(t)| \leq c_1e^{c_2z\ln(z)}, \]
uniformly in $t\in I$, for $|z|$ large enough. 
\end{lemm}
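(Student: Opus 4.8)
The plan is to estimate each of the two terms in Schläfi's formula separately, the dominant growth coming entirely from the second (Laplace-transform) term. Throughout I read the right-hand side of the inequality as $c_1 e^{c_2 |z| \ln |z|}$, since $J_z(t)$ is entire in $z$ and the bound can only depend on $|z|$.

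For the first term this is immediate: writing $z = \Re z + i \Im z$, one has $|e^{-iz\theta}| = e^{(\Im z)\theta} \leq e^{\pi |z|}$ for $\theta \in [-\pi,\pi]$, while $|e^{it \sin\theta}| = 1$ since $t \in \R$. Hence $\frac{1}{2\pi}\left| \int_{-\pi}^\pi e^{-iz\theta} e^{it\sin\theta}\,\dd\theta \right| \leq e^{\pi|z|}$, uniformly in $t \in I$. This contribution is only $O(e^{\pi|z|})$, well below the target. For the second term I use $|\sin(\pi z)| \leq e^{\pi|z|}$ together with $|e^{-sz}| = e^{-s\Re z} \leq e^{s|z|}$ and $e^{-t\sinh s} \leq e^{-\alpha \sinh s}$ (valid since $t \geq \alpha$ and $\sinh s \geq 0$), which bounds the whole term, uniformly in $t \in I$, by $\frac{1}{\pi} e^{\pi|z|} \mathcal{I}(|z|)$, where $\mathcal{I}(R) := \int_0^\infty e^{sR} e^{-\alpha \sinh s}\,\dd s$.

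The heart of the argument is to show $\mathcal{I}(R) = e^{R\ln R + O(R)}$. For this I exploit the elementary inequality $\sinh s \geq \tfrac12(e^s - 1)$ for $s \geq 0$, which gives $e^{-\alpha \sinh s} \leq e^{\alpha/2} e^{-\frac{\alpha}{2} e^s}$, and then the substitution $w = e^s$ converts the integral into a Gamma integral:
\[
  \mathcal{I}(R) \leq e^{\alpha/2} \int_1^\infty w^{R-1} e^{-\frac{\alpha}{2} w}\,\dd w \leq e^{\alpha/2} \int_0^\infty w^{R-1} e^{-\frac{\alpha}{2} w}\,\dd w = e^{\alpha/2} \left( \frac{2}{\alpha} \right)^R \Gamma(R).
\]
Applying Stirling's formula in the form $\Gamma(R) = e^{R\ln R + O(R)}$, and noting that the remaining factors $e^{\pi|z|}$ and $(2/\alpha)^{|z|}$ are each $e^{O(|z|)}$, I obtain $|J_z(t)| \leq e^{|z|\ln|z| + O(|z|)}$. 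For $|z|$ large the linear term is absorbed into $|z|\ln|z|$, yielding $|J_z(t)| \leq c_1 e^{c_2 |z|\ln|z|}$ for any fixed $c_2 > 1$, uniformly in $t \in I$.

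The one genuinely nontrivial step is the Gamma/Stirling estimate of $\mathcal{I}(R)$: the exponent $sR - \alpha \sinh s$ has its saddle at $\cosh s^* = R/\alpha$, i.e. $s^* \sim \ln(2R/\alpha)$, where the exponent is $\sim R\ln R$, and this is precisely the origin of the $|z|\ln|z|$ growth; the substitution $w=e^s$ lets me capture this growth through a clean Gamma-function bound rather than a full Laplace-method justification. Uniformity in $t$ is obtained essentially for free, by replacing $t$ with $\alpha = \min I$ in the decaying factor $e^{-t\sinh s}$.
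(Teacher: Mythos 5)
Your proof is correct, and while it follows the same skeleton as the paper's (Schl\"afli's formula, a trivial $e^{\pi|z|}$ bound on the oscillatory integral, and a reduction of the second term to the Laplace-type integral $\int_0^\infty e^{s|z|-t\sh s}\,\dd s$), you handle the one nontrivial step by a genuinely different device. The paper locates the maximum $\phi_{\max}(|z|)=|z|\argch\frac{|z|}{t}-\sqrt{|z|^2-t^2}$ of the exponent, then uses the elegant observation that $s|z|-2\phi_{\max}(|z|)\leq 2t\sh\frac{s}{2}$ for all $s$ to peel off the factor $e^{2\phi_{\max}(|z|)}$ against a manifestly convergent integral; the $|z|\ln|z|$ growth is then read off from $\argch\frac{|z|}{t}\sim\ln|z|$. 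You instead use $\sh s\geq\tfrac12(e^s-1)$ and the substitution $w=e^s$ to dominate the integral by $e^{\alpha/2}(2/\alpha)^{R}\Gamma(R)$ and invoke Stirling. Your route is arguably more elementary and self-contained (no auxiliary optimisation in $|z|$, only a standard Gamma-function bound), at the cost of a slightly lossier constant in the exponent ($R\ln R+O(R)$ rather than the sharper $2\phi_{\max}$, which the paper's remark exploits to get $(1+\varepsilon)\phi_{\max}$); since the lemma only asks for some $c_2>0$, this loss is immaterial. Your treatment of uniformity in $t$ (replacing $t$ by $\alpha=\min I$ in the decaying factor) and your reading of the bound as $c_1e^{c_2|z|\ln|z|}$ are both consistent with what the paper actually uses.
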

\begin{proof}
Since for any $\theta \in [-\pi,\pi]$, $|e^{-iz \theta}e^{it\sin\theta}| = e^{\Im z \theta} \leq e^{|z|\pi}$, the first integral is bounded by $e^{|z|\pi}$. Therefore it will satisfy an estimate of the required form for $|z|$ large enough. 
The second part of the formula can be estimated roughly by:
\[ \frac{e^{|z|\pi}}{\pi}\int_0^\infty e^{s|z| -t\sh s} \dd s. \]
It remains to determine the behaviour of the integral $\displaystyle \int_0^\infty e^{s|z| -t\sh s} \dd s$ for large values of $|z|$.
For this we notice first that when $|z| \geq t $, the map $s\mapsto s|z| - t \sh s$ has a global maximum at: \[s_\textrm{max}(|z|)=\argch\frac{|z|}{t}=\ln\left( \frac{|z|}{t} + \sqrt{\frac{|z |^2}{t^2}-1} \right).\] Its maximum value is given by: 
\[\phi_{\textrm{max}}(|z|)=|z| \argch\frac{|z|}{t} - \sqrt{|z|^2-t^2}.\]
%
Now:
\[ \int_0^\infty e^{s|z| -t\sh s} \dd s= e^{2\phi_{\max}(|z|)} \int_0^\infty e^{s|z|-2\phi_\textrm{max}(|z|)} e^{-t\sh s} \dd s  \]
For fixed $s>0$, the map $|z|\mapsto s|z|-2\phi_{\textrm{max}}(|z|)$ has a global maximum attained when $|z|=t\ch\frac{s}{2}$, so that for any $s>0$ and any $|z|\geq t>0$:
\[s|z|-2\phi_{\textrm{max}}(|z|) \leq 2t\sh \frac{s}{2}. \]
Hence:  \[ \int_0^\infty e^{s|z| -t\sh s} \dd s \leq e^{2\phi_\textrm{max}(|z|)}\int_0^{\infty}e^{t(2\sh\frac{s}{2}-\sh s)} \dd s,\]
the integral on the right converges and this estimate concludes the proof.
\end{proof}
\begin{rema}
In fact for any $\varepsilon >0$, there is $C>0$, such that one has:
\[ \int_0^\infty e^{s|z| -t\sh s} \dd s \leq C e^{(1+\varepsilon)\phi_\textrm{max}(|z|)},\]
\end{rema}

\begin{lemm}
\label{lemm:asymptotic_bessel}
Let $0 <\delta < \frac{1}{2}$, et let $\displaystyle U_{\delta}= \mathbb{C} \setminus \bigcup_{n=1}^{+\infty} D(-n,\delta)$, then uniformly for $t\in [\alpha,\beta]\subset \R_+^*$:
\begin{equation}\label{eq:fa} J_{z}(t) = \frac{\left(\frac{t}{2}\right)^{z}}{\Gamma(z+1)}\left( 1 + \gO{\frac{1}{|z|}}{|z|}{\infty} \right). \tag{FA}\end{equation}
\end{lemm}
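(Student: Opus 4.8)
The plan is to establish the asymptotic formula~\eqref{eq:fa} by exploiting the Schläfi integral representation, separating the Bessel function into the two pieces appearing in that formula and showing that the first integral produces the leading term $(t/2)^z/\Gamma(z+1)$ while the second is negligible on the region $U_\delta$. The crucial structural fact is that the function $(t/2)^z/\Gamma(z+1)$ is precisely the leading term of the Taylor series $J_z(t) = \sum_{m\geq 0} \frac{(-1)^m (t/2)^{z+2m}}{m!\,\Gamma(z+m+1)}$, so the main content is to control the error uniformly away from the negative integers, where $\Gamma(z+1)$ has its poles (equivalently, where $1/\Gamma(z+1)$ vanishes and the normalisation degenerates).

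First I would treat the series representation directly rather than the integral, since the ratio of consecutive terms is
\[
  \frac{(-1)^{m+1}(t/2)^{z+2m+2}/((m+1)!\,\Gamma(z+m+2))}{(-1)^m (t/2)^{z+2m}/(m!\,\Gamma(z+m+1))} = \frac{-(t/2)^2}{(m+1)(z+m+1)}.
\]
The point of restricting to $U_\delta$ is exactly that $|z+m+1| \geq \delta$ for all $m\in\N$ and all $z\in U_\delta$ (since the nearest negative integer is at distance at least $\delta$), giving a uniform lower bound on the denominators; this is what prevents the tail from blowing up near the poles of $\Gamma$. I would factor out the $m=0$ term, namely $(t/2)^z/\Gamma(z+1)$, and bound the remaining sum $\sum_{m\geq 1}$ by comparison with a geometric-type series whose ratio is $O(1/|z|)$ for $|z|$ large, uniformly in $t\in[\alpha,\beta]$ because $t$ stays bounded. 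This yields the claimed factor $1 + O(1/|z|)$.

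The main obstacle is making the uniformity in $z$ over all of $U_\delta$ precise, in particular handling both the regime where $\Re z \to +\infty$ (where the series converges rapidly and the estimate is easy) and the regime where $\Re z \to -\infty$ along $U_\delta$ (where $1/\Gamma(z+1)$ may be large but the $m=0$ term remains the dominant one by virtue of the $\delta$-separation). One must verify that the first term genuinely dominates the tail for $|z|$ large, not merely that each individual term is controlled; the uniform lower bound $|z+m+1|\geq\delta$ combined with $|z+m+1|\geq |z| - (m+1)$ for small $m$ and growth in $m$ for large $m$ is what makes the summation converge to the stated error. As an alternative or cross-check, one could instead return to Schläfi's formula and apply a saddle-point/Laplace analysis to the first integral together with the bound on the second integral already established in Lemma~\ref{lemm:ordre_bessel}, but I expect the series approach to give the sharp $O(1/|z|)$ remainder with the least effort, so I would carry that through and relegate the integral estimate to confirming that the $\sin(\pi z)$ term does not interfere on $U_\delta$.
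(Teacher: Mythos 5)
Your proposal is correct and follows essentially the same route as the paper: the authors also work from the series representation $J_z(t)=\left(\frac{t}{2}\right)^{z}\sum_{m\geq 0}\frac{(-1)^m}{m!\,\Gamma(z+m+1)}\left(\frac{t}{2}\right)^{2m}$, factor out the $m=0$ term $\frac{(t/2)^z}{\Gamma(z+1)}$, and bound the tail via the Pochhammer estimate $|(z+1)_{(m)}|\geq |z+1|\delta^{m-1}$ on $U_\delta$, yielding the remainder bound $\frac{\delta}{|z+1|}e^{t^2/(4\delta)}$, which is exactly your "one large factor $|z+1|$, all remaining factors at least $\delta$" mechanism. The only cosmetic difference is that the paper sums the tail against $e^{t^2/(4\delta)}$ rather than a geometric comparison, but the content is identical.
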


\begin{proof} 
Recall the more usual series representation of the Bessel function for $t>0$ given by:
\[ J_z(t)=\left(\frac{t}{2}\right)^{z} \sum_{m=0}^{+\infty} \frac{(-1)^m}{\Gamma(m+1)\Gamma(z+m+1)}\left(\frac{t}{2}\right)^{2m}. \]
Using that $\Gamma(z+1)=z\Gamma(z)$, this can be rewritten:
\[ J_z(t)=\frac{\left(\frac{t}{2}\right)^z}{\Gamma(z+1)}\left(1 +\sum_{m=1}^{+\infty} \frac{(-1)^m}{m!(z+1)_{(m)}} \left(\frac{t}{2}\right)^{2m} \right), \]
where $(x)_{(n)}$ denotes the Pochhammer symbol, or increasing factorial: \[ (x)_{(n)}=x(x+1)\dots(x+n-1).\]
On $U_\delta$ one certainly has: \[ (z+1)_{(m)} \geq |z+1|\delta^{m-1}, \]
so that: \[\left\lvert\sum_{m=1}^{+\infty} \frac{(-1)^m}{m!(z+1)_{(m)}}\left(\frac{t}{2}\right)^{2m}\right\rvert \leq \frac{\delta}{|z+1|}e^{\frac{t^2}{4\delta}}.\qedhere\]
\end{proof}

From these two lemmata it now follows that:
\begin{coro}
For fixed $t>0$, the entire function $z\mapsto J_z(t)$ is of finite order $\rho=1$.
\end{coro}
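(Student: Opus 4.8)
The plan is to bound the order from above and below, drawing the upper bound from Lemma~\ref{lemm:ordre_bessel} and the matching lower bound from the asymptotic~\eqref{eq:fa}. Throughout, $t>0$ is fixed, and both lemmata apply at this $t$ (it suffices to work on any compact interval $[\alpha,\beta]\subset\R_+^*$ containing $t$). Recall that the order of the entire function $\psi_0\colon z\mapsto J_z(t)$ is
\[ \rho = \limsup_{r\to\infty} \frac{\ln\ln M(r)}{\ln r}, \qquad M(r) = \max_{|z|=r} |J_z(t)|. \]

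For the upper bound, Lemma~\ref{lemm:ordre_bessel} gives $M(r)\le c_1 e^{c_2 r\ln r}$ for $r$ large, whence $\ln\ln M(r) \le \ln r + \ln\ln r + O(1)$ and therefore $\ln\ln M(r)/\ln r \to 1$; in particular $\rho\le 1<\infty$. The extra logarithmic factor in Lemma~\ref{lemm:ordre_bessel} is harmless here, since it contributes only a $\ln\ln r$ term after the double logarithm.

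For the lower bound it suffices to exhibit a single sequence of radii $r_n\to\infty$ along which $M(r_n)$ grows at order $1$. I would evaluate $\psi_0$ at the negative half-integers $z=-(n+\tfrac12)$: for any $0<\delta<\tfrac12$ these points lie in $U_\delta$ and have modulus $r_n=n+\tfrac12\to\infty$, so~\eqref{eq:fa} applies and gives
\[ \left| J_{-(n+\frac12)}(t) \right| = \frac{(t/2)^{-(n+\frac12)}}{\left|\Gamma(\tfrac12-n)\right|}\left(1 + \gO{\frac{1}{n}}{n}{\infty}\right). \]
The reflection formula yields $\Gamma(\tfrac12-n)\,\Gamma(\tfrac12+n)=(-1)^n\pi$, so $|\Gamma(\tfrac12-n)|^{-1}=\pi^{-1}\Gamma(\tfrac12+n)$, and Stirling's formula shows this grows like $(n/e)^n$ up to subexponential factors. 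Since $(t/2)^{-(n+\frac12)}=e^{O(n)}$ is subdominant, we obtain $|J_{-(n+\frac12)}(t)|\ge e^{c\,n\ln n}$ for large $n$ and some $c>0$. Hence $M(r_n)\ge e^{c\,n\ln n}$, which forces $\ln\ln M(r_n)/\ln r_n\to 1$, so $\rho\ge 1$. Combining the two bounds gives $\rho=1$.

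The only real difficulty lies in the lower bound, and within it the key point is recognising that $1/\Gamma(z+1)$ — and hence $J_z(t)$, via~\eqref{eq:fa} — genuinely attains order-$1$ growth along the negative real axis; everything then reduces to the elementary growth estimate for $\Gamma$ at negative half-integers through the reflection formula and Stirling's approximation. The upper bound, by contrast, is an immediate consequence of Lemma~\ref{lemm:ordre_bessel}.
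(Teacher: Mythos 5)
Your proof is correct and follows exactly the route the paper intends: the paper gives no written proof of this corollary, simply asserting that it "follows from these two lemmata," and your argument is the natural instantiation — the upper bound $\rho\le 1$ from Lemma~\ref{lemm:ordre_bessel}, and the lower bound from the asymptotic~\eqref{eq:fa} evaluated along the negative real axis, where $1/\Gamma(z+1)$ supplies the order-$1$ growth. The choice of negative half-integers together with the reflection formula and Stirling is a clean way to make the lower bound explicit.
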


The asymptotics~\eqref{eq:fa} are fundamental and will be used extensively in what follows.

\subsection{Gelfand-Levitan transformation operators}\label{sec:transfo}
As previously mentioned, the operator:
\[ H= -\frac{d^2}{dx^2} - \lambda e^{-2x} + Q_f, \]
is of central interest in this work. This is clearly a perturbation of the Schrödinger operator:
\[ \begin{gathered} H_0= -\frac{d^2}{dx^2}  + q_0, \quad q_0(x)=-\lambda e^{-2x}, x\in \R_+. \end{gathered} \]
In this section we will write: $q=q_0+ Q_f.$
Our assumptions on $f$ ensure that the perturbation $Q_f=q-q_0$ is bounded and of compact support. For definiteness we assume:
\[ \textrm{supp}~Q_f \subset [0,a], a\in \mathbb{R}_+^* \]
Finally,  let $D(H_0)$ (resp. $D(H)$) denote the maximal domain in $L^2(\R_+,\dd x)$ of $H_0$ (resp. $H$):
\[D(H_0) = \{ g \in L^2(\R_+,\dd x), g,g' \in AC(\R_+), H_0g \in L^2(\R_+,\dd x) \},\]
and similarly for $H$.

Since $H$ is a very reasonable perturbation of $H_0$ solutions of $Hg=-z^2g$ can be related to those of $H_0g=-z^2g$ by means of a \emph{transformation operator}~\cite[Chapter 1]{Le2018}.  This is in all points analogous to a scattering operator but in a slightly different context. 

%
The result is that one can find an integral kernel $K$ such that the operator $X$ defined by:
\[ (Xf)(x) = f(x) + \int_x^{+\infty} K(x,t)f(t)\dd t, x >0,\]
maps $D(H_0)$ into $D(H)$ and satisfies the interlacing relationship: \[ HX=XH_0.\] 
%
%
%
%
The kernel $K$ is defined as a fixed point:
\[
K(x,t)=\frac{1}{2}\int_\frac{t+x}{2}^\infty Q_f(s)\dd s+\int_\frac{t+x}{2}^\infty \int_0^\frac{t-x}{2}(q(\alpha-\beta)-q_0(\alpha+\beta))K(\alpha-\beta,\alpha+\beta)\dd \beta \dd \alpha, 
\]
for $0\leq x\leq t$.

We will now reformulate and refine a little the results of~\cite[Section 5]{BW2004}. Below we will omit the dependence on $f$ but emphasise on the dependence in $Q$. We assume:
\[
M_{q,q_0}:=\sup\left\{ \int_0^\frac{t-x}{2}|q(\alpha-\beta)-q_0(\alpha+\beta)|\dd \beta,\; \frac{t-x}{2}\leq \alpha\leq a,\, 0\leq x\leq t\right\}<\infty.
\]

\begin{rema}
We can consider $K_Q$ continuous on $\R^+\times \R^+$ with the requirement
\[
 K_Q(x,t)=0 ~\textrm{ if } x+t \geq 2a \textrm{ or }  x\geq t.
\]
Note however that the condition that $K_Q$ has support on $x\leq t$ does not seem to be fulfilled by the term $\int_\frac{t+x}{2}^\infty Q(s)\dd s$ while it is for the others as, by convention, the integrals vanish whenever the lower bound is larger than the upper one. So from this point of view, below we could consider $\frac{1}{2}{\mathds{1}}_{ x\leq t}\int_\frac{t+x}{2}^\infty Q(s)\dd s$ instead. 

We will actually consider $K_Q$ continuous on $\Omega=\{ (x,t)\in \R^2, 0\leq x\leq t\}$ with the requirement
\[
 K_Q(x,t)=0 ~\textrm{ if } x+t \geq 2a.
\]
The proof will be the same and up to a multiplication by $\mathds{1}_{ x\leq t}$ which only affects the first term, the solutions will coincide.
\end{rema}

Let $E_a$ be the set of 
continuous 
functions 
on $\Omega$
with support in $\Omega_0=\{(x,t)\in \R^+\times \R^+,\,0\leq x \leq t,\, x+t\leq 2a\}$. This space is endowed with the supremum norm. We consider the map $\mathcal{F}_Q$:
\[
 \mathcal{F}_Q(L)(x,t):=\frac{1}{2}\int_\frac{t+x}{2}^\infty Q(s)\dd s+\int_\frac{t+x}{2}^\infty \int_0^\frac{t-x}{2}(q(\alpha-\beta)-q_0(\alpha+\beta))L(\alpha-\beta,\alpha+\beta)\dd \beta\dd \alpha.
\]
If $Q$ is integrable then \underline{$E_a$ is stable under ${\mathcal F}_Q$}.
\begin{rema}
Note that $K_Q$ is the unique fixed point of $\mathcal{F}_Q$ if and only if $K_Q$ is the unique fixed point of $\mathcal{F}_Q^{n}$ for some $n\in\N$. 
\end{rema}

\begin{lemm}\label{Lem:contraction}
 Let $a>0$ and and $Q$ integrable with support in $[0,a]$. Let $q:=q_0+Q$. Then  there exists $N>0$ such that for $n\in\N$ and $n>N$, $\mathcal{F}_Q^{n}$ is a contraction mapping on $E_a$.
\end{lemm}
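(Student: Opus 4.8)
The plan is to establish that $\mathcal{F}_Q$ is a bounded linear-affine map on $E_a$ whose iterated kernel contracts, by tracking how each application of $\mathcal{F}_Q$ shrinks the effective domain of integration. First I would observe that $\mathcal{F}_Q$ decomposes as $\mathcal{F}_Q(L) = \Phi + \mathcal{T}_Q(L)$, where $\Phi(x,t) = \tfrac12\int_{(t+x)/2}^\infty Q(s)\,\dd s$ is a fixed element of $E_a$ (supported in $\Omega_0$ because $Q$ has support in $[0,a]$) and $\mathcal{T}_Q$ is the \emph{linear} integral operator
\[
\mathcal{T}_Q(L)(x,t)=\int_{\frac{t+x}{2}}^\infty \int_0^{\frac{t-x}{2}}\bigl(q(\alpha-\beta)-q_0(\alpha+\beta)\bigr)L(\alpha-\beta,\alpha+\beta)\,\dd\beta\,\dd\alpha.
\]
Since $\mathcal{F}_Q^{n}(L_1)-\mathcal{F}_Q^{n}(L_2)=\mathcal{T}_Q^{n}(L_1-L_2)$, the affine part is irrelevant and it suffices to prove that $\|\mathcal{T}_Q^{n}\|_{\mathrm{op}}\to 0$ as $n\to\infty$ in the supremum norm; choosing $N$ so that $\|\mathcal{T}_Q^{N}\|_{\mathrm{op}}<1$ then gives the contraction for all $n>N$.

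The core estimate is a pointwise bound on $\mathcal{T}_Q^{n}(L)(x,t)$ in terms of $M_{q,q_0}$ and a combinatorial factor. Applying $\mathcal{T}_Q$ once and using $\|L\|_\infty$, the inner $\beta$-integral is controlled by the quantity $M_{q,q_0}$ defined in the text, so that after integrating in $\alpha$ over an interval whose length is at most $a-\tfrac{t+x}{2}$ one obtains a bound of the shape $|\mathcal{T}_Q(L)(x,t)|\le M_{q,q_0}\,\bigl(a-\tfrac{t+x}{2}\bigr)\|L\|_\infty$. The plan is to iterate this and prove by induction the refined bound
\[
\bigl|\mathcal{T}_Q^{n}(L)(x,t)\bigr|\le \frac{\bigl(M_{q,q_0}\bigr)^{n}}{n!}\Bigl(a-\tfrac{t+x}{2}\Bigr)^{n}\|L\|_\infty,
\]
where the crucial gain is the factor $1/n!$. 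This factorial decay comes from the fact that each successive application of $\mathcal{T}_Q$ integrates over a variable ranging in a shrinking interval of length at most $(a-\tfrac{t+x}{2})$, exactly as in the classical Volterra/Picard iteration argument underlying transformation-operator theory (cf.~\cite[Chapter 1]{Le2018}). Taking the supremum over $(x,t)\in\Omega_0$ and using $a-\tfrac{t+x}{2}\le a$ yields $\|\mathcal{T}_Q^{n}\|_{\mathrm{op}}\le (M_{q,q_0}\,a)^{n}/n!$, which tends to $0$ as $n\to\infty$; any $N$ for which this is $<1$ completes the argument.

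The main obstacle I anticipate is making the induction step fully rigorous while respecting the change of variables $(\alpha,\beta)\mapsto(\alpha-\beta,\alpha+\beta)$ and the precise geometry of the domains. One must verify that $\mathcal{T}_Q$ genuinely maps $E_a$ into itself, i.e. that $\mathcal{T}_Q(L)$ remains continuous on $\Omega$ and supported in $\Omega_0$: the support condition follows because the $\alpha$-integral is empty unless $\tfrac{t+x}{2}<a$ and $L$ is only evaluated at arguments $(\alpha-\beta,\alpha+\beta)$ with $(\alpha-\beta)+(\alpha+\beta)=2\alpha\le 2a$, while continuity follows from dominated convergence using integrability of $Q$. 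The subtle point in the inductive estimate is that the argument $(\alpha-\beta,\alpha+\beta)$ fed into $\mathcal{T}_Q^{n-1}(L)$ has the form required so that the induction hypothesis applies with $\tfrac{(\alpha-\beta)+(\alpha+\beta)}{2}=\alpha$ in place of $\tfrac{t+x}{2}$; feeding this back and performing the $\beta$-integral (bounded by $M_{q,q_0}$) and then the $\alpha$-integral of $(a-\alpha)^{n-1}/(n-1)!$ produces exactly the $(a-\tfrac{t+x}{2})^n/n!$ factor. Once this bookkeeping is carried out, the contraction and hence the existence and uniqueness of the fixed point $K_Q$ follow from the Banach fixed-point theorem applied to $\mathcal{F}_Q^{N}$, together with the preceding remark that a fixed point of $\mathcal{F}_Q^{N}$ is a fixed point of $\mathcal{F}_Q$.
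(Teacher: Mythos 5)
Your proposal is correct and follows essentially the same route as the paper: it isolates the affine part, reduces the contraction property to the linear iterated operator (the paper's $\mathcal{G}_{Q,n}$, your $\mathcal{T}_Q^{n}$), and establishes the same key inductive bound $\frac{M_{q,q_0}^{n}}{n!}\bigl(a-\tfrac{t+x}{2}\bigr)_+^{n}\|L\|_\infty$, which is exactly the paper's estimate~\eqref{Eq:contraction}. The only difference is that you spell out the Volterra-type induction that the paper delegates to a citation.
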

\begin{proof}
One can write:
\begin{align*}
  \mathcal{F}_Q^n(L)&=\sum_{k=1}^n K_{Q,n}
  +\mathcal{G}_{Q,n}(L),
\end{align*}
with:
\begin{align*}
\mathcal{G}_{Q,1}(L)(x,t)&:=\int_\frac{t+x}{2}^\infty \int_0^\frac{t-x}{2}(q(\alpha-\beta)-q_0(\alpha+\beta))L(\alpha-\beta,\alpha+\beta)\dd \beta\dd \alpha,\\
\mathcal{G}_{Q,n+1}(L)&:=
\mathcal{G}_{Q,1}(\mathcal{G}_{Q,n}(L))
=\mathcal{G}_{Q,1}^{n+1}(L),\\
K_{Q,1}(x,t)&:=\frac{1}{2}\int_\frac{t+x}{2}^\infty Q(s)\dd s,\\ 
K_{Q,n+1}&:=\mathcal{G}_{Q,1}(K_{Q,n})=\mathcal{G}_{Q,n}(K_{Q,1}),
\end{align*}
with the convention that $\mathcal{G}_{Q,0}$ is the identity.
Moreover, for $L\in E_a$, we have:
\begin{equation}\label{Eq:contraction}
 |\mathcal{G}_{Q,n}(L)(t,x)|\leq \frac{M_{q,q_0}^n}{n!}\left(a-\frac{t+x}{2}\right)_+^n\|L\|_\infty.
\end{equation}
This is proved by induction on $n$ following for instance \cite[Lemma 1]{BW2004}.
Hence, for $n$ sufficiently large, $\mathcal{G}_{Q,n}$ is a contraction on $E_a$ endowed with supremum norm $L^\infty$ and so is $\mathcal{F}_Q^n$ 
\end{proof}

From Banach's fixed point theorem, $\mathcal{F}_Q^n$ and $\mathcal{F}_Q$ have a unique fixed point on $E_a$ which is given by the convergent series in $E_a$: 
$$K_Q=\sum_{n\in\N} K_{Q,n},$$
defined in the proof of Lemma~\ref{Lem:contraction}.
Notice that the smaller $Q$ is, the smaller $K_Q$ is, and this ensures the invertibility of $X$.

Below for $O$ an open subset of $\R^d$ with closure $S$, $d\in \N$, $C^k(S)$ is the set of functions $F$ with are $C^k(O_F)$ for some open set of $\R^d$ with $S\subset O_F$. This is equivalent to the fact that $F$ belongs to $C^k(O)$ and its derivatives of order $k$ have a limit at any point of $\partial O$.
\begin{lemm} 
Let 
$Q\in L_a^1$.
Let $k\in \N$. If  $Q$ is in $C^k(\R^{+})$  
 then $K_Q$ is in $C^{k+1}(\Omega)$ and (even if $k=0$) the derivatives
 $$\partial_{\frac{t+x}{2}}\partial_{\frac{t-x}{2}}K_Q,\quad \partial_{\frac{t-x}{2}}\partial_{\frac{t+x}{2}}K_Q,$$
 exist and are both equal to $(q(x)-q_0(t))K_Q(x,t)$ and 
 \begin{equation}\label{Eq:Wave}
(\partial_t^2 K_Q-\partial_x^2K_Q)(x,t)=(q(x)-q_0(t))K_Q(x,t),
\end{equation}
in a distributional sense on the interior of $\Omega$ if $k=0$, or, in the strong sense for any $(x,t)\in \Omega$ if $k\in\N$.

If, moreover, $\partial_x^k Q$ is continuously differentiable except at $a$ where it has distinct right and left derivatives then: 
\[
(x,t)\mapsto H_Q(x-t):=K_Q(x,t)-\frac{1}{2}\int_\frac{t+x}{2}^\infty Q(s)\dd s,
\]
is $C^{k+2}(\Omega)$ and for any $x\in [0,2a]:$
$$\partial_x^{k+2}K_Q(x^-,(2a-x)^-)=-\frac14\partial_x^{k+1} Q(a^-).$$
\end{lemm}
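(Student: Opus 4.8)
The plan is to pass to the null coordinates $\xi=\frac{t+x}{2}$, $\eta=\frac{t-x}{2}$, in which $\Omega$ becomes $\{\xi\geq\eta\geq 0\}$, the support constraint reads $\xi\leq a$, and the wave operator factorises as $\partial_t^2-\partial_x^2=\partial_{\frac{t+x}{2}}\partial_{\frac{t-x}{2}}$. Writing $\tilde{K}(\xi,\eta)=K_Q(\xi-\eta,\xi+\eta)$ and $g(\alpha,\beta)=q(\alpha-\beta)-q_0(\alpha+\beta)$, the defining equation of $K_Q$ becomes
\[
 \tilde{K}(\xi,\eta)=\tfrac12\int_\xi^\infty Q(s)\,\dd s+\int_\xi^\infty\!\!\int_0^\eta g(\alpha,\beta)\,\tilde{K}(\alpha,\beta)\,\dd\beta\,\dd\alpha .
\]
First I would differentiate this once in each characteristic variable: in $\eta$ it gives $\partial_\eta\tilde{K}=\int_\xi^\infty g(\alpha,\eta)\tilde{K}(\alpha,\eta)\,\dd\alpha$ and in $\xi$ it gives $\partial_\xi\tilde{K}=-\tfrac12 Q(\xi)-\int_0^\eta g(\xi,\beta)\tilde{K}(\xi,\beta)\,\dd\beta$. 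Differentiating either expression once more in the remaining variable produces the same mixed derivative, which is the pointwise identity for $\partial_{\frac{t+x}{2}}\partial_{\frac{t-x}{2}}K_Q$ asserted in the lemma and, rewritten through $\partial_\xi\partial_\eta=\partial_t^2-\partial_x^2$, is equation~\eqref{Eq:Wave}. For $k\geq 1$ the integrand is $C^1$, differentiation under the integral is legitimate and the identity holds strongly; for $k=0$ the integrand is merely continuous, so the mixed derivative still exists classically while $\partial_t^2$ and $\partial_x^2$ separately only make sense as distributions, giving the distributional reading of~\eqref{Eq:Wave}.

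For the regularity statement $K_Q\in C^{k+1}(\Omega)$ I would bootstrap on the fixed-point equation $K_Q=K_{Q,1}+\mathcal{G}_{Q,1}(K_Q)$. The leading term $K_{Q,1}=\tfrac12\int_\xi^\infty Q$ is $C^{k+1}$ because a single integration raises the regularity of $Q\in C^k$ by one. For the remainder, $\mathcal{G}_{Q,1}$ gains at least one order, each characteristic integration in its defining double integral against the continuous coefficient $g$ raising regularity by one. Starting from the continuity of $K_Q$ established in Lemma~\ref{Lem:contraction} and feeding it back into the equation, an induction then raises the regularity of $K_Q$ one step at a time until it is capped at $C^{k+1}$ by the leading term; the factorial bound~\eqref{Eq:contraction} guarantees that the differentiated series converges uniformly on compact subsets of $\Omega$.

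The second assertion is where the real work lies. Set $H_Q:=K_Q-K_{Q,1}=\mathcal{G}_{Q,1}(K_Q)$; since $K_{Q,1}$ depends on $(x,t)$ only through $\xi$, the first part gives $\partial_\xi\partial_\eta H_Q=\partial_\xi\partial_\eta K_Q=(q(x)-q_0(t))K_Q$, and the aim is to show that here $\mathcal{G}_{Q,1}$ gains \emph{two} orders, i.e.\ $H_Q\in C^{k+2}(\Omega)$. Two facts drive this. First, Assumption~\ref{cf3} makes $Q$, hence $g$, piecewise $C^{k+1}$ with its only singularity on the line $x=a$, that is $\xi-\eta=a$; this meets the evaluation line $\xi=a$ only at the corner $\eta=0$, so near the rest of $\{\xi=a\}$ the coefficient is $C^{k+1}$ and $g\tilde{K}$ is piecewise $C^{k+1}$. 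Second, as $Q$ and its derivatives up to order $k$ vanish at $a$ (continuity together with $\textrm{supp}\,Q\subset[0,a]$), the function $\tilde{K}$ and its $\xi$-derivatives up to order $k+1$ vanish on $\{\xi=a\}$, whence by Leibniz so does $\partial_\xi^{k+1}(g\tilde{K})$. Differentiating the representation $H_Q(\xi,\eta)=\int_\xi^\infty\!\int_0^\eta g(\alpha,\beta)\tilde{K}(\alpha,\beta)\,\dd\beta\,\dd\alpha$ then shows that all derivatives of $H_Q$ up to order $k+2$ extend continuously across $\{\xi=a\}$ and vanish there, that is $H_Q\in C^{k+2}(\Omega)$.

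It remains to read off the boundary value. Because every derivative of $H_Q$ up to order $k+2$ vanishes on $\{\xi=a\}$, the quantity $\partial_x^{k+2}K_Q(x^-,(2a-x)^-)$ reduces to $\partial_x^{k+2}K_{Q,1}$ evaluated there; and since $K_{Q,1}=\tfrac12\int_\xi^\infty Q$ depends on $(x,t)$ only through $\xi$, each $\partial_x$ contributes the chain-rule factor $\partial_x\xi=\tfrac12$ together with one derivative of $Q$, so $\partial_x^{k+2}K_{Q,1}$ is an explicit multiple of $Q^{(k+1)}$. Evaluated at $\xi=a$ from inside the support this gives the stated boundary value, proportional to $\partial_x^{k+1}Q(a^-)$, the jump being concentrated there because $Q^{(k+1)}$ jumps at $a$ while $Q,\dots,Q^{(k)}$ vanish. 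I expect the genuine obstacle to be the regularity bookkeeping of the preceding paragraph: proving that $\mathcal{G}_{Q,1}$ gains two full derivatives \emph{uniformly up to the boundary} of $\Omega$, and that the coefficient jump along $x=a$ does not spoil the $C^{k+2}$ regularity of $H_Q$ on the very line $\{\xi=a\}$ where the boundary value is taken. Once this propagation-of-singularities picture is controlled through the Goursat representation and the induction on $k$, the concluding computation is routine.
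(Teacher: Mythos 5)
Your overall architecture coincides with the paper's: pass to the characteristic variables $\xi=\frac{t+x}{2}$, $\eta=\frac{t-x}{2}$, differentiate the Goursat integral equation once in each variable to obtain the single-integral formulas for $\partial_\xi K_Q$ and $\partial_\eta K_Q$ and the pointwise mixed derivative $(q(x)-q_0(t))K_Q$ (the paper's \eqref{Eq:derivative1}--\eqref{Eq:derivative3}), bootstrap on the fixed-point equation to reach $C^{k+1}$, and finally isolate $K_{Q,1}=\frac{1}{2}\int_{\frac{t+x}{2}}^{\infty}Q$ so that the boundary value $\partial_x^{k+2}K_Q(x^-,(2a-x)^-)$ is read off from $K_{Q,1}$ alone. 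The first part of your argument is sound and is essentially the paper's proof.

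The gap is in your justification that $H_Q=\mathcal{G}_{Q,1}(K_Q)$ is $C^{k+2}(\Omega)$. What you actually argue is that the derivatives of $H_Q$ up to order $k+2$ extend continuously across the line $\{\xi=a\}$ and vanish there. Even granting this, it only controls $H_Q$ at that line; it says nothing about the existence and joint continuity of the order-$(k+2)$ derivatives on the interior $\{\xi<a\}$ of $\Omega_0$, which is where the extra order of differentiability has to be produced. The mechanism that does produce it --- and which you correctly flag as ``the genuine obstacle'' but then defer rather than supply --- is that each of the two nested characteristic integrations gains one order: $\partial_\xi\mathcal{G}_{Q,1}(L)$ and $\partial_\eta\mathcal{G}_{Q,1}(L)$ are single integrals of $g\cdot L$ along a characteristic, the piecewise-$C^{k+1}$ hypothesis on $Q$ keeps these single integrals $C^{k+1}$ in the respective variable (the singular line $\{\alpha-\beta=a\}$ of $g$ meets the effective integration region, where $L$ is supported, only at the corner $(a,0)$), and the mixed derivative $(q(x)-q_0(t))L$ is $C^{k}$; combining these yields $C^{\min(\ell+1,k+2)}$, hence $C^{k+2}$ for $L=K_Q\in C^{k+1}$. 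Your vanishing-on-$\{\xi=a\}$ observation is correct, but it is doing the job of the \emph{last} step (it is equivalent to the paper's remark that $K_Q$ and $K_Q-K_{Q,1}$ are supported in $\Omega_0$, which is what reduces the boundary value to $\partial_x^{k+2}K_{Q,1}$), not of the regularity step. A minor misattribution: the vanishing of $\tilde K$ and its derivatives on $\{\xi=a\}$ follows from $\mathrm{supp}\,\tilde K\subset\{\xi\le a\}$ together with $\tilde K\in C^{k+1}$, not from the vanishing of the derivatives of $Q$ at $a$.
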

\begin{proof}
Let us first consider $Q\in L_a^1$.
For $L\in E_a$, due to continuity of translations in $L^1$, the quantity:
\[
 \int_0^\frac{t-x}{2}\left(q\left(\frac{t+x}{2}-\beta\right)-q_0\left(\frac{t+x}{2}+\beta\right)\right)L\left(\frac{t+x}{2}-\beta,\frac{t+x}{2}+\beta\right)\dd \beta,
\]
is continuous in $\beta$ and absolutely continuous in $x+t$ whilst:
\[
 \int_\frac{t+x}{2}^\infty \left(q\left(\alpha-\frac{t-x}{2}\right)-q_0\left(\alpha+\frac{t-x}{2}\right)\right)L\left(\alpha-\frac{t-x}{2},\alpha+\frac{t-x}{2}\right)\dd \alpha,
\]
is continuous in $\alpha$ and absolutely continuous in $t-x$.

From the fundamental theorem of calculus, we deduce that
\begin{equation}\label{Eq:derivative1}
\partial_{\frac{t+x}{2}} \mathcal{G}_{Q,1}(L)(x,t)= -\int_0^\frac{t-x}{2}(q(\frac{t+x}{2}-\beta)-q_0(\frac{t+x}{2}+\beta))L(\frac{t+x}{2}-\beta,\frac{t+x}{2}+\beta)\dd \beta
\end{equation}
and, using first Fubini's theorem, we deduce similarly:
\begin{equation}\label{Eq:derivative2}
\partial_{\frac{t-x}{2}}\mathcal{G}_{Q,1}(L)(x,t)=
\int_\frac{t+x}{2}^\infty (q(\alpha-\frac{t-x}{2})-q_0(\alpha+\frac{t-x}{2}))L(\alpha-\frac{t-x}{2},\alpha+\frac{t-x}{2})\dd \alpha.
\end{equation}
In a distributional sense on the interior $\Omega$ we obtain:
\begin{equation}\label{Eq:derivative3}
\partial_{\frac{t-x}{2}}\partial_{\frac{t+x}{2}}\mathcal{G}_{Q,1}(L)(x,t)=
\partial_{\frac{t+x}{2}}\partial_{\frac{t-x}{2}}\mathcal{G}_{Q,1}(L)(x,t)=
(q(x)-q_0(t))L(x,t).
\end{equation}
It follows then that $K_Q$ satisfies the wave equation:
\[
(\partial_t^2 K_Q-\partial_x^2K_Q)(x,t)=(q(x)-q_0(t))K_Q(x,t), 
\]
again, in distributional sense on the interior of $\Omega$.

Recall $K_Q$ is continuous on $\Omega$ and $q_0$ is smooth on $\R$. Thus, if $Q$ is continuous on $\R^+$ ($q$ is too), from the fundamental theorem of calculus $K_Q$ satisfies \eqref{Eq:Wave}.


Assume moreover that $Q$ is in $C^k(\R^{+})$ and let $\ell \in \N$.
From \eqref{Eq:derivative1}, \eqref{Eq:derivative2} and~\eqref{Eq:derivative3}, we infer that ${\mathcal G}_Q^n$ maps $C^\ell(\Omega)$ to $C^{\min(\ell+n,k+1)}(\Omega)$.
It follows then that $K_{Q,n}$ is in $C^{k+1}(\Omega)$ for any $n\in\N$ 
and $K_Q$ is in $C^{k+1}(\Omega)$.

To conclude the proof, let us remark that if $Q$ is $C_{pw}^{k+1}(\R^+)$ (piecewise $C^{k+1}$) then \eqref{Eq:derivative1} and \eqref{Eq:derivative2} provide that if $L\in C^{\ell}(\Omega)$ then $\partial_{\frac{t+x}{2}} \mathcal{G}_{Q,1}(L)$ and $\partial_{\frac{t-x}{2}} \mathcal{G}_{Q,1}(L)$ are $C^{\min(\ell,k+1)}$ in $\frac{t+x}{2}$ and $\frac{t-x}{2}$ respectively. If (moreover) $Q$ is $C^k(\R^+)$
then the right hand side of \eqref{Eq:derivative3} 
is actually $C^{\min(\ell,k)}(\Omega)$ so that $\mathcal{G}_{Q,1}(L)$ is $C^{\min(\ell+1,k+2)}(\Omega)$. 
For $L=K_Q$ this provides $K_Q-K_{Q,1}$ is $C^{k+2}(\Omega)$
as announced.
Since the supports of $K_Q$ and $K_Q-K_{Q,1}$ are contained in $\Omega_0$, we deduce that for $x\in [0,2a]$, 
$$\partial_x^{k+2}K_Q(x^-,(2a-x)^-)=-\frac14\partial_x^{k+1} Q(a^-)$$
which concludes the proof.
\end{proof}



We conclude this paragraph with a continuity result.
\begin{lemm}
Let $a>0$ and $L_a^1$ be the space of integrable functions with support on $[0,a]$ endowed with the $L^1$-norm. The map 
\[
 Q \in L_a^1 \mapsto K_Q \in E_a,
\]
is continuous.
\end{lemm}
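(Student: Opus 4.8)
The plan is to exploit the \emph{affine} structure of the fixed-point equation $K_Q=\mathcal{F}_Q(K_Q)$ and to prove the stronger statement that $Q\mapsto K_Q$ is locally Lipschitz. Recall that $\mathcal{F}_Q(L)=K_{Q,1}+\mathcal{G}_{Q,1}(L)$, where $K_{Q,1}(x,t)=\tfrac12\int_{(t+x)/2}^\infty Q(s)\dd s$ and $\mathcal{G}_{Q,1}$ is the bounded linear operator on $E_a$ from the proof of Lemma~\ref{Lem:contraction}. Specialising estimate~\eqref{Eq:contraction} with $\left(a-\tfrac{t+x}{2}\right)_+\le a$ gives $\|\mathcal{G}_{Q,1}^{\,n}\|\le (M_{q,q_0}\,a)^n/n!$, so the Neumann series $R_Q:=(I-\mathcal{G}_{Q,1})^{-1}=\sum_{n\ge 0}\mathcal{G}_{Q,1}^{\,n}$ converges in operator norm on $E_a$, with $\|R_Q\|\le e^{M_{q,q_0}a}$, and the fixed point is $K_Q=R_Q K_{Q,1}$.

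First I would record the uniform local bounds. Since $q_0=-\lambda e^{-2x}$ is fixed and bounded on $\R^+$, its contribution to $M_{q,q_0}$ is controlled by a constant $C(a,\lambda)$, while the $Q$-contribution is at most $\|Q\|_{L^1}$; hence $M_{q,q_0}\le C(a,\lambda)+\|Q\|_{L^1}$, so both $\|R_Q\|$ and $\|K_{Q,1}\|_\infty\le\tfrac12\|Q\|_{L^1}$ stay bounded when $Q$ ranges over any ball of $L^1_a$. Next I would establish the Lipschitz dependence of the affine data on $Q$. The crucial observation is that $q-\tilde q=Q-\tilde Q$ and the terms $q_0(\alpha+\beta)$ cancel, so the difference $\mathcal{G}_{Q,1}-\mathcal{G}_{\tilde Q,1}$ has integral kernel $(Q-\tilde Q)(\alpha-\beta)$. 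Bounding $|L|\le\|L\|_\infty$ and applying Tonelli's theorem (integrate first in $\alpha$ via $u=\alpha-\beta$, then in $\beta$ over an interval of length $(t-x)/2\le a$) yields
\[
\|\mathcal{G}_{Q,1}-\mathcal{G}_{\tilde Q,1}\|\le a\,\|Q-\tilde Q\|_{L^1},
\]
while directly $\|K_{Q,1}-K_{\tilde Q,1}\|_\infty\le\tfrac12\|Q-\tilde Q\|_{L^1}$.

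I would then conclude through the resolvent identity $R_Q-R_{\tilde Q}=R_Q(\mathcal{G}_{Q,1}-\mathcal{G}_{\tilde Q,1})R_{\tilde Q}$ together with the splitting
\[
K_Q-K_{\tilde Q}=R_Q\bigl(K_{Q,1}-K_{\tilde Q,1}\bigr)+\bigl(R_Q-R_{\tilde Q}\bigr)K_{\tilde Q,1}.
\]
Inserting the bounds above gives $\|K_Q-K_{\tilde Q}\|_\infty\le C\,\|Q-\tilde Q\|_{L^1}$ with $C$ depending only on $a$, $\lambda$ and the size of the ball containing $Q,\tilde Q$, which proves local Lipschitz continuity and hence continuity. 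Throughout I would only need to check that every operator in sight preserves $E_a$ (support in $\Omega_0$ and $x\le t$), which is exactly the content of Lemma~\ref{Lem:contraction} and its proof, so no new work is required there.

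The main obstacle is the second step: recognising that the $q_0$ contributions cancel in $\mathcal{G}_{Q,1}-\mathcal{G}_{\tilde Q,1}$, so that the perturbation of the \emph{linear} part is governed solely by $Q-\tilde Q$, and then carrying out the Fubini bookkeeping that converts the resulting double integral of $|(Q-\tilde Q)(\alpha-\beta)|$ into the clean operator-norm estimate $a\,\|Q-\tilde Q\|_{L^1}$. Once this is in place, the factorial decay $\|\mathcal{G}_{Q,1}^{\,n}\|\le (M_{q,q_0}a)^n/n!$ from~\eqref{Eq:contraction} makes the resolvent bounds automatic and the remainder of the argument is routine.
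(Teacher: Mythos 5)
Your proof is correct and is essentially the paper's own argument repackaged in operator language: the paper writes $K_{\tilde Q}-K_Q=J_{\tilde Q-Q,1}+\mathcal{G}_{\tilde Q,1}(K_{\tilde Q}-K_Q)$ with $J_{\tilde Q-Q,1}=K_{\tilde Q-Q,1}+\mathcal{H}_{\tilde Q-Q}(K_Q)$ and iterates, using the same factorial bound~\eqref{Eq:contraction} and the same two Lipschitz estimates on the data (namely $\tfrac12\lVert Q-\tilde Q\rVert_{1}$ for the inhomogeneous term and $a\,\lVert Q-\tilde Q\rVert_{1}$ for the kernel difference, the paper's $m_{\tilde Q-Q}$), which is precisely your Neumann-series and resolvent-identity computation. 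Both routes give the same locally Lipschitz conclusion, so there is no substantive difference.
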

\begin{proof}
We consider another potential $\tilde{Q}$ integrable with support in $[0,a]$ and $\tilde{q}=q_0+\tilde{Q}$ such that
\[
M_{\tilde{q},q_0}:=\sup\left\{ \int_0^\frac{t-x}{2}|\tilde{q}(\alpha-\beta)-q_0(\alpha+\beta)|\dd \beta,\; \frac{t-x}{2}\leq \alpha\leq a,\, 0\leq x\leq t\right\}<\infty.
\]
Let $K_{\tilde{Q}}\in E_a$ be the fixed point of ${\mathcal F}_{\tilde Q}$. Then
\begin{align*}
\left(K_{\tilde{Q}}-K_Q\right)(x,t)&=
 \frac{1}{2}\int_\frac{t+x}{2}^\infty \left(\tilde{Q}-Q\right)(s)\dd s\\
& +\int_\frac{t+x}{2}^\infty \int_0^\frac{t-x}{2}(\tilde{q}(\alpha-\beta)-q_0(\alpha+\beta))\left(K_{\tilde{Q}}-K_Q\right)(\alpha-\beta,\alpha+\beta)\dd \beta\dd \alpha\\
&+\int_\frac{t+x}{2}^\infty \int_0^\frac{t-x}{2}(\tilde{Q}(\alpha-\beta)-Q(\alpha-\beta))K_Q(\alpha-\beta,\alpha+\beta)\dd \beta\dd \alpha\\
&=J_{\tilde{Q}-Q,1}(x,t)+{\mathcal G}_{\tilde{Q},1}(K_{\tilde{Q}}-K_Q)(x,t).
\end{align*}
In the above:
$$J_{\tilde{Q}-Q,1}:=K_{\tilde{Q}-Q,1}+\mathcal{H}_{\tilde{Q}-Q}(K_Q),$$
with:
$$\mathcal{H}_{\tilde{Q}-Q}(L)(x,t):=\int_\frac{t+x}{2}^\infty \int_0^\frac{t-x}{2}(\tilde{Q}(\alpha-\beta)-Q(\alpha-\beta))L(\alpha-\beta,\alpha+\beta)\dd \beta\dd \alpha.$$
Let, for $n\in\N$,  
\[
 J_{\tilde{Q}-Q,n+1}={\mathcal G}_{\tilde{Q},1}(J_{\tilde{Q}-Q,n}),
\]
then:
\[
 K_{\tilde{Q}}-K_Q=\sum_{k=1}^n J_{\tilde{Q}-Q,k}+{\mathcal G}_{\tilde{Q},n}(K_{\tilde{Q}}-K_Q).
\]
Recall now~\eqref{Eq:contraction} to deduce that for $L\in E_a$:
\[
 \|\mathcal{G}_{Q,n}(L)\|_\infty\leq  \frac{M_{q,q_0}^na^n}{n!}\|L\|_\infty.
\]
Similarly, for $L\in E_a$, we have:
\[
 \|\mathcal{H}_{\tilde{Q}-Q}(L)\|_\infty\leq m_{\tilde{Q}-Q}a\|L\|_\infty,
\]
where: 
\[
m_{\tilde{Q}-Q}:=\sup\left\{ \int_0^\frac{t-x}{2}|\tilde{Q}(\alpha-\beta)-Q(\alpha-\beta)|\dd \beta,\; \frac{t-x}{2}\leq \alpha\leq a,\, 0\leq x\leq t\right\}<\infty.
\]
Therefore, we obtain:
$$\|J_{\tilde{Q}-Q,1}\|_\infty\leq \frac{1}{2}\|\tilde{Q}-Q\|_1
+
m_{\tilde{Q}-Q} a \|K_Q\|_\infty,$$
and:
$$\|J_{\tilde{Q}-Q,n}\|_\infty\leq \left(\frac{1}{2}\|\tilde{Q}-Q\|_1
+
m_{\tilde{Q}-Q} a \|K_Q\|_\infty\right)  \frac{M_{\tilde{q},q_0}^{n-1}a^{n-1}}{(n-1)!}.$$
Overall we obtain:
\begin{align*}
\|K_{\tilde{Q}}-K_Q\|_\infty
&\leq \sum_{k=1}^n\|J_{\tilde{Q}-Q,k}\|_\infty+\|{\mathcal G}_{\tilde{Q},n}(K_{\tilde{Q}}-K_Q)\|_\infty\\
&\leq \left(\sum_{k=1}^n  \frac{M_{q,q_0}^ka^k}{k!}\right)\left( \frac{1}{2}\|\tilde{Q}-Q\|_1
+
m_{\tilde{Q}-Q} a \|K_Q\|_\infty\right)\\
& +
\frac{(M_{q,q_0}+m_{\tilde{Q}-Q})^na^n}{n!}\|K_{\tilde{Q}}-K_Q\|_\infty
\end{align*}
and hence, for $m_{\tilde{Q}-Q}$ bounded and $n$ sufficiently large, there exists $C>0$ such that:
\begin{align*}
\|K_{\tilde{Q}}-K_Q\|_\infty
&\leq C\left( \frac{1}{2}\|\tilde{Q}-Q\|_1
+
m_{\tilde{Q}-Q} a \|K_Q\|_\infty\right),\\
&\leq C\left( \frac{1}{2}
+
a \|K_Q\|_\infty\right)\|\tilde{Q}-Q\|_1,
\end{align*}
and so $K_Q$ is continuous in $L^\infty$-norm with respect to $Q$ in $L^1$-norm.
\end{proof}

\subsection{Existence and uniqueness} \label{sec:exist_unique}

In this section, we prove that there exists a unique solution $u$ of the Dirichlet problem \eqref{eq:dirichlet_problem} and show that the Dirichlet-to-Neumann operator is well-defined. Due to the possibility of a conical singularity at $r=0$, we cannot appeal to the well-known existence and uniqueness theorems for regular Riemannian manifolds, however the basic results we require follow through without essential modification. 

For clarity, we define $H^{\frac{1}{2}}(\partial M)$ as the completion of $C^{\infty}(\partial M)$ for the following norm based on the spectral data of the Laplacian on $K$ by: 
\[ \lVert \phi \rVert_{H^{\frac{1}{2}}(\partial M)}^2= \sum_{k} (1+\mu_k^2)^\frac{1}{2}|\phi_k|^2. \]
We mean by $H^1(M)$, the closure of $ \mathcal{D}= \{ u\in C^\infty(\overline{M}), ||u||_{H^1(M)} < +\infty\}$\footnote{It should be understood here that these are smooth functions such that all successive derivatives extend continuously to the boundary.} for the natural $H^1(M)$ norm, which we recall is:
\[\begin{aligned} ||u||^2_{H^1(M)}&= \int_M \left(|u|^2 + g(\nabla u, \overline{\nabla u})\right) \dd \textrm{Vol}_g, \\&=\int_{\mathbb{R}_+}\int_K (|u|^2+\frac{1}{f^2}\left(|\partial_x u|^2 + |\nabla_K u|^2 \right) f^n\dd x \dd K\end{aligned} \]

We begin by showing the following lemma:
\begin{lemm}
The trace operator $\gamma : H^1(M) \rightarrow H^\frac{1}{2}(\partial M)$ is a well-defined bounded operator.
\end{lemm}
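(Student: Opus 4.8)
The plan is to establish the trace estimate by exploiting the warped-product structure, which reduces everything to a one-dimensional estimate on the half-line in the $x$-variable. Concretely, I would first fix the Fourier decomposition $u = \sum_k u_k(x) Y_k$ on the Hilbert basis $(Y_k)$ of $L^2(K)$, so that the trace $\gamma u = u(0,\cdot)$ corresponds to the sequence $(u_k(0))_k$, and $\lVert \gamma u \rVert_{H^{1/2}(\partial M)}^2 = \sum_k (1+\mu_k^2)^{1/2} |u_k(0)|^2$. The goal is then to bound this quantity by $C \lVert u \rVert_{H^1(M)}^2$. Writing out the $H^1$-norm in the warped coordinates, one has, orthogonality on $K$ giving a clean sum,
\[
\lVert u \rVert_{H^1(M)}^2 = \sum_k \int_{\R_+} \left( |u_k|^2 f^n + \left(|u_k'|^2 + \mu_k^2 |u_k|^2\right) f^{n-2} \right) \dd x,
\]
so it suffices to control $(1+\mu_k^2)^{1/2} |u_k(0)|^2$ by the $k$-th summand, uniformly in $k$.

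Next I would carry out the one-dimensional trace estimate. For a smooth function $w$ on $\R_+$ and any weight, the fundamental theorem of calculus gives $|w(0)|^2 = -\int_0^\infty \partial_x\!\left(\chi(x)|w(x)|^2\right)\dd x$ for a suitable cutoff $\chi$ with $\chi(0)=1$; expanding the derivative and applying Cauchy--Schwarz yields $|w(0)|^2 \leq \varepsilon \int |w'|^2 \,\omega + C_\varepsilon \int |w|^2 \,\omega'$ for appropriate weights. The point is to choose the cutoff and absorb the weight $f$ so that the resulting bound reads
\[
(1+\mu_k^2)^{1/2} |u_k(0)|^2 \leq C \int_{\R_+} \left( |u_k'|^2 + (1+\mu_k^2) |u_k|^2 \right) f^{n-2}\, \dd x,
\]
with $C$ independent of $k$. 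Since near $x=0$ the factor $f$ is bounded above and below (as $f(0) = c(1) > 0$ is finite and positive), the weights $f^n$ and $f^{n-2}$ are comparable to constants on any fixed neighbourhood of the boundary, so the weighting causes no trouble there; the optimisation over $\varepsilon$ is what produces the factor $(1+\mu_k^2)^{1/2}$ rather than $(1+\mu_k^2)$, balancing the $|u_k'|^2$ and $\mu_k^2|u_k|^2$ contributions. Summing over $k$ then gives $\lVert \gamma u \rVert_{H^{1/2}(\partial M)}^2 \leq C \lVert u \rVert_{H^1(M)}^2$, and boundedness follows by density of $\mathcal{D}$.

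The main obstacle I anticipate is the behaviour of the weight $f$ as $x \to +\infty$ (i.e. near the conical tip $r = 0$), where $f(x) = c(e^{-x})e^{-x}$ decays exponentially and the metric may be singular. The trace map only involves the boundary $x=0$, so I would handle this by localising with $\chi$ supported in a neighbourhood of $x=0$, which entirely sidesteps the tip and confines the analysis to a region where $f$ is comparable to a constant. The secondary subtlety is ensuring the constant $C$ in the one-dimensional estimate is genuinely uniform in $k$ (equivalently in $\mu_k$); this is exactly where the $\varepsilon$-optimisation giving the correct power $(1+\mu_k^2)^{1/2}$ must be done carefully, as a naive estimate would only yield a bound into $H^1(\partial M)$ rather than the sharp $H^{1/2}(\partial M)$. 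Once this uniformity is secured, the summation and density argument are routine.
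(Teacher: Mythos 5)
Your proposal is correct and follows essentially the same route as the paper: decompose onto the eigenbasis $(Y_k)$, write $|u_k(0)|^2$ via the fundamental theorem of calculus on the half-line, and balance the $|u_k'|^2$ and $\mu_k^2|u_k|^2$ contributions to produce the weight $(1+\mu_k^2)^{1/2}$, then conclude by density of $\mathcal{D}$. The only (harmless) variations are that you obtain the half-power by a termwise Young/$\varepsilon$-optimisation where the paper uses Cauchy--Schwarz across the sum over $k$, and that you localise with a cutoff near $x=0$ where the paper integrates the identity $|f^{\frac{n}{2}-1}(0)u_k(0)|^2=-\int_0^\infty \partial_x\bigl(f^{n-2}|u_k|^2\bigr)\,\dd x$ over all of $\R_+$ --- your cutoff is in fact the more robust choice given the decay of $f$ at infinity.
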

\begin{proof}
Using that: $|\nabla_K Y_k|^2=(-\Delta_K Y_k, Y_k)= \mu_k^2$, the norm in $H^1(M)$ can be expressed in terms of the decomposition $u=\sum_{k\in\mathbb{N}} u_kY_k$ as:
\begin{equation} \label{eq:norm_decomp} \begin{gathered} ||u||^2_{H^1(M)} = \sum_{k=0}^{+\infty} a_k(u),\\a_k(u)=||u_k||^2_{L^2(\R_+,f^n\dd x)} + \mu_k^2 ||f^{-1}u_k||^2_{L^2(\R_+,f^n\dd x)} + ||f^{-1}u'_k||^2_{L^2(\R_+,f^n\dd x)}.  \end{gathered} \end{equation}
Working on the dense subset $\mathcal{D}$ we have:
\[ |f^{\frac{n}{2}-1}(0)u_k(0)|^2 = -(n-2)\int_0^{+\infty} \frac{f'}{f} |f^{-1}u_k|^2f^{n}\dd t -2\Re \int_0^{+\infty}f^{-1}u'_k f^{-1}\overline{u}_k f^{n}\dd t.\]
Multiply by $(1+\mu_k^2)^{\frac{1}{2}}$ and sum over $k$, then estimate the first term by:
\[(n-2)\left\lVert\frac{f'}{f}\right\rVert_\infty \sum_{k\geq 0 } (1+\mu_k^2) \lVert f^{-1}u_k\rVert^2_{L^2(\mathbb{R}_+, f^{n}\dd x)}, \]
and the second term, using the Cauchy-Schwarz inequality, by:
\[2\left(\sum_{k\geq 0} \lVert f^{-1}u'_k \rVert^2_{L^2(\R_+,f^{n}\dd x)} \right)^{\frac{1}{2}}\left( \sum_{k\geq 0} (1+\mu_k^2)||f^{-1}u_k||^2_{L^2(\R_+, f^{n}\dd x)} \right)^{\frac{1}{2}}, \]
thus:
\[ \begin{aligned} \sum_{k\geq 0}(1+\mu_k^2)^{\frac{1}{2}}f^{n-2}(0) \left\lvert u_k(0)\right\rvert^2 \leq C ||u||^2_{H^1(M)}.  \end{aligned}\]
with, $C=\max((n-2)\left\lVert\frac{f'}{f}\right\rVert_\infty\lVert f^{-1}\rVert_\infty^2,(n-2)\left\lVert\frac{f'}{f}\right\rVert_\infty, ||f^{-1}||^2_\infty)$.
We conclude by a standard density argument.
\end{proof}

We shall now prove the existence and uniqueness of the unique $u\in H^1(M)$ solution of~\eqref{eq:dirichlet_problem}. In fact, we shall instead work with variable $v=f^{\frac{n}{2}-1}u$ and solve~\eqref{eq:dirichlet_problem2}. We begin with an important lemma:
\begin{lemm}
\label{lemm:hs}
Let $n\geq 3$, $\lambda\not\in\sigma_{pp}(-\Delta_g)\footnote{It should be understood here that $-\Delta_g$ is viewed as an operator on $L^2(M)$ with homogenous Dirichlet boundary condition.}$, $\phi \in H^{\frac{1}{2}}(M)$. If $v$ is a solution of~\eqref{eq:dirichlet_problem2} such that $u=f^{-\frac{n}{2}+1}v \in H^1(M)$ then the components $v_k$ are uniquely determined and satisfy $v_k,v_k' \in L^2(\R_+, \textrm{d}x)$. In particular, $v$ is unique.
\end{lemm}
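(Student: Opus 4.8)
The proof rests on projecting onto the angular modes and analysing each one-dimensional problem separately. Writing $v=\sum_k v_kY_k$ and $u=\sum_k u_kY_k$ with $u=f^{1-\frac n2}v$, equation~\eqref{eq:dirichlet_problem2} decouples into the family of ODEs~\eqref{DirichletProblem1D}, equivalently the Schrödinger form~\eqref{eq:schro_v} at spectral parameter $-z_k^2$, where $z_k^2=\mu_k^2+\frac{(n-2)^2}{4}$. The plan is, first, to turn the hypothesis $u\in H^1(M)$ into the decay statement $v_k,v_k'\in L^2(\R_+,\dd x)$, and second, to deduce uniqueness from the assumption $\lambda\notin\sigma_{pp}(-\Delta_g)$.

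For the first step I would start from the orthogonal decomposition~\eqref{eq:norm_decomp} of the $H^1$-norm, which shows that each $a_k(u)<\infty$. Expressing the gradient contribution in terms of $v_k=f^{\frac n2-1}u_k$ gives $f^{\frac n2-1}u_k'=v_k'-\bigl(\tfrac n2-1\bigr)\tfrac{f'}{f}v_k$, so that
\[\int_0^\infty\Bigl|v_k'-\Bigl(\tfrac n2-1\Bigr)\tfrac{f'}{f}v_k\Bigr|^2\dd x\le a_k(u)<\infty,\]
using that $f'/f$ is bounded under~\eqref{hyp:cf'}--\eqref{hyp:cf2}. On $[a,\infty)$ one has $f=e^{-x}$, hence $f'/f\equiv-1$, and the integrand becomes $|v_k'+z_0v_k|^2$ with $z_0=\tfrac{n-2}{2}>0$ (this is where $n\ge3$ is essential). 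Reading $v_k'+z_0v_k=g_k$ as a first-order linear ODE with $g_k\in L^2([a,\infty))$, the variation-of-constants formula exhibits every solution as the sum of $De^{-z_0x}$ and the convolution $g_k\ast(e^{-z_0\cdot}\mathds 1_{\R_+})$; both lie in $L^2$ by Young's inequality (again using $z_0>0$), so $v_k\in L^2([a,\infty))$. Since $v_k$ solves a regular ODE on the compact interval $[0,a]$ it is continuous there, whence $v_k\in L^2(\R_+)$, and then $v_k'=g_k-z_0v_k\in L^2(\R_+)$ as well.

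For the second step I would argue by contradiction with the spectral hypothesis. Let $v^{(1)},v^{(2)}$ be two solutions with $f^{1-\frac n2}v^{(i)}\in H^1(M)$ sharing the boundary datum $\phi$, and set $w=v^{(1)}_k-v^{(2)}_k$ for a fixed $k$. By the first step $w,w'\in L^2(\R_+)$, and $w(0)=0$ since both components carry the boundary value $f^{\frac n2-1}(0)\phi_k$. The function $w$ solves~\eqref{eq:schro_v} at $-z_k^2$, so $u_w:=f^{1-\frac n2}w\,Y_k$ satisfies $-\Delta_gu_w=\lambda u_w$ together with the homogeneous Dirichlet condition; moreover $\|u_w\|^2_{L^2(M)}=\int_0^\infty|w|^2f^2\dd x\le\|f\|_\infty^2\|w\|^2_{L^2}<\infty$, and at infinity the operator is in the limit-point case so $u_w$ lies in the Dirichlet domain. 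If $w\ne0$ this produces a nonzero Dirichlet eigenfunction of $-\Delta_g$ at $\lambda$, contradicting $\lambda\notin\sigma_{pp}(-\Delta_g)$. Hence $w=0$, each $v_k$ is uniquely determined, and summing the modes yields uniqueness of $v$.

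The main obstacle is the zero mode $k=0$. For $k\ge1$ the coercive term $\mu_k^2\|f^{-1}u_k\|^2=\mu_k^2\int|v_k|^2\dd x$ in~\eqref{eq:norm_decomp} already gives $v_k\in L^2(\R_+)$ at once, but for $k=0$ this term vanishes and finiteness of $a_0(u)$ only controls $v_0$ in the weighted space $L^2(f^2\dd x)$, which is strictly weaker than $L^2(\dd x)$ because $f^2=e^{-2x}$ decays. The exponential-decay mechanism of the first step — solving $v_0'+z_0v_0=g_0$ with $z_0>0$ — is exactly what upgrades this to genuine $L^2$ control, and it is available precisely because the conformal shift produces $z_0=\frac{n-2}{2}>0$ when $n\ge3$.
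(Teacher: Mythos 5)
Your proof is correct, but it reaches the crucial conclusion for the zero mode by a genuinely different mechanism than the paper. The paper treats $k=0$ by changing variables to Bessel's equation and writing down both linearly independent solutions explicitly ($J_{\pm z_0}(\sqrt{\lambda}e^{-x})$, or a second-kind Bessel function when $z_0\in\N$), then checking case by case --- according to whether $z_0$ is a half-integer $\geq\frac32$, equal to $\frac12$, a positive integer, or zero --- which solutions satisfy the weighted integrability conditions; this shows that the admissible space is spanned by $v_+=J_{z_0}(\sqrt{\lambda}e^{-x})$, which lies in $L^2(\R_+,\dd x)$ exactly when $n\geq 3$. You instead observe that the gradient term of $a_k(u)$ in~\eqref{eq:norm_decomp} is precisely $\lVert v_k'+z_0v_k\rVert^2_{L^2}$ on the region where $f=e^{-x}$, and integrate the resulting first-order ODE, with Young's inequality for the kernel $e^{-z_0\cdot}\mathds{1}_{\R_+}\in L^1$ supplying the $L^2$ bound --- this is where $n\geq3$, i.e.\ $z_0>0$, enters, and it is in effect a systematic version of the $v'+v\in L^2$ check the paper performs ad hoc in the $n=3$ case. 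Your route is shorter, uniform in $k$, and avoids the case split on the parity of $n$; what it does not deliver is the explicit identification of the admissible solution as a Bessel function, which the paper reuses immediately afterwards in the existence lemma and in defining the Jost solution, so the explicit computation is not wasted effort in context. Your uniqueness step --- the difference of two solutions would be a nonzero Dirichlet eigenfunction of $-\Delta_g$ at $\lambda$ --- has the merit of making explicit where the hypothesis $\lambda\notin\sigma_{pp}(-\Delta_g)$ is used, which the paper leaves implicit in this lemma; the one point you should spell out slightly more is that $u_w$ genuinely belongs to the domain of the self-adjoint Dirichlet realization, which follows from $u_w\in H^1(M)$ with vanishing trace together with $\Delta_g u_w=-\lambda u_w\in L^2(M)$, rather than from the limit-point property alone.
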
 
\begin{proof}
Let us first begin by expressing $||u||_{H^1(M)}$ in terms of $v$ and its components $v_k$. We have (see Equation~\eqref{eq:norm_decomp}):
\[\begin{aligned} a_k(u)&= ||v_k||^2_{L^2(\mathbb{R}_+,f^2\dd x)} + \mu_k^2 ||v_k||^2_{L^2(\mathbb{R}_+,\dd x)} + || (1-\frac{n}{2})\frac{f'}{f}v_k + v_k' ||^2_{L^2(\mathbb{R}_+,\dd x)},\\&\equiv ||v_k||_k. \end{aligned}\]
Decomposing onto harmonics, we see that $v_k$ must satisfy~\eqref{DirichletProblem1D} and lie in the Hilbert space $\mathcal{H}_k$ with norm $||\cdot||_k$ defined by the above expression. For $k\geq 1$, $\mu_k \neq 0$ so that $\mathcal{H}_k \hookrightarrow L^2(\mathbb{R}_+, \textrm{d}x)$, continuously. Standard theory of ODE's then imply that $v_k$ exists and is uniquely determined thus proving Lemma~\ref{lemm:hs} for these harmonics.  

The case $k=0$ requires further analysis of the behaviour of solutions for large values of $x$, existence and unicity being guaranteed by standard ODE results. Furthermore, according to our assumptions~(\ref{hyp:cf'},\ref{hyp:cf2}), $x\mapsto f(x)-e^{-x}$ is compactly supported so working away from its support we can assume: $f(x)=e^{-x}$. We are therefore interested in the asymptotic behaviour of solutions of the equation:
\[-v_0'' -\lambda e^{-2x}v_0=-z_0^2v_0,\]
where $z_0^2=\frac{(n-2)^2}{4}$.

Setting $x=-\ln \frac{t}{\sqrt{\lambda}}$, the function $y$ defined by $y(t)=v_0(x(t))$ satisfies the Bessel equation:
\[ t^2y''(t) + ty'(t)+(t^2-z_0^2)y(t)=0. \]
According to the parity of the integer $n\geq 2$, $z_0$ is either a positive integer or half-integer:
\begin{itemize}
\item If $z_0$ is a half-integer $\geq \frac{3}{2}$ then the generic solution is a linear combination of: 
\[\begin{aligned} v_{\pm}(x)&=J_{\pm z_0}(\sqrt{\lambda}e^{-x})\\&=\left(\frac{\sqrt{\lambda}}{2}\right)^{\pm z_0}e^{\mp z_0 x}\sum_{m\geq 0}\frac{(-1)^m}{m!\Gamma(z_0+m+1)}\left(\frac{\sqrt{\lambda}e^{-x}}{2}\right)^{2m}. \end{aligned}\]
At infinity, one has:
\[ v_{\pm}(x)^2e^{-2x} \underset{x\to +\infty}{\sim}\left(\frac{\sqrt{\lambda}}{2}\right)^{\pm 2z_0}e^{2(\mp z_0-1)x},\]
from which it follows that all solutions that are in $L^2(\mathbb{R}_+,e^{-2x}\dd x)$ are constant multiples of $v_+$ and therefore also belong to $L^2(\mathbb{R},\textrm{d}x)$.

\item In the special case $z_0=\frac{1}{2}$ ($n=3$), both solutions are in $L^2(\mathbb{R},e^{-2x}\textrm{d}x)$. However, we also require $v'+v \in L^2(\R_+,\textrm{d}x)$. Now: 
\[ J_{-\frac{1}{2}}(t)=\sqrt{\frac{2}{\pi t}}\cos(t), \quad J_{\frac{1}{2}}(t)=\sqrt{\frac{2}{\pi t}}\sin(t), \]
and:
\[ \begin{aligned} v'_+(x)+v_+(x) &=\sqrt{\frac{2}{\pi\lambda}}\left(\frac{3}{2} e^{\frac{1}{2}x}\sin(\sqrt{\lambda}e^{-x})-e^{-\frac{1}{2}x}\sqrt{\lambda}\cos(\sqrt{\lambda}e^{-x}) \right), \\ v'_-(x)+v_-(x)&=\sqrt{\frac{2}{\pi\lambda}}\left(\frac{3}{2}e^{\frac{1}{2}x}\cos(\sqrt{\lambda}e^{-x}) + e^{-\frac{1}{2}x}\sqrt{\lambda}\sin(\sqrt{\lambda}e^{-x})\right), \end{aligned}  \]
therefore only $v_+$ satisfies the additional condition. So all solutions that satisfy the required condition are constant multiples of $v_+$ and are additionally elements of $L^2(\R_+, \dd x)$.

\item When $z_0$ is a non-zero positive integer, then the solution $v_+(x)=J_{z_0}(\sqrt{\lambda}e^{-x})$ is clearly in $L^2(\mathbb{R}_+,\dd x) \hookrightarrow L^2(\mathbb{R}_+,e^{-2x}\textrm{d}x)$ furthermore $v_+' \in L^2(\mathbb{R}_+,\textrm{d}x)$. Another linearly independent solution is given by means of Bessel functions of the second kind:
\[  \begin{aligned} v_-(x)= e^{z_0x} \left(\frac{\sqrt{\lambda}}{2} \right)^{-z_0}\sum_{r=0}^{z_0-1}\frac{\Gamma(z_0-r)}{\Gamma(r+1)}\left(\frac{\sqrt{\lambda}e^{-x}}{2}\right)^{2r} +2xJ_{z_0}(\sqrt{\lambda}e^{-x})\\+\sum_{r=0}^\infty \frac{(-1)^r}{\Gamma(r+1)\Gamma(z_0+r+1)}(\psi(r)+\psi(n+r))\left(\frac{\sqrt{\lambda}e^{-x}}{2} \right)^{z_0+2r}, \end{aligned} \]
with $\psi(r)= [\frac{\dd}{\dd t}\ln \Gamma(t+1)]_{t=r}$. Hence all solutions that are in $L^2(\R_+, e^{-2x}\dd x)$ are constant multiples of $v_+$.

\item Finally, in the special case $z_0=0$, $(n=2)$, the solution $v_+$ satisfies $v_+ \in L^2(\mathbb{R}_+,e^{-2x}\dd x)$ and $v'_+ \in L^2(\R_+,\dd x)$. A linearly independent solution is given by:
\[ v_-(x)= -xJ_0(\sqrt{\lambda}e^{-x}) + \sum_{r=1}^{+\infty}\frac{(-1)^{r-1}}{\Gamma(r+1)^2}\left(\frac{\sqrt{\lambda}e^{-x}}{2}\right)^{2r}\psi(r). \]
Whilst it satisfies $L^2(\mathbb{R}_+,e^{-2x}\dd x)$ it does not satisfy $v'\in L^2(\R_+, \dd x)$, hence all solutions that satisfy the required condition are constant multiples of $v_+$. Note however that in this very specific case $v_+ \not \in L^2(\R_+, \dd x)$.
\end{itemize}
\end{proof}

We conclude this section with the existence result:
\begin{lemm}
Let $n\geq 3$, $\lambda \not\in \sigma_{pp}(-\Delta_g)$, $\phi \in H^{\frac{1}{2}}(M)$ and $(v_k)_{k\in \mathbb{N}}$ be the functions given by Lemma~\ref{lemm:hs}, then $u=\sum_k f^{1-\frac{n}{2}}v_k Y_k$ is an element of $H^1(M)$ and is therefore the unique solution of~\eqref{eq:dirichlet_problem}.
\end{lemm}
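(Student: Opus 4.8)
The plan is to prove the existence statement directly, uniqueness already being supplied by Lemma~\ref{lemm:hs}. Concretely, I would show that the candidate $u=\sum_k f^{1-\frac n2}v_kY_k$ has finite $H^1(M)$-norm by controlling each term of the orthogonal decomposition $\|u\|_{H^1(M)}^2=\sum_k a_k(u)$ of~\eqref{eq:norm_decomp}. Using the expression for $a_k(u)$ in terms of $v_k$ obtained in the proof of Lemma~\ref{lemm:hs}, together with the boundedness of $f$ and of $f'/f$, one gets
\[
a_k(u)\leq C\big((1+\mu_k^2)\|v_k\|_{L^2(\R_+)}^2+\|v_k'\|_{L^2(\R_+)}^2\big),
\]
so that everything reduces to obtaining summable bounds on $\|v_k\|_{L^2}$ and $\|v_k'\|_{L^2}$.

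The first step is an energy identity for each component. Multiplying~\eqref{eq:schro_v} by $\overline{v_k}$, integrating over $\R_+$ and integrating by parts (the boundary contribution at $+\infty$ vanishing since Lemma~\ref{lemm:hs} gives $v_k,v_k'\in L^2$ together with the decay $v_k\sim e^{-z_kx}$ coming from the Bessel tail), yields
\[
\|v_k'\|_{L^2}^2+z_k^2\|v_k\|_{L^2}^2+\int_0^\infty (Q_f-\lambda e^{-2x})|v_k|^2\,\dd x=-v_k'(0)\overline{v_k(0)}.
\]
Since $Q_f-\lambda e^{-2x}$ is bounded with sup-norm at most $M:=\|Q_f\|_\infty+\lambda$, and since $z_k^2=\mu_k^2+\frac{(n-2)^2}4\to\infty$, for all $k$ large enough $z_k^2>2M$ and the left-hand side controls $\|v_k'\|_{L^2}^2+\tfrac12 z_k^2\|v_k\|_{L^2}^2$ from below. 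Recalling $v_k'(0)/v_k(0)=M(-z_k^2)$ and $v_k(0)=f^{\frac n2-1}(0)\phi_k$, the boundary term equals $-\Re M(-z_k^2)\,f^{n-2}(0)|\phi_k|^2$.

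The crux is therefore the uniform-in-$k$ control of the boundary term through an a priori high-energy bound $-\Re M(-z_k^2)\le Cz_k$ on the Weyl–Titchmarsh (equivalently Dirichlet-to-Neumann) quantity. I would obtain this by comparison with the unperturbed problem: on $[a,+\infty)$ the potential $Q_f$ vanishes and $v_k$ is an explicit multiple of $J_{z_k}(\sqrt\lambda e^{-x})$, whose logarithmic derivative at $x=a$ is $\sim -z_k$ by the asymptotics~\eqref{eq:fa} of Section~\ref{sec:bessel}; propagating this across the compact interval $[0,a]$ via the Riccati equation for $v_k'/v_k$ and a Grönwall estimate (the potential being bounded there) gives $M(-z_k^2)=-z_k(1+o(1))$, hence $-\Re M(-z_k^2)\le Cz_k$. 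Feeding this into the energy identity yields $\|v_k\|_{L^2}^2\lesssim z_k^{-1}|\phi_k|^2$ and $\|v_k'\|_{L^2}^2\lesssim z_k|\phi_k|^2$, whence, using $1+\mu_k^2\sim z_k^2$,
\[
a_k(u)\le C\,(1+\mu_k^2)^{\frac12}|\phi_k|^2.
\]

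Summing over $k$ and invoking $\phi\in H^{\frac12}(\partial M)$, so that $\sum_k(1+\mu_k^2)^{1/2}|\phi_k|^2<\infty$, shows $\sum_k a_k(u)<\infty$; the finitely many low-index terms (including $k=0$, for which Lemma~\ref{lemm:hs} already guarantees $v_0,v_0'\in L^2$ when $n\ge3$) are individually finite and irrelevant to summability. Hence $u\in H^1(M)$, and since each $v_k$ solves~\eqref{DirichletProblem1D} with $v(0,\cdot)=f^{\frac n2-1}(0)\phi$ while the series now converges in $H^1(M)$, $u$ is a solution of~\eqref{eq:dirichlet_problem}, unique by Lemma~\ref{lemm:hs}. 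I expect the main obstacle to be exactly the uniform control of $-\Re M(-z_k^2)$: the energy identity alone is circular, as it merely re-expresses this very quantity, so genuine structural input on the $L^2$ solution at high energy—the explicit Bessel tail plus Riccati/Grönwall propagation across $[0,a]$—is required.
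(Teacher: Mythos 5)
Your proof is correct, but it takes a genuinely different route from the paper. The paper proves the two key estimates $\|v_k\|_{L^2}^2\lesssim z_k^{-1}|\phi_k|^2$ and $\|v_k'\|_{L^2}^2\lesssim z_k|\phi_k|^2$ by writing $v_k$ explicitly through the transformation operator of Section~\ref{sec:transfo}, namely $v_k(x)=\frac{\psi(x,z_k)}{\psi(0,z_k)}f^{\frac n2-1}(0)\phi_k$ with $\psi(x,z)=J_z(\sqrt\lambda e^{-x})+\int_x^\infty K(x,t)J_z(\sqrt\lambda e^{-t})\,\dd t$, then combining the pointwise upper bound $|\psi(x,z_k)|\lesssim \frac{\lambda^{z_k/2}}{2^{z_k}\Gamma(z_k+1)}e^{-xz_k}$ from Lemma~\ref{lemm:asymptotic_bessel} with the lower bound~\eqref{jost_lower_bound_rhp} on $|\psi(0,z_k)|$ (borrowed in advance from Lemma~\ref{lemm:asymp_pos_hp}); the derivative bound is handled the same way via the Bessel recurrence. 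You instead reach the same two estimates by an energy identity plus the high-energy bound $-\Re M(-z_k^2)\le Cz_k$, which you anchor at $x=a$ using the explicit Bessel tail and then propagate across $[0,a]$ by the Riccati equation. Both are sound: your argument is more self-contained at this point of the paper (it needs neither the kernel $K$ nor the forward reference to the Jost asymptotics in the right half-plane), and it isolates cleanly where the real content lies, namely the asymptotics $M(-z_k^2)=-z_k(1+o(1))$, which is classical and which the paper itself invokes later. The price is that your Riccati/Grönwall step is only sketched and does require a small bootstrap: one must check that $w=v_k'/v_k$ stays in a neighbourhood of $-z_k$ on all of $[0,a]$ (so that $v_k$ has no zeros there and $w$ does not blow up); writing $w=-z_k(1+u)$ and using the integral form of the equation, the exponential factors $e^{2z_k(x-s)}$ for $x\le s$ make this a routine continuity argument for $z_k$ large, but it should be said. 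The paper's route, by contrast, reuses machinery it needs anyway for the Regge-pole analysis, at the cost of the slightly awkward forward reference to~\eqref{jost_lower_bound_rhp}.
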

\begin{proof}
We must show that $\sum_{k} ||v_k||_k^2 < +\infty$, for which it will be sufficient to show that there are positive constants $C_1,C_2>0$ such that for large enough $k \in \N$, 
\begin{align}
||v_k||^2_{L^2(\mathbb{R},\dd x)} \leq \frac{C_1}{z_k}|\phi_k|^2, \\
||v_k'||^2_{L^2(\mathbb{R},\dd x)} \leq C_2 z_k |\phi_k|^2.
\end{align}
Let $K$ be the transformation operator of Section~\ref{sec:transfo} and define:
\[\psi(x,z)= J_z(\sqrt{\lambda}e^{-x}) + \int_{x}^{+\infty} K(x,t)J_z(\sqrt{\lambda}e^{-t})\dd t, \]
using the properties of $K$, one sees that:
\begin{equation} v_k(x) = \frac{\psi(x,z_k)}{\psi(0,z_k)}f^{\frac{n}{2}-1}(0)\phi_k, \quad z_k=\sqrt{\mu_k^2 +\left(\frac{n}{2}-1\right)^2}. \end{equation}
We begin with a rough estimate of $|\psi(x,z_k)|$. Note that the condition $\lambda \not\in \sigma_{pp}(-\Delta_g)$ guarantees that it does not vanish for any $k$. Using Lemma~\ref{lemm:asymptotic_bessel}, for large enough $k$, one has uniformly for $t\in \R_+$ (since the Bessel functions $t\mapsto J_{z_k}$ are regular at $0$),
\[ |J_{z_k}(\sqrt{\lambda}e^{-t})| \leq 2 \frac{\lambda^{\frac{z_k}{2}}e^{-tz_k}}{2^{z_k}\Gamma(z_k+1)}. \]
Therefore, for large enough $k$:
\[|\psi(x,z_k)| \leq 2 \frac{\lambda^{\frac{z_k}{2}}}{2^{z_k}\Gamma(z_k+1)}\left(1 + \frac{||K||_\infty}{z_k} \right) e^{-xz_k}. \]
Hence, for large enough $k$:
\[\int_{\R_+} |\psi(x,z_k)|^2\dd x \leq  \frac{2}{z_k}\left(\frac{\lambda^{\frac{z_k}{2}}}{2^{z_k}\Gamma(z_k+1)}\left(1 + \frac{||K||_\infty}{z_k} \right)\right)^2.\]

We also require a lower bound on $\psi(0,z)$, for this we use (in advance) Lemma~\ref{eq:jost_re_pos} that enable us to show that, for large enough $k$, one has~:
\begin{equation} \label{jost_lower_bound_rhp} |\psi(0,z_k)| \geq \frac{1}{2} \frac{\lambda^\frac{z_k}{2}}{2^{z_k}\Gamma(z_k+1)}. \end{equation}
Thus:
\[ ||v_k||^2_{L^2(\R_+,\dd x)} \leq \frac{8}{z_k}\left(1+\frac{||K||_\infty}{z_k}\right)^2|f^{\frac{n}{2}-1}(0)|^2|\phi_k|^2.\]
This proves the first estimate.
For the second, write:
\[\psi'(x,z_k)= \sqrt{\lambda}e^{-x}J_{z_k}'(\sqrt{\lambda}e^{-x}) -K(x,x)J_{z_k}(\sqrt{\lambda}e^{-x}) + \int_x^{+\infty}\partial_xK(x,t)J_{z_k}(\sqrt{\lambda}e^{-t})\dd t.\]
Then, using the recurrence relation for Bessel functions, one has:
\[ J_{z_k}'(\sqrt{\lambda}e^{-x})=\frac{z_k}{\sqrt{\lambda}e^{-x}}J_{z_k}(\sqrt{\lambda}e^{-x}) - J_{z_k+1}(\sqrt{\lambda}e^{-x}). \]
So that first term can be estimated for $k$ large enough by:
\[ \begin{aligned} \left\lvert z_k J_{z_k}(\sqrt{\lambda}e^{-x})-\sqrt{\lambda}e^{-x}J_{z_k+1}(\sqrt{\lambda}e^{-x})\right\rvert &\leq2 \frac{\lambda^{\frac{z_k}{2}}e^{-z_kx}}{2^{z_k}\Gamma(z_k+1)}(z_k + \frac{\lambda e^{-2x} }{2z_k}),\\& \leq  4z_k\frac{\lambda^{\frac{z_k}{2}}e^{-z_kx}}{2^{z_k}\Gamma(z_k+1)}.  \end{aligned}\]
The second and third terms can respectively be estimated for large $k$ by:
\[\begin{gathered} |K(x,x)J_z(\sqrt{\lambda }e^{-x})| \leq 2 \frac{\lambda^\frac{z_k}{2}e^{-z_kx}}{2^{z_k}\Gamma(z_k+1)}||K|| _\infty, \\ \left\lvert \int_x^{+\infty}\partial_xK(x,t)J_{z_k}(\sqrt{\lambda}e^{-t})\dd t \right\rvert \leq  2 \frac{\lambda^\frac{z_k}{2}e^{-z_kx}}{2^{z_k}\Gamma(z_k+1)}\frac{||\partial_x K||_{\infty}}{z_k}.\end{gathered}\]
Hence for large enough $k$:
\[ |\psi'(x,z_k)|\leq \frac{8z_k\lambda^{\frac{z_k}{2}e^{-z_kx}}}{2^{z_k}\Gamma(z_k+1)}. \]
Applying again~\eqref{jost_lower_bound_rhp}, we now conclude that, for large enough $k$:
\[ ||v'||^2_{L^2(\R_+,\dd x)}\leq 16 z_k f^{n-2}(0)|\phi_k|^2.  \]
\end{proof}

\section{Jost functions and Regge poles}
We are now ready to address the main objects of our study: the Regge poles. Recall from Definition~\ref{defi:regge_poles} that they are the poles of the meromorphic extension of the Weyl-Titchmarsh function of the Schrödinger operator: 
\[H= -\frac{d^2}{dx^2} + Q_f(x) - \lambda e^{-2x},\]
with Dirichlet boundary conditions at $x=0$.

We shall first discuss the reference case in which $Q_f=0$, and collect information about the corresponding \emph{Jost functions} of which zeros are exactly what we refer to as the Regge poles. 
We recall the following terminology from the context of $1$-dimensional Schrödinger operators with a potential $q\in L^1_{\textrm{loc}}(\R_+)$.
Let $z \in \mathbb{C}, \Re z >0$.
\begin{itemize}
\item The \emph{Jost solution}, $\psi(\cdot, z)$ is the unique solution of the Schrödinger equation:
\[ -\psi''(x, z) + q(x)\psi(x,z)=-z^2\psi(x,z),\]
that satisfies the condition: $\psi(x,z)\underset{x\to\infty}\sim e^{-xz}.$ Note that the notation $'$ will always indicate derivatives with respect to $x$.
\item The \emph{Jost function} is defined to be $z\mapsto \psi(0,z)$.
\end{itemize}

It is well-known that (see for instance \cite{Si1999, Be2001}), the Weyl-Titchmarsh function is then:
\[ M(-z^2)= \frac{\psi'(0,z)}{\psi(0,z)}.\]
Thus, if we can extend $z\mapsto \psi(0,z)$ to an entire function on the whole complex plane, $M$ extends meromorphically to the whole Riemann surface: $\{(w,z)\in \mathbb{C}^2, w+z^2=0\}\cong \mathbb{C}$, and the poles of this extension correspond to zeros of $\psi$. For clarity, and since the Riemann surface is holomorphic to $\mathbb{C}$, we shall view this extension as a meromorphic function on $\mathbb{C}$, that we denote by: $z\mapsto m(z)$.

\subsection{The reference case $Q_f=0$} \label{Unperturbed}
In this section, we assume that $Q_f=0$.  Referring now to Section~\ref{sec:exist_unique}, we know that in this case, the Jost solution as defined above is given by a rescaled Bessel function:
\begin{equation}\tilde{\psi}(x,z)=\Gamma(z+1)2^z\lambda^{-\frac{z}{2}}J_z(\sqrt{\lambda}e^{-x}). \end{equation}
We can observe that this normalisation introduces poles in the Jost solution. However, the Weyl-Titschmarsh function, which is the object of true intrinsic interest to our problem, as a quotient, will be insensitive to an overall normalisation factor that depends only on $z$. We will therefore work with what we shall call the \emph{modified Jost solution}, which is a rescaling of these quantities that removes these poles. In the case $Q_f=0$ we shall simply set:
\[ \psi_0(x,z)=J_z(\sqrt{\lambda}e^{-x}). \]

We can now apply the results of Section~\ref{sec:bessel}, that show that for any fixed $x\in \R_+$, $\psi_0(x,\cdot)$ analytically extends to an entire function of order $1$ and infinite type. Let us denote by $m_0$ the meromorphic extension of the Weyl-Titchmarsh function in this case.

We can obtain some basic information about the distribution of the Regge poles in this case, as zeros of $\psi_0(0,z)$. The asymptotics given by Lemma~\ref{lemm:asymptotic_bessel} in Section~\ref{sec:bessel} is a key ingredient to our results.  We first have:
\begin{lemm}
There are an infinite number of Regge poles when $Q_f=0$. Asymptotically, they are simples poles on the negative real line approaching the negative integers.
\end{lemm}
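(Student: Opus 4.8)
The plan is to locate the zeros of the Jost function $\psi_0(0,z) = J_z(\sqrt{\lambda})$ by exploiting the asymptotic formula~\eqref{eq:fa} of Lemma~\ref{lemm:asymptotic_bessel}, which is valid on the domain $U_\delta = \C \setminus \bigcup_{n\geq 1} D(-n,\delta)$. The key observation is that on $U_\delta$ we have
\[
  \psi_0(0,z) = J_z(\sqrt{\lambda}) = \frac{(\sqrt{\lambda}/2)^z}{\Gamma(z+1)}\left(1 + \gO{\tfrac{1}{|z|}}{|z|}{\infty}\right),
\]
and since the prefactor $(\sqrt{\lambda}/2)^z$ never vanishes, the zeros of $\psi_0(0,\cdot)$ for $|z|$ large must come from a competition with the poles of $1/\Gamma(z+1)$. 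Because $1/\Gamma(z+1)$ has simple zeros exactly at the negative integers $z = -1,-2,-3,\dots$, I expect the zeros of $\psi_0(0,z)$ to cluster near these negative integers, i.e. near the centres of the excised disks $D(-n,\delta)$.

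The execution proceeds in three steps. \textbf{Step 1: infinitude.} Since $z\mapsto \psi_0(0,z)$ is entire of order $1$ (by the Corollary following Lemma~\ref{lemm:asymptotic_bessel}) and infinite type, it is not a polynomial, so by the factorisation theory for entire functions it either has infinitely many zeros or is a nonvanishing exponential times a polynomial; the asymptotic above, showing decay like $1/\Gamma(z+1)$ along the positive real axis rather than pure exponential growth, rules out the latter, giving infinitely many zeros. \textbf{Step 2: localisation via Rouché.} Inside each small disk $D(-n,\delta)$ I would compare $\psi_0(0,z)$ with the reference function $(\sqrt{\lambda}/2)^z/\Gamma(z+1)$. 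On the boundary circle $|z+n| = \delta$ (which lies in $U_\delta$), the error term $\gO{1/|z|}{|z|}{\infty}$ is strictly smaller in modulus than $1$ for $n$ large, so Rouché's theorem applied on $\partial D(-n,\delta)$ shows $\psi_0(0,z)$ has exactly the same number of zeros there as $(\sqrt{\lambda}/2)^z/\Gamma(z+1)$, namely one (the simple zero of $1/\Gamma(z+1)$ at $-n$). Meanwhile, the same asymptotic shows $|\psi_0(0,z)| > 0$ on $U_\delta$ for $|z|$ large, so \emph{all} large zeros lie inside the union of these disks. Letting $\delta\to 0$ shows the zeros approach the negative integers. \textbf{Step 3: reality and simplicity.} Reality follows because $J_z(\sqrt{\lambda})$ has real Taylor coefficients in a suitable sense; more robustly, one checks $\overline{\psi_0(0,\bar z)} = \psi_0(0,z)$ from the series representation with real $\sqrt{\lambda}$, so zeros come in conjugate pairs, and combined with the localisation forcing them into small real-centred disks one concludes they are genuinely real for $|z|$ large. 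Simplicity follows from the Rouché count of exactly one zero per disk.

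The main obstacle I anticipate is making the Rouché comparison fully rigorous in \emph{two} respects simultaneously: first, verifying that the $O(1/|z|)$ error in~\eqref{eq:fa} is genuinely uniform down to the boundary circles $\partial D(-n,\delta)$ — the asymptotic is stated on $U_\delta$, so the circles must be chosen to lie in $U_\delta$, which they do, but one must track that the implied constant does not degrade as $\delta$ shrinks; and second, confirming that no zeros escape the disks, which requires a lower bound $|\psi_0(0,z)| \geq c\,|(\sqrt{\lambda}/2)^z/\Gamma(z+1)|$ on $U_\delta$ for large $|z|$. Both reduce to careful bookkeeping with the reflection/recurrence properties of $\Gamma$ near its poles, but the conceptual content is entirely contained in Lemma~\ref{lemm:asymptotic_bessel}. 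Establishing exact reality (as opposed to asymptotic reality) of \emph{all} the poles, if desired, would require a separate argument, but the statement only claims the asymptotic approach to the negative integers, so the Rouché argument suffices.
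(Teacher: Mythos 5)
Your proposal is correct and follows essentially the same route as the paper: the asymptotics of Lemma~\ref{lemm:asymptotic_bessel} combined with Rouch\'e's theorem to show there are no large zeros in $U_\delta$ and exactly one simple zero near each negative integer $-n$, with reality forced by the conjugate symmetry $\overline{J_{\bar z}(\sqrt{\lambda})}=J_z(\sqrt{\lambda})$. Your Step~1 (infinitude via the order/type of the entire function) is redundant, since the Rouch\'e count already produces one zero per disk $D(-n,\delta)$ for all large $n$, and the paper avoids your worry about the constant degrading as $\delta\to 0$ by simply fixing $\delta$ and concluding eventual containment in $D(-n,\delta)$ for each such $\delta$.
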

\begin{proof}
Appealing to Equation~\eqref{eq:fa}, for $|z|$ sufficiently large in $U_\delta$, we have:
\[ \left\lvert J_z(\sqrt{\lambda}) - \frac{\left(\frac{\sqrt{\lambda}}{2}\right)^z}{\Gamma(z+1)}\right\rvert < \left\lvert\frac{\left(\frac{\sqrt{\lambda}}{2}\right)^z}{\Gamma(z+1)}\right\rvert.\]
Rouché's lemma \cite[Theorem 10.36]{rudin1987real} then shows that the zeros of $z\mapsto J_z(\sqrt{\lambda})$ are same as those of $z\mapsto \frac{\left(\frac{\sqrt{\lambda}}{2}\right)^z}{\Gamma(z+1)}$, on any disk completely contained in the intersection of $U_\delta$ and the outside of a large enough disk centered in $0$. Since the second function has no zeros in this region, neither does $J_z(\sqrt{\lambda}).$ This proves the second point. 

Turning this around if $\delta' > \delta$ is such that, for any $n \in \mathbb{Z}^*_-$ $S(-n,\delta') \subset U_\delta$, then Rouché's lemma shows that for large enough $|n|$, there is exactly one zero in such a disk. The zeros are in fact symmetric with respect to the real axis, so it must be on the negative real axis.
\end{proof}

Lemma~\ref{lemm:asymptotic_bessel} also gives directly the asymptotic behaviour of $m_0$ on $U_\delta$:
\begin{lemm}
\label{lemm:wt_bessel_asymp}
For $z \in U_\delta$:
\[\left\lvert m_0(z) \right\rvert = \left\lvert \frac{J'_z(\sqrt{\lambda})}{J_z(\sqrt{\lambda})}\right\rvert = \gO{|z|}{|z|}{\infty} \]
\end{lemm}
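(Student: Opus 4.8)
The goal is to estimate the logarithmic derivative $m_0(z) = J_z'(\sqrt\lambda)/J_z(\sqrt\lambda)$ on the region $U_\delta$, where by Lemma~\ref{lemm:asymptotic_bessel} the Bessel function $J_z(t)$ is controlled away from the poles of $1/\Gamma(z+1)$. The plan is to differentiate the fundamental asymptotic expansion~\eqref{eq:fa} with respect to the order $z$ and then form the quotient, so that the singular normalisation factor $(\tfrac{t}{2})^z/\Gamma(z+1)$ cancels, leaving a manageable expression. First I would recall from Lemma~\ref{lemm:asymptotic_bessel} that on $U_\delta$, uniformly for $t$ in a compact interval of $\R_+^*$,
\[
J_z(t) = \frac{(t/2)^z}{\Gamma(z+1)}\left(1 + R(z)\right), \qquad R(z) = \gO{\tfrac{1}{|z|}}{|z|}{\infty},
\]
so in particular $J_z(\sqrt\lambda)$ does not vanish for $|z|$ large in $U_\delta$ and the quotient is well defined.

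The key step is to obtain the asymptotics of the derivative $J_z'(\sqrt\lambda) = \partial_t J_z(t)|_{t=\sqrt\lambda}$. Differentiating the series representation of the proof of Lemma~\ref{lemm:asymptotic_bessel} term by term in $t$ (the series converges locally uniformly, so this is legitimate), one finds that the leading behaviour of $\partial_t J_z(t)$ is governed by $\partial_t\!\left((t/2)^z\right) = \tfrac{z}{t}(t/2)^z$, the remaining terms contributing corrections controlled exactly as in the proof of Lemma~\ref{lemm:asymptotic_bessel} by the same Pochhammer-symbol bound $(z+1)_{(m)} \geq |z+1|\delta^{m-1}$ on $U_\delta$. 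Concretely, this yields, uniformly on $U_\delta$ for $|z|$ large,
\[
J_z'(\sqrt\lambda) = \frac{(\sqrt\lambda/2)^z}{\Gamma(z+1)}\left(\frac{z}{\sqrt\lambda} + \gO{1}{|z|}{\infty}\right).
\]
(Equivalently, one can use the standard recurrence $J_z'(t) = \tfrac{z}{t}J_z(t) - J_{z+1}(t)$ already invoked in Section~\ref{sec:exist_unique}, applying~\eqref{eq:fa} to both $J_z(\sqrt\lambda)$ and $J_{z+1}(\sqrt\lambda)$; note that $U_\delta$ is stable under $z\mapsto z+1$, which is what makes this route clean.)

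Forming the quotient $m_0(z) = J_z'(\sqrt\lambda)/J_z(\sqrt\lambda)$, the common factor $(\sqrt\lambda/2)^z/\Gamma(z+1)$ cancels, and dividing the parenthetical factor $\tfrac{z}{\sqrt\lambda} + O(1)$ by $1 + O(1/|z|)$ gives $m_0(z) = \tfrac{z}{\sqrt\lambda} + \gO{1}{|z|}{\infty}$, whence $|m_0(z)| = \gO{|z|}{|z|}{\infty}$ as claimed. I expect the main obstacle to be making the uniformity in $t$ of the derivative estimate genuinely rigorous: one must check that differentiating~\eqref{eq:fa} in $t$ is valid and that the error term remains $O(1/|z|)$ after differentiation, since differentiation can in principle worsen error terms. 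Using the recurrence relation instead of direct term-by-term differentiation sidesteps this entirely, because it reduces everything to two applications of the already-established~\eqref{eq:fa} together with the observation that $U_\delta + 1 \subset U_\delta$ (so~\eqref{eq:fa} applies verbatim to $J_{z+1}$); this is the cleanest route and the one I would write out.
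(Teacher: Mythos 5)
Your proposal is correct, and the route you ultimately endorse --- the recurrence $J_z'(t)=\frac{z}{t}J_z(t)-J_{z+1}(t)$ combined with two applications of the asymptotics~\eqref{eq:fa} to show $\bigl\lvert J_{z+1}(\sqrt{\lambda})/J_z(\sqrt{\lambda})\bigr\rvert = \gO{\frac{1}{|z|}}{|z|}{\infty}$ --- is exactly the paper's proof. Your observation that $U_\delta$ is stable under $z\mapsto z+1$ is a valid point that the paper leaves implicit.
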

\begin{proof}
Let $t>0$, starting from the recurrence relation:
\[ J_z'(t)=\frac{z}{t}J_z(t) - J_{z+1}(t),\] one has, for large enough $|z|$: \[\frac{J_z'(t)}{J_z(t)}= \frac{z}{t}- \frac{J_{z+1}(t)}{J_z(t)}.\]
The result then follows from the asymptotics given in Lemma~\ref{lemm:asymptotic_bessel}, that show that: \[\left\lvert\frac{J_{z+1}(t)}{J_z(t)}\right\rvert= \gO{\frac{1}{|z|}}{|z|}{\infty}.\qedhere\]
\end{proof}

\subsection{The perturbed case $Q_f \neq 0$} \label{Perturbed}

By definition of the transformation operator in Section~\ref{sec:transfo}, the \emph{modified} Jost solution in the general case is given by: 
\begin{equation} \label{eq:modified_jost} \psi(x,z)= J_z\left(\sqrt{\lambda}e^{-x}\right) + \int_{x}^{+\infty} K(x,s)J_z\left(\sqrt{\lambda}e^{-s}\right)\dd s. \end{equation}

Using the properties of the transformation operator, we can now generalise the lemmata of the previous section:
\begin{lemm}
\label{lemm:basic_jost}
Let $x\in \R_+$, then:
\begin{enumerate} \item $z\mapsto \psi(x,z)$ is an entire function,
 \item there are constants $c_1,c_2 >0$ such that: 
\[|\psi(x,z)| \leq c_1e^{c_2|z|\ln(|z|)},\]
for $|z|$ large enough.
\end{enumerate}
\end{lemm}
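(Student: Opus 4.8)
The plan is to prove both assertions directly from the integral representation~\eqref{eq:modified_jost} of the modified Jost solution, leveraging the two lemmata already established for the unperturbed case (Lemmata~\ref{lemm:ordre_bessel} and~\ref{lemm:asymptotic_bessel}) together with the boundedness and compact support properties of the transformation kernel $K$ obtained in Section~\ref{sec:transfo}. The key structural fact is that $K$ is continuous on $\Omega$ and supported in $\Omega_0$, hence bounded with $\lVert K \rVert_\infty < \infty$; moreover, because $K(x,t)$ vanishes for $x+t \geq 2a$, for each fixed $x \in \R_+$ the inner integral in~\eqref{eq:modified_jost} is actually over the \emph{compact} interval $s \in [x, \max(x, 2a-x)]$, so no issues of convergence at $+\infty$ arise.

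For the first assertion (analyticity), I would argue as follows. The map $z \mapsto J_z(\sqrt{\lambda}e^{-x})$ is entire by Schläfli's formula (as already noted after that formula). For the integral term, fix $x$ and observe that the integrand $(z,s) \mapsto K(x,s) J_z(\sqrt{\lambda}e^{-s})$ is, for each fixed $s$, entire in $z$, is jointly continuous, and — thanks to the compact support of $K$ in $s$ and the local boundedness of $J_z(\sqrt{\lambda}e^{-s})$ in $(z,s)$ on compact sets — is dominated on any compact $z$-disk by an integrable (indeed bounded, compactly supported) function of $s$. Analyticity then follows from the standard theorem on differentiation under the integral sign (Morera plus Fubini, or Lebesgue dominated convergence applied to difference quotients); I would state this as a routine application. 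Thus $z \mapsto \psi(x,z)$ is entire.

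For the second assertion (the order-one-type exponential bound), I would estimate the two terms in~\eqref{eq:modified_jost} separately. The first term $J_z(\sqrt{\lambda}e^{-x})$ satisfies $|J_z(\sqrt{\lambda}e^{-x})| \leq c_1 e^{c_2 |z| \ln|z|}$ for $|z|$ large, directly by Lemma~\ref{lemm:ordre_bessel} applied with $t = \sqrt{\lambda}e^{-x}$ (taking any fixed compact interval $I$ containing this value). For the integral term I would bound
\[
\left\lvert \int_x^{+\infty} K(x,s) J_z(\sqrt{\lambda}e^{-s}) \dd s \right\rvert \leq \lVert K \rVert_\infty \int_x^{2a} \left\lvert J_z(\sqrt{\lambda}e^{-s}) \right\rvert \dd s,
\]
and again invoke Lemma~\ref{lemm:ordre_bessel} to bound $|J_z(\sqrt{\lambda}e^{-s})| \leq c_1 e^{c_2 |z| \ln|z|}$ uniformly for $s$ in the compact range $[x,2a]$, where $\sqrt{\lambda}e^{-s}$ ranges over a compact subinterval of $\R_+^*$. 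The integral over the finite interval then contributes only a bounded multiplicative factor, yielding a bound of the same form $c_1' e^{c_2 |z| \ln|z|}$. Combining the two estimates and absorbing constants gives the claimed inequality.

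The only mild subtlety — and the step I would flag as requiring a little care rather than being truly hard — is the uniformity in $s$ of the Bessel bound: Lemma~\ref{lemm:ordre_bessel} is stated uniformly over a compact interval $I \subset \R_+^*$, so I must ensure that as $s$ ranges over $[x, 2a]$ the argument $\sqrt{\lambda}e^{-s}$ stays in such a compact interval bounded away from $0$, which it does since $s \leq 2a < \infty$ forces $\sqrt{\lambda}e^{-s} \geq \sqrt{\lambda}e^{-2a} > 0$. For the limiting edge case $x=0$ this is fine as well. Everything else is a direct transfer of the unperturbed estimates through the bounded, compactly supported kernel $K$, so I do not anticipate any genuine obstacle; the proof is essentially a perturbative bootstrap of Lemmata~\ref{lemm:ordre_bessel} and~\ref{lemm:asymptotic_bessel} via the transformation operator.
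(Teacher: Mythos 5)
Your proof is correct and follows essentially the same route as the paper: analyticity of the integral term via local boundedness of $(z,t)\mapsto J_z(t)$ together with the compact support of $K(x,\cdot)$, and the growth bound by transferring the estimate of Lemma~\ref{lemm:ordre_bessel} through $\lVert K\rVert_\infty$ over the compact integration range. The uniformity subtlety you flag (keeping $\sqrt{\lambda}e^{-s}$ in a compact subinterval of $\R_+^*$) is exactly the point the paper's proof also relies on.
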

\begin{proof} 
The key point is that, since the potential is of compact support, $K(x,\cdot)$ is equally of compact support. The first statement then follows from an easy refinement of the estimates in the proof of Lemma~\ref{lemm:ordre_bessel} to show that for any compact subset $[\alpha, \beta]\subset \R_+^*, R\in \R_+^*$ one can find $M\in \R_+^*$, \[\forall |z|\geq R, t \in [\alpha,\beta],|J_z(t)|\leq M.\] The second statement is also an immediate consequence of the estimate in Lemma~\ref{lemm:ordre_bessel} and continuity of the kernel $K$. 
\end{proof}

\subsubsection{Asymptotics of $z\mapsto \psi(0,z)$}
We shall now use the asymptotics of Section~\ref{lemm:ordre_bessel} to derive asymptotics for $\psi$, in particular in the half plane $\Re z <0$.

To begin our development let us first reformulate the asymptotics of Lemma~\ref{lemm:asymptotic_bessel} as follows. For $z \in U_\delta$:
\[J_z(\sqrt{\lambda}e^{-s})=\frac{\lambda^{\frac{z}{2}}e^{-sz}}{2^z\Gamma(z+1)}\left(1+ R(s,z) \right), \]
where, $R(s,z)= \tilde{R}(\sqrt{\lambda}e^{-s},z)$ with: 
\[\tilde{R}(t,z)=\sum_{m=1}^{+\infty} \frac{(-1)^m}{m!(z+1)_{(m)}}\left(\frac{t}{2}\right)^{2m},  \]
and $\tilde{R}(t,z)=\gO{\frac{1}{|z|}}{|z|}{\infty}$ uniformly for $t$ in a compact set.
Injecting this into equation~\eqref{eq:modified_jost} evaluated at $x=0$, we have:
\[\begin{aligned}\psi(0,z)= \frac{\lambda^{\frac{z}{2}}}{2^z\Gamma(z+1)}\left(1 + R(0,z) +I_1(z) + I_2(z)\right),\end{aligned} \]
with:
\[\begin{gathered}I_1(z)=\int_0^{2a}K(0,s)e^{-sz}\dd s, \\ I_2(z) = \int_{0}^{2a}K(0,s)R(s,z)e^{-sz}\dd s. \end{gathered} \]

In the half-plane $\mathbb{C}_+=\{\Re z \geq 0\}$, the Riemann-Lebesgue lemma shows, without any further assumption on the potential $Q_f$, that $I_1(z)$ vanishes in the limit $|z| \to \infty$ and $z\in \C_+$. Using the estimate on $R$, we see that $I_2(z)=\gO{\frac{1}{|z|}}{|z|}{\infty}$ as long as $z \in \C_+$. Hence, in summary:
\begin{lemm}
\label{lemm:asymp_pos_hp}
\begin{equation}\label{eq:jost_re_pos} \psi(0,z)=\displaystyle \frac{\lambda^{\frac{z}{2}}}{2^z\Gamma(z+1)}\left(1+ \po{1}{|z|}{\infty} \right), \quad \Re z \geq 0.\end{equation}
\end{lemm}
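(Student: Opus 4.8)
The plan is to start from the decomposition of $\psi(0,z)$ obtained just above the statement,
\[
\psi(0,z)= \frac{\lambda^{\frac{z}{2}}}{2^z\Gamma(z+1)}\left(1 + R(0,z) +I_1(z) + I_2(z)\right),
\]
and to show that each of the three error terms is $o(1)$ as $|z|\to\infty$ with $\Re z\ge 0$. First observe that the closed right half-plane is contained in $U_\delta$: any $z$ with $\Re z\ge 0$ satisfies $|z+n|\ge n\ge 1>\delta$ for every $n\ge 1$, so all of $\C_+$ avoids the excluded disks and the asymptotics of Lemma~\ref{lemm:asymptotic_bessel} are available there. In particular $R(0,z)=\tilde{R}(\sqrt{\lambda},z)=\gO{\frac{1}{|z|}}{|z|}{\infty}$, which is $o(1)$.

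For $I_2(z)$ I would reuse the explicit remainder bound established inside the proof of Lemma~\ref{lemm:asymptotic_bessel}, namely $|R(s,z)|=|\tilde{R}(\sqrt{\lambda}e^{-s},z)|\le \frac{\delta}{|z+1|}e^{\lambda/(4\delta)}$, valid uniformly for $s\ge 0$ because $\sqrt{\lambda}e^{-s}\le\sqrt{\lambda}$ lies in a fixed compact set. Combining this with $|e^{-sz}|=e^{-s\Re z}\le 1$ for $s\ge 0$ and $\Re z\ge 0$, and with $K(0,\cdot)\in L^1([0,2a])$ (it is continuous with support in $[0,2a]$), gives
\[
|I_2(z)|\le \frac{\delta\, e^{\lambda/(4\delta)}}{|z+1|}\,\|K(0,\cdot)\|_{L^1([0,2a])}=\gO{\frac{1}{|z|}}{|z|}{\infty}.
\]

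The only genuinely delicate term is $I_1(z)=\int_0^{2a}K(0,s)e^{-sz}\dd s$, for which I must prove convergence to $0$ \emph{uniformly} over the whole half-plane $\C_+$, and not merely along rays where the ordinary Riemann--Lebesgue lemma would suffice. The mechanism that makes this possible is again the bound $|e^{-sz}|\le 1$ on $\C_+\times\R_+$, which shows that the functional $g\mapsto \int_0^{2a} g(s)e^{-sz}\dd s$ has operator norm at most $\|g\|_{L^1}$ uniformly in $z\in\C_+$. I would then argue by density. Given $\varepsilon>0$, pick $g\in C^1_c((0,2a))$ with $\|K(0,\cdot)-g\|_{L^1}<\varepsilon$; the contribution of $K(0,\cdot)-g$ to $I_1(z)$ is then at most $\varepsilon$ uniformly in $z\in\C_+$, while for the smooth piece an integration by parts (the boundary terms vanish since $g(0)=g(2a)=0$) yields $\int_0^{2a}g(s)e^{-sz}\dd s=\frac{1}{z}\int_0^{2a}g'(s)e^{-sz}\dd s$, whose modulus is bounded by $\frac{1}{|z|}\|g'\|_{L^1}=\gO{\frac{1}{|z|}}{|z|}{\infty}$ uniformly on $\C_+$. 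Letting $|z|\to\infty$ and then $\varepsilon\to 0$ shows $I_1(z)\to 0$ in $\C_+$.

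I expect this uniform-in-$\C_+$ version of the Riemann--Lebesgue lemma to be the main point to treat with care: under the standing hypotheses (without~\ref{cf3}) the kernel section $K(0,\cdot)$ is only known to be continuous, so integration by parts is not directly available and must be preceded by the $L^1$-approximation step, the boundedness of $e^{-sz}$ on the closed half-plane being exactly what upgrades pointwise convergence to uniform convergence. Assembling the three estimates gives $R(0,z)+I_1(z)+I_2(z)=o(1)$ as $|z|\to\infty$ in $\C_+$, which is the desired conclusion~\eqref{eq:jost_re_pos}.
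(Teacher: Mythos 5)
Your proposal is correct and follows essentially the same route as the paper: the same decomposition $\psi(0,z)=\frac{\lambda^{z/2}}{2^z\Gamma(z+1)}\bigl(1+R(0,z)+I_1(z)+I_2(z)\bigr)$, the bound on $R$ from Lemma~\ref{lemm:asymptotic_bessel} for the terms $R(0,z)$ and $I_2(z)$, and a Riemann--Lebesgue argument for $I_1(z)$. The only difference is that you spell out, via $L^1$-approximation by smooth compactly supported functions and the bound $|e^{-sz}|\le 1$ on $\C_+$, the uniform-in-$\C_+$ version of the Riemann--Lebesgue lemma that the paper invokes without detail; this is a welcome clarification but not a different argument.
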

Lemma~\ref{lemm:asymp_pos_hp} determines some universal asymptotics for Jost solutions within the class of potentials we consider. Combined with Lemma~\ref{lemm:basic_jost}, this can be used to assert that:
\begin{prop}
\label{prop:jost_unique}
If two warped balls have the same Regge poles then the modified Jost function's of the associated Schrödinger problem are the same.
\end{prop}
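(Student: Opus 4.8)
The plan is to exploit two facts already at our disposal: by Lemma~\ref{lemm:basic_jost} each modified Jost function $z\mapsto\psi(0,z)$ is entire of order at most $1$ (the bound $|\psi(0,z)|\leq c_1 e^{c_2|z|\ln|z|}$ yields order $1$), and by Lemma~\ref{lemm:asymp_pos_hp} all such functions share the \emph{same} universal asymptotics in the closed right half-plane. Denote by $\psi(0,\cdot)$ and $\tilde\psi(0,\cdot)$ the modified Jost functions of the two warped balls. By Definition~\ref{defi:regge_poles} and the discussion preceding it, their zero sets, counted with multiplicity, are precisely the respective Regge poles, which are assumed to coincide.

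First I would apply the Hadamard factorisation theorem~\cite{Le1996} to each of these entire functions of finite order $\rho\leq 1$. Writing $m$ for the common order of vanishing at the origin and $(\zeta_n)$ for the common sequence of nonzero Regge poles, this gives
\[
  \psi(0,z) = e^{a z + b}\, z^m P(z), \qquad \tilde\psi(0,z) = e^{\tilde a z + \tilde b}\, z^m P(z),
\]
where $P(z)=\prod_n E_p(z/\zeta_n)$ is the canonical product of genus $p\leq 1$ built from the common nonzero poles, and $a,b,\tilde a,\tilde b\in\C$; since $\rho\leq 1$, the exponential factor is genuinely of degree at most $1$. The crucial point is that, the two functions sharing the same zeros with the same multiplicities, the canonical products and the prefactor $z^m$ cancel identically in the quotient, so that
\[
  \frac{\psi(0,z)}{\tilde\psi(0,z)} = e^{(a-\tilde a)z + (b-\tilde b)}.
\]

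It then remains to show $a=\tilde a$ and $b=\tilde b$, for which I would invoke Lemma~\ref{lemm:asymp_pos_hp}. Restricting to real $z=r\to+\infty$, both $\psi(0,r)$ and $\tilde\psi(0,r)$ are asymptotic to the same explicit prefactor $\lambda^{r/2}2^{-r}/\Gamma(r+1)$, which depends only on the fixed energy $\lambda$, so their ratio tends to $1$. Hence $e^{(a-\tilde a)r+(b-\tilde b)}\to 1$ as $r\to+\infty$. Taking moduli forces $\Re(a-\tilde a)=0$ and $\Re(b-\tilde b)=0$, and then the convergence of the remaining unimodular factor $e^{i(\Im(a-\tilde a)r+\Im(b-\tilde b))}$ to $1$ (no oscillation allowed in the limit) forces $\Im(a-\tilde a)=0$ and $\Im(b-\tilde b)=0$. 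Therefore the quotient is identically $1$ and $\psi(0,\cdot)=\tilde\psi(0,\cdot)$.

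The point requiring the most care is the bookkeeping in the Hadamard step: one must ensure the order is at most $1$, so that the exponential factor is of degree $\leq 1$, and that ``same Regge poles'' is read with multiplicities so that the canonical products truly cancel. The order bound is exactly what Lemma~\ref{lemm:basic_jost} supplies, and the infinite type of these functions is harmless here, since it is absorbed entirely into the canonical product rather than into the degree-$1$ exponential factor. Everything else is a direct comparison of the two asymptotics along the real axis.
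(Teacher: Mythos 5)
Your proof is correct and follows essentially the same route as the paper: Hadamard factorisation of the order-one entire Jost function, cancellation of the canonical products built from the common zeros, and the universal right-half-plane asymptotics of Lemma~\ref{lemm:asymp_pos_hp} to pin down the linear exponential factor. Your quotient formulation is a slightly more explicit packaging of the paper's closing remark that $\alpha,\beta$ are determined by the asymptotics of the canonical product (the only nit: $e^{(a-\tilde a)r+(b-\tilde b)}\to 1$ forces $\Im(b-\tilde b)\in 2\pi\Z$ rather than $=0$, which still gives the desired conclusion).
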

\begin{proof}
The Jost function is entire and of finite order $\rho=1$, hence by the Hadamard factorisation theorem~\cite[\S 4.2]{Le1996}:
\[ \psi(0,z)=z^k \exp(\alpha + z \beta)\prod_{j=1}^\infty \left(\left(1-\frac{z}{z_j}\right)e^{\frac{z}{z_j}}\right)^{n_j}.\]
Where, $\{z_j\}$ are the non-zero Regge poles, $\{n_j\}$ their multiplicity and $k \in \{0,1\}$ according to whether zero is a pole or not.
It follows then that: 
\[\prod_{j=1}^\infty \left(\left(1-\frac{z}{z_j}\right)e^{\frac{z}{z_j}}\right)^{n_j} \sim \frac{\lambda^{\frac{z}{2}}}{2^z\Gamma(z+1)}z^{-k}e^{-\alpha-z\beta},  \]
in the half plane $\Re z \geq 0$.
Hence, $\alpha,\beta$ are determined by the asymptotic behaviour of the canonical product associated to the $\{z_j\}$ in this same half-plane, and therefore depends only on their distribution.
\end{proof}
\begin{rema}
At least in principle, studying the asymptotic behaviour of the canonical product should enable to determine $\alpha$ and $\beta$.
\end{rema}

In order to have precise asymptotics in the half-plane $\Re z <0$, where the dominant terms arise from the Laplace type integrals $I_1,I_2$, we require further assumptions on the potential $Q_f$. As mentioned in the introduction,  we make the assumption that there is a jump in  the $(p-1)$th derivative of $Q_f$, $p\geq 1$, at the boundary point $a$ of the support of the potential. As we observed in Section~\ref{sec:transfo}, this leads to a jump in the $p$th derivative of the kernel $K$. This allows us to handle the Laplace integrals $I_1, I_2$ and extract a dominant term by integration by parts. Indeed, after $p+1$ integrations by parts:
\[\begin{aligned} I_1(z)=&\sum_{k=1}^{p+1}\frac{1}{z^k}(\partial^{k-1}_sK)(0,0) -\frac{1}{z^{p+1}}\underbrace{\partial^p_sK(0,2a^{-})}_{\neq 0}e^{-2az}\\&\hspace{2in}+\frac{1}{z^{p+1}}\int_0^{2a}(\partial^{p+1}_sK)(0,s)e^{-sz}\dd s. \end{aligned}\]
The dominant term is clearly: $\displaystyle \frac{1}{z^{p+1}}\underbrace{\partial^p_sK(0,2a^{-})}_{\neq 0}e^{-2az}$, the other boundary terms from the integration by parts are negligible compared to this in the half-plane $\{\Re z <0\}$. Without further hypothesis, the final term is $\displaystyle \po{\frac{e^{2az}}{z^{p+1}}}{|z|}{\infty}$.
Indeed: \[\int_0^{2a}\partial^{p+1}_s K(0,s)e^{(2a-s)z}\dd s=\int_0^{2a}\partial_s^{p+1}K(0,2a-u)e^{uz}\dd s,\]
and the latter integral vanishes in the limit $|z|\to \infty$ if $\Re z \leq 0$ by the Riemann-Lebesgue lemma. 
\begin{rema}
Note that the estimate of the remainder can be improved to: $\displaystyle \gO{\frac{e^{2a(\Re z)_-}}{|z|^{p+2}}}{|z|}{\infty}$, where $(x)_- = \frac{-x +|x|}{2}$, denotes the negative part\footnote{which is positive.} of $x\in \R$, in the intersection of $U_\delta$ with any sector of the form:
\[ \Re z < -\varepsilon | \Im z |,\]
where $\varepsilon >0$. 
\end{rema}
The same method will enable us to treat the remainder term $I_2(z)$. We will need the following Lemma which can be proved in a similar fashion to Lemma~\ref{lemm:asymptotic_bessel}:
\begin{lemm}
Let $p\geq 1$, then:
\[ \partial^p_s R(s,z)=\lambda^{\frac{p}{2}}\sum_{m=1}^{+\infty} \frac{(-1)^{m+p}m^p}{m!(z+1)_{(m)}}\left(\frac{\sqrt{\lambda}e^{-s}}{2}\right)^{2m}, \]
furthermore for $z\in U_\delta$,
\[ \left\lvert\partial^p_sR(s,z)\right\rvert\leq \frac{\lambda^{\frac{p}{2}}\delta}{|z+1|}\sum_{m=1}^{+\infty}\frac{m^p}{m!}\left( \frac{\lambda e^{-2s}}{4\delta}\right)^m.  \]
Therefore, \[\partial^p_sR(s,z) =\gO{\frac{1}{|z|}}{|z|}{\infty},\]
uniformly for $s \in \R_+.$
\end{lemm}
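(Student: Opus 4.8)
The plan is to prove the series representation for $\partial_s^p R(s,z)$ by differentiating term by term, and then to establish the two bounds. Recall from the definition that
\[
R(s,z)=\tilde R(\sqrt\lambda e^{-s},z)=\sum_{m=1}^{+\infty}\frac{(-1)^m}{m!(z+1)_{(m)}}\left(\frac{\sqrt\lambda e^{-s}}{2}\right)^{2m}.
\]
First I would write the general term as $\frac{(-1)^m}{m!(z+1)_{(m)}}(\tfrac{\sqrt\lambda}{2})^{2m}e^{-2ms}$ and observe that $\partial_s^p e^{-2ms}=(-2m)^p e^{-2ms}$, which produces the factor $(-1)^p 2^p m^p$. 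Collecting the power of $\lambda$, the factor $2^{2m}$ in the denominator of $(\tfrac{\sqrt\lambda}{2})^{2m}$ combines with $2^p$ to give the claimed $\lambda^{p/2}$ prefactor and the stated summand $\frac{(-1)^{m+p}m^p}{m!(z+1)_{(m)}}\left(\frac{\sqrt\lambda e^{-s}}{2}\right)^{2m}$. To justify differentiating under the summation sign I would check that the differentiated series converges uniformly on $\R_+$ (for fixed $z\in U_\delta$), which follows from the absolute bound established next; uniform convergence of all derivative series up to order $p$ then legitimises the term-by-term differentiation.

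For the absolute bound I would proceed exactly as in the proof of Lemma~\ref{lemm:asymptotic_bessel}. On $U_\delta$ one has the lower bound $(z+1)_{(m)}\geq |z+1|\,\delta^{m-1}$, so each term is dominated in modulus by
\[
\frac{\lambda^{p/2}}{|z+1|}\cdot\frac{m^p}{m!}\cdot\frac{1}{\delta^{m-1}}\left(\frac{\sqrt\lambda e^{-s}}{2}\right)^{2m}
=\frac{\lambda^{p/2}\delta}{|z+1|}\cdot\frac{m^p}{m!}\left(\frac{\lambda e^{-2s}}{4\delta}\right)^m,
\]
where I have pulled out one factor of $\delta$ to turn $\delta^{m-1}$ into $\delta^{-m}\delta$ and absorbed $(\tfrac{\sqrt\lambda e^{-s}}{2})^{2m}=(\tfrac{\lambda e^{-2s}}{4})^m$. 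Summing over $m\geq 1$ gives precisely the stated bound. The only point requiring a word is convergence of $\sum_m \frac{m^p}{m!}x^m$ for $x=\frac{\lambda e^{-2s}}{4\delta}$; this is an entire function of $x$ (a finite linear combination of $x^j e^x$ via Stirling-number/Touchard-polynomial identities, or simply by the ratio test), so it is finite for every $s$ and bounded uniformly for $s\in\R_+$ since $x\leq \frac{\lambda}{4\delta}$ when $s\geq 0$.

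Finally, the $O(|z|^{-1})$ conclusion follows immediately: the right-hand side of the bound is $\frac{\lambda^{p/2}\delta}{|z+1|}$ times a quantity bounded uniformly in $s\in\R_+$ (by its value at $s=0$, since $x\mapsto\sum_m \frac{m^p}{m!}x^m$ is increasing on $\R_+$ and $e^{-2s}\leq 1$), and $\frac{1}{|z+1|}=\gO{\frac{1}{|z|}}{|z|}{\infty}$ on $U_\delta$. I expect the main obstacle — though a mild one — to be the rigorous justification of term-by-term differentiation and of the uniformity in $s$; both reduce to dominated convergence together with the observation that the dominating series is an entire function of $\frac{\lambda e^{-2s}}{4\delta}$ evaluated on the bounded set $s\in\R_+$. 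The rest is the same Pochhammer lower bound manipulation already used for Lemma~\ref{lemm:asymptotic_bessel}, so no genuinely new estimate is needed.
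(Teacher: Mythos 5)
Your proposal follows exactly the route the paper intends (the paper gives no written proof, saying only that the lemma ``can be proved in a similar fashion to Lemma~\ref{lemm:asymptotic_bessel}''): term-by-term differentiation of the series defining $R$, then the Pochhammer lower bound $|(z+1)_{(m)}|\geq |z+1|\delta^{m-1}$ on $U_\delta$, and the observation that the resulting majorant series is an entire function of $\frac{\lambda e^{-2s}}{4\delta}$, hence bounded uniformly for $s\geq 0$. The structure, the justification of differentiating under the sum, and the uniformity argument are all fine.

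One step in your write-up is not a valid computation, though it is harmless. Differentiating the general term $\frac{(-1)^m}{m!(z+1)_{(m)}}\bigl(\frac{\sqrt{\lambda}}{2}\bigr)^{2m}e^{-2ms}$ $p$ times produces the factor $(-2m)^p=(-1)^p2^pm^p$, so the correct prefactor in the displayed identity is $2^p$, not $\lambda^{p/2}$: the factor $2^{2m}$ sits inside $\bigl(\frac{\sqrt{\lambda}e^{-s}}{2}\bigr)^{2m}$, which is retained verbatim in the final formula, so there is nothing left over for $2^p$ to ``combine with'', and in any case no manipulation of powers of $2$ can produce a power of $\lambda$. The discrepancy originates in the statement of the lemma itself (the constant there appears to be a typo), and you have reverse-engineered a justification for it rather than flagging it. Since both $2^p$ and $\lambda^{p/2}$ are constants independent of $z$ and $s$, the second displayed bound and the conclusion $\partial_s^pR(s,z)=\gO{\frac{1}{|z|}}{|z|}{\infty}$ uniformly in $s\in\R_+$ are unaffected; but as written, that one sentence of your proof asserts a false identity between constants and should simply be replaced by the correct bookkeeping.
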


\noindent Now, as before, we can do $p+1$ integration by parts in $I_2(z)$ to obtain: \[I_2(z)=R_1(z)+R_2(z)+R_3(z),\] with:
\begin{equation} \begin{gathered}R_1(z)= \sum_{k=1}^{p+1}\frac{1}{z^k}\sum_{m=0}^{k-1}(\partial^{m}_sK)(0,0)\partial^{k-1-m}_sR(0,z), \\ R_2(z)=-\frac{1}{z^{p+1}}\sum_{k=0}^p\partial^k_sK(0,2a^{-})\partial^{p-k}_sR(2a,z)e^{-2az}, \\R_3(z)=\frac{1}{z^{p+1}}\sum_{k=0}^{p+1}\int_0^{2a}\partial^{k}_sK(0,s)\partial_s^{p+1-k}R(s,z)e^{-sz}\dd s. \end{gathered} \end{equation}

We estimate each of these terms as follows: \[\begin{gathered} R_1(z)=\gO{\frac{1}{|z|^2}}{|z|}{\infty}, \quad R_3(z)= \gO{\frac{e^{2a(\Re z)_-}}{|z|^{p+2}}}{|z|}{\infty}\\ R_2(z)=-\frac{1}{z^{p+1}}\partial_s^pK(0,2a^{-})R(2a,z)e^{-2az}=\gO{\frac{e^{2a(\Re z)_-}}{|z|^{p+2}}}{|z|}{\infty}.\end{gathered} \]
Hence, we have shown that: 
\begin{lemm} \label{lemm:asymp_saut}
If $Q_f \in C^{p-2}(\R_+)\cap AC^{p-1}(\R_+)$ and its $(p-1)th$ derivative is continuous, save a jump at the boundary point $a$, $p\geq 1$, then in the open half-plane $\Re z <0$:
\begin{equation*} \psi(0,z) =\frac{\lambda^{\frac{z}{2}}}{2^z\Gamma(z+1)}\left( 1+ \gO{\frac{1}{|z|}}{|z|}{\infty} - \partial^{p}_{s}K(0,2a^{-})\frac{e^{-2az}}{z^{p+1}}\left(1+\po{1}{|z|}{\infty} \right) \right).   \end{equation*}
\end{lemm}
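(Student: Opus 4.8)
The plan is to start from the representation obtained just above, namely
\[
\psi(0,z)=\frac{\lambda^{\frac{z}{2}}}{2^z\Gamma(z+1)}\bigl(1+R(0,z)+I_1(z)+I_2(z)\bigr),
\]
and to extract, in the regime $\Re z<0$, the single dominant correction carried by the growing exponential $e^{-2az}$. First I would dispose of the elementary piece: by Lemma~\ref{lemm:asymptotic_bessel} one has $R(0,z)=\gO{\frac{1}{|z|}}{|z|}{\infty}$, so this term is absorbed directly into the announced $\gO{\frac{1}{|z|}}{|z|}{\infty}$.

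The heart of the argument is the analysis of the Laplace-type integral $I_1(z)=\int_0^{2a}K(0,s)e^{-sz}\dd s$. Here I would invoke the regularity statement for the transformation kernel established in Section~\ref{sec:transfo}: under the hypothesis that $Q_f\in C^{p-2}(\R_+)\cap AC^{p-1}(\R_+)$ with a jump of $\partial_s^{p-1}Q_f$ at $a$, the map $s\mapsto K(0,s)$ is of class $C^{p}$ up to $s=2a^-$ and $\partial_s^p K(0,2a^-)\neq 0$. This regularity is exactly what licences $p+1$ integrations by parts, yielding boundary contributions at $s=0$ and $s=2a^-$ together with a remainder integral. The terms produced at $s=0$ are all of the form $z^{-k}\partial_s^{k-1}K(0,0)$ and hence $\gO{\frac{1}{|z|}}{|z|}{\infty}$; the boundary term at $2a$ is precisely $-\partial_s^p K(0,2a^-)\,z^{-(p+1)}e^{-2az}$, which is the sought dominant term since $|e^{-2az}|=e^{2a(\Re z)_-}$ grows as $\Re z\to-\infty$. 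The remainder integral, after the substitution $u=2a-s$, factors as $z^{-(p+1)}e^{-2az}$ times $\int_0^{2a}\partial_s^{p+1}K(0,2a-u)e^{uz}\dd u$, and the latter tends to $0$ when $\Re z\le 0$ by the Riemann--Lebesgue lemma, so this contribution is $\po{\frac{e^{-2az}}{z^{p+1}}}{|z|}{\infty}$.

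It then remains to show that $I_2(z)=\int_0^{2a}K(0,s)R(s,z)e^{-sz}\dd s$ only perturbs the dominant term at the level of the $\po{1}{|z|}{\infty}$ factor. Performing the same $p+1$ integrations by parts, and using the auxiliary lemma giving $\partial_s^p R(s,z)=\gO{\frac{1}{|z|}}{|z|}{\infty}$ uniformly in $s$, I would split $I_2=R_1+R_2+R_3$ exactly as displayed above: the $s=0$ boundary terms give $R_1=\gO{\frac{1}{|z|^2}}{|z|}{\infty}$, the remainder integral gives $R_3=\gO{\frac{e^{2a(\Re z)_-}}{|z|^{p+2}}}{|z|}{\infty}$, while the $2a$ boundary term reduces to $R_2=-z^{-(p+1)}\partial_s^pK(0,2a^-)R(2a,z)e^{-2az}$, which is $\gO{\frac{e^{2a(\Re z)_-}}{|z|^{p+2}}}{|z|}{\infty}$ because $R(2a,z)=\gO{\frac{1}{|z|}}{|z|}{\infty}$. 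Each of these is smaller than the dominant term of $I_1$ by a factor $\frac{1}{|z|}$, so they are exactly what fills the factor $\bigl(1+\po{1}{|z|}{\infty}\bigr)$ multiplying $\partial_s^p K(0,2a^-)\frac{e^{-2az}}{z^{p+1}}$. Collecting the constant $1$, the $\gO{\frac{1}{|z|}}{|z|}{\infty}$ pieces, and the dominant exponential term then gives the stated expansion.

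The step I expect to be the main obstacle is the bookkeeping of the competing scales in the half-plane $\Re z<0$: unlike in $\C_+$, the factor $e^{-2az}$ is now exponentially large, so one must verify that every remainder genuinely carries an extra power of $\frac{1}{|z|}$ relative to $z^{-(p+1)}e^{-2az}$ and that the $O$ and $o$ estimates hold uniformly, in particular that Riemann--Lebesgue applies only after the exponential has been factored out. Securing the $C^{p}$ regularity and the non-vanishing of $\partial_s^p K(0,2a^-)$ from the jump hypothesis on $Q_f$, through the transformation-operator results, is the other delicate point, though it is supplied by the lemmata of Section~\ref{sec:transfo}.
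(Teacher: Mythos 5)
Your proposal is correct and follows essentially the same route as the paper: the same decomposition $\psi(0,z)=\frac{\lambda^{z/2}}{2^z\Gamma(z+1)}(1+R(0,z)+I_1(z)+I_2(z))$, the same $p+1$ integrations by parts isolating the boundary term $-\partial_s^pK(0,2a^-)z^{-(p+1)}e^{-2az}$ (non-vanishing by the jump formula for the transformation kernel), the same Riemann--Lebesgue treatment of the remainder after factoring out $e^{-2az}$, and the same splitting $I_2=R_1+R_2+R_3$ controlled by the auxiliary estimate on $\partial_s^pR(s,z)$. Nothing is missing.
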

\noindent Which can be improved a little in sectors:
\begin{lemm}
\label{lemm:asymp_sector}
Let $\varepsilon>0$ and $0<\delta < \frac{1}{2}$. Under the same assumptions on the potential $Q_f$ as in the previous lemma, we have the following asymptotics:
\[ \psi (0,z) =\frac{\partial^p_sK(0,2a^{-})\lambda^{\frac{z}{2}}e^{-2az}}{2^z\Gamma(z+1)z^{p+2}}\left(1+ \po{1}{|z|}{\infty} \right),\]
in the domain $\{ \Re z <-\varepsilon|\Im z| \} \cap U_\delta$.
\end{lemm}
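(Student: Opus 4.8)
The plan is to extract the claimed asymptotics directly from the refined expansion of Lemma~\ref{lemm:asymp_saut} by showing that, throughout the sector $\{\Re z<-\varepsilon|\Im z|\}\cap U_\delta$, the single term carrying the factor $e^{-2az}$ dominates every other contribution. I would begin with the elementary geometric observation that on this sector $-\Re z\geq c_\varepsilon|z|$ with $c_\varepsilon=\varepsilon/\sqrt{1+\varepsilon^2}>0$: the inequality $\Re z<-\varepsilon|\Im z|$ forces $|z|^2\leq(1+\varepsilon^{-2})(\Re z)^2$. Consequently $|e^{-2az}|=e^{-2a\Re z}\geq e^{2ac_\varepsilon|z|}$ grows exponentially in $|z|$, so that $|z|^N/|e^{-2az}|\to 0$ for every fixed $N$. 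This is precisely the feature of the sector (as opposed to the full left half-plane) that the argument exploits.

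Next I would rewrite the expansion of Lemma~\ref{lemm:asymp_saut} and factor out its unique exponentially-growing contribution, namely the term proportional to $\partial^p_sK(0,2a^-)e^{-2az}$; this is legitimate since $\partial^p_sK(0,2a^-)\neq 0$. After dividing through by this factor, the background $1+\gO{1/|z|}{|z|}{\infty}$ becomes a multiple of $|z|^{p+1}/|e^{-2az}|$, which vanishes by the exponential lower bound above. What survives is the dominant term multiplied by $1+\po{1}{|z|}{\infty}$, which is exactly the asserted form.

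The only delicate point, and the main obstacle, is the uniformity of the error as $\Im z$ ranges over the sector: in the full half-plane the $\gO{1/|z|}{|z|}{\infty}$ of Lemma~\ref{lemm:asymp_saut} rested on a bare Riemann--Lebesgue argument that need not be uniform near the imaginary axis. To settle this I would invoke the sharpened bounds recorded in the remark following Lemma~\ref{lemm:asymp_saut}: on sectors of the form $\Re z<-\varepsilon|\Im z|$ the integral remainder in $I_1$, together with the terms $R_2$ and $R_3$ of $I_2$, are all $\gO{e^{2a(\Re z)_-}/|z|^{p+2}}{|z|}{\infty}$, while $R_1=\gO{1/|z|^2}{|z|}{\infty}$. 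Since $e^{2a(\Re z)_-}=|e^{-2az}|$ on this sector, each exponential remainder is smaller than the dominant term by a factor $\gO{1/|z|}{|z|}{\infty}$, whereas $R_1$ and the non-exponential background are exponentially small relative to it. Collecting these estimates folds every error into the $\po{1}{|z|}{\infty}$ and yields the lemma.
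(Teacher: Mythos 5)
Your proposal is correct and follows the same (implicit) route as the paper: the paper offers no separate proof of Lemma~\ref{lemm:asymp_sector}, merely observing that Lemma~\ref{lemm:asymp_saut} ``can be improved a little in sectors'', and you supply exactly the missing details --- the bound $-\Re z\geq c_\varepsilon|z|$ on the sector, hence the exponential growth of $|e^{-2az}|$ which makes the boundary term from the integration by parts dominant, together with the sharpened sectorial remainder estimates from the remark and from the bounds on $R_1,R_2,R_3$ to guarantee uniformity near the rays $\Re z=-\varepsilon|\Im z|$. One caveat: carried out literally, your computation (and Lemma~\ref{lemm:asymp_saut}) produces the dominant term $-\partial^p_sK(0,2a^-)\,e^{-2az}/z^{p+1}$ times the Bessel prefactor, whereas the statement of Lemma~\ref{lemm:asymp_sector} displays $z^{p+2}$ and no minus sign; you assert the two forms coincide ``exactly'' when they differ by a power of $z$ and a sign, so you should flag this as an internal inconsistency of the text (harmless for the subsequent arguments, which only use the $1+\po{1}{|z|}{\infty}$ structure) rather than silently identify them.
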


We are now in a position to prove Theorem \ref{MainAsymptotic}. 

\begin{proof}[Proof of Theorem \ref{MainAsymptotic}]
	
Given $\varepsilon>0$, we consider first the sector $\{ \Re z <-\varepsilon|\Im z| \}$ that contains the negative real axis. Using the asymptotics given in Lemma \ref{lemm:asymp_sector}, it follows easily from Rouché's theorem applied to each disk $D(-k,\delta),  k \in \N,  0<\delta<\frac{1}{2}$ that for large enough $|z|$, the zeros of $\psi(0,z)$ are simple, real and contained in $D(-k,\delta)$. These zeros correspond to the $(\alpha_k)_{k \geq0}$'s in the statement of Theorem \ref{MainAsymptotic}. 

Consider now the sectors $\{\Re(z) < 0\} \cap \{ \Re z > -\varepsilon|\Im z| \}$. From the asymptotics given in Lemma \ref{lemm:asymp_saut}, the zeros of $\psi(0,z)$ coincide there with the zeros of the function:
$$
  \left( 1+ \po{1}{|z|}{\infty} - \partial^{p}_{s}K(0,2a^{-})\frac{e^{-2az}}{z^{p+1}}\left(1+\po{1}{|z|}{\infty} \right) \right). 
$$
However, it follows from the method of Hardy \cite{Ha1905} and Cartwright \cite{Ca1930, Ca1931} that these zeros - denoted by $(\beta_j)_{j \in \Z^*}$ - satisfy the asymptotics:
$$
\begin{aligned}
   \beta_{\pm j} = \pm i \frac{\pi}{2a} \left( 2|j| + \frac{p-1}{2} \pm (sgn(A) + 1)   \right) &- \frac{p+1}{2a} \log \frac{|j| \pi}{a} \\&+ \frac{1}{2a} \log |A| (p-1)! + o(1),
   \end{aligned}
$$
where $A =  \frac{(-1)^p \partial_s^p K(0,2a^-)}{(p-1)!}$. 
This concludes the proof of Theorem \ref{MainAsymptotic}.

\end{proof}

\section{Solving the inverse problem} \label{InversePb}

\subsection{The Gelfand-Levitan-Marchenko method}

In the introduction of~\cite{MSW2010}, the authors outline a general scheme for the proof of the uniqueness result we seek, based on the theory of Gelfand-Levitan-Marchenko~\cite[Chapter IV]{Le2018}. The advantage of this approach requires only a rather mild assumption on the overall potential: $q(x)=-\lambda e^{-2x} +Q_f(x)$; that is: \[\int_0^{+\infty} x|q(x)|\dd x < + \infty. \]
This condition is clearly satisfied by the class of potentials we consider. 

The proof however is based on a different set of spectral data, nevertheless we will show that it can be obtained from the knowledge of the Regge poles. Let us begin by introducing some notation. First, we must revert to the usual normalisation for the Jost solutions, and define:
\begin{equation} \label{eq:mjost_jost}\tilde{\psi}(x,z)=\Gamma(z+1)2^z\lambda^{-\frac{z}{2}}\psi(x,z). \end{equation}
Consider now the $S$-function for $k\in \R$: 
\[ S(k)=\frac{\tilde{\psi}(0,ik)}{\tilde{\psi}(0,-ik)}=\frac{\tilde{\psi}(0,ik)}{\overline{\tilde{\psi}(0,ik)}},\] and set: \[F_S(x)=\frac{1}{2\pi} \int_\R [1-S(k)]e^{ikx}\textrm{d}k.\]

Zeros on the positive half-line of the (modified) Jost function correspond to negative Dirichlet eigenvalues. For nicely decaying potentials like those we consider, they are finite in number and simple (see for example~\cite[\S 4.2, Property 5, p79]{Le2018}). Let us denote them: $0<\alpha_1<\alpha_2<\dots< \alpha _N$. Corresponding eigenfunctions are given by: $\phi_k=\tilde{\psi}(.,\alpha_k).$ Define now:
\[ m_k = \int_0^\infty |\phi_k|^2\dd x =- \frac{\tilde{\psi}'(0,\alpha_k)\dot{\tilde{\psi}}(0,\alpha_k)}{2\alpha_k}, \quad k \in \{1,\dots,k\}.\]
In the above $\dot{}$ denotes differentiation with respect to $z$.
Now set:
\[F(x)= \sum_{k=1}^N\frac{e^{-\alpha_kx}}{m_k}+F_S(x).\]

\begin{lemm}
Suppose that two warped balls have the same Regge poles, then their respective $S$-functions and the coefficients $m_k$ coincide.
\end{lemm}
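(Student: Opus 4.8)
The plan is to reduce everything to the single function $z \mapsto \tilde{\psi}(0,z)$, which is already pinned down by the Regge poles. Indeed, by Proposition~\ref{prop:jost_unique} two warped balls with the same Regge poles have the same \emph{modified} Jost function $z\mapsto \psi(0,z)$; since the normalisation factor $\Gamma(z+1)2^z\lambda^{-\frac{z}{2}}$ relating $\psi$ to $\tilde{\psi}$ in~\eqref{eq:mjost_jost} is universal (it does not depend on the potential), the ordinary Jost functions $z\mapsto\tilde{\psi}(0,z)$ of the two balls also coincide. The coincidence of the $S$-functions is then immediate, since by definition $S(k)=\tilde{\psi}(0,ik)/\tilde{\psi}(0,-ik)$ is built only out of $\tilde{\psi}(0,\cdot)$; likewise the number $N$ of bound states and their positions $0<\alpha_1<\dots<\alpha_N$, being the positive real zeros of $\tilde{\psi}(0,\cdot)$, are the same for both balls.

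It remains to treat the norming constants $m_k$. First I would observe that the $z$-derivative $\dot{\tilde{\psi}}(0,\alpha_k)$ appearing in the defining formula $m_k=-\tilde{\psi}'(0,\alpha_k)\dot{\tilde{\psi}}(0,\alpha_k)/(2\alpha_k)$ is, like $\alpha_k$ itself, a quantity extracted from $\tilde{\psi}(0,\cdot)$ alone and hence common to both balls. The only term that is \emph{a priori} not visibly determined by $\tilde{\psi}(0,\cdot)$ is the transverse ($x$-)derivative $\tilde{\psi}'(0,\alpha_k)$. The crux of the proof is therefore to express $\tilde{\psi}'(0,\alpha_k)$ in terms of boundary values of $\tilde{\psi}(0,\cdot)$.

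To this end I would use the Wronskian identity
\[ \tilde{\psi}(0,z)\,\tilde{\psi}'(0,-z) - \tilde{\psi}'(0,z)\,\tilde{\psi}(0,-z) = 2z, \]
valid for all $z$. It is obtained by noting that, since equation~\eqref{eq:schro_v} depends on $z$ only through $z^2$, both $\tilde{\psi}(\cdot,z)$ and $\tilde{\psi}(\cdot,-z)$ solve it; their Wronskian is therefore independent of $x$, and evaluating it as $x\to+\infty$ using $\tilde{\psi}(x,\pm z)\sim e^{\mp z x}$ (valid beyond the support of $K$, where $\tilde{\psi}$ reduces to a normalised Bessel function) gives the constant $2z$. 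Evaluating this identity at $z=\alpha_k$, where $\tilde{\psi}(0,\alpha_k)=0$, yields $\tilde{\psi}'(0,\alpha_k)\,\tilde{\psi}(0,-\alpha_k)=-2\alpha_k$; in particular $\tilde{\psi}(0,-\alpha_k)\neq 0$. Substituting $\tilde{\psi}'(0,\alpha_k)=-2\alpha_k/\tilde{\psi}(0,-\alpha_k)$ into the formula for $m_k$ collapses it to
\[ m_k = \frac{\dot{\tilde{\psi}}(0,\alpha_k)}{\tilde{\psi}(0,-\alpha_k)}, \]
an expression involving only $\tilde{\psi}(0,\cdot)$ and its $z$-derivative. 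Since these data agree for two balls with the same Regge poles, so do the $m_k$, completing the proof.

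The main obstacle is the justification of the Wronskian identity, specifically the control of $\tilde{\psi}'(x,z)$ as $x\to+\infty$ needed to read off the constant $2z$. This is where the compact support of the transformation kernel $K$ and the small-argument asymptotics of the Bessel function (Lemma~\ref{lemm:asymptotic_bessel}) enter: for $x$ beyond the support one has $\tilde{\psi}(x,z)=\Gamma(z+1)2^z\lambda^{-\frac{z}{2}}J_z(\sqrt{\lambda}e^{-x})$ exactly, so that both $\tilde{\psi}(x,z)$ and $\tilde{\psi}'(x,z)$ have clean exponential asymptotics and the boundary term at infinity is unambiguous. A minor point to keep track of is that the normalisation factor has poles at the negative integers; this is harmless here because the final expression is a ratio and $-\alpha_k$ avoids these poles for generic bound-state positions, and in any case the whole argument can be run verbatim with the entire function $\psi$ in place of $\tilde{\psi}$, at the cost of carrying the explicit factor $n(z)n(-z)$, $n(z)=\lambda^{\frac{z}{2}}/(2^z\Gamma(z+1))$, in the Wronskian.
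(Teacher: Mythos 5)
Your proposal is correct and follows essentially the same route as the paper: reduce to the (modified) Jost function via Proposition~\ref{prop:jost_unique} and the universal normalisation~\eqref{eq:mjost_jost}, then recover $\tilde{\psi}'(0,\alpha_k)$ from the Wronskian identity $W(\tilde{\psi}(0,z),\tilde{\psi}(0,-z))=2z$ evaluated at the bound states. Your added justification of the Wronskian identity via the asymptotics at $x\to+\infty$ beyond the support of $K$, and the explicit collapsed formula $m_k=\dot{\tilde{\psi}}(0,\alpha_k)/\tilde{\psi}(0,-\alpha_k)$, are fine details the paper leaves implicit, but the argument is the same.
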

\begin{proof}
In Proposition~\ref{prop:jost_unique}, we have already established equality of the modified Jost functions, it follows from Equation~\eqref{eq:mjost_jost}, that the usual Jost functions coincide also. It follows immediately that the functions $S$ and $F_S$ must be identical.

In order to show that the $m_k$ must coincide it suffices to show that $\tilde{\psi}'(0,\alpha_k)$ can be determined from the Jost function $\tilde{\psi}(0,\alpha_k)$. However, the Wronskian is known independently of the potential $Q_f$ to satisfy:
\[W(\tilde{\psi}(0,z),\tilde{\psi}'(0,-z))=2z.\]
If $\alpha_k>0$ is a Regge pole it follows then that:
\[ \tilde{\psi}'(0,\alpha_k)=-\frac{2\alpha_k}{\tilde{\psi}(0,-\alpha_k)}. \]
Note that the denominator cannot vanish as $\alpha_k\neq 0$.
Therefore the $m_k$ must also be identical.
\end{proof}
\begin{coro}
If two warped balls have the same Regge poles then the corresponding functions $F$ are identical.
\end{coro}

The relevance of the function $F$ for the inverse problem is that if $K$ is the Gelfand-Levitan transformation operator then it satisfies the so called basic integral equation:
\[F(x+y)+K(x+y) + \int_x^{\infty}K(x,t)F(t+u)\dd t=0, \quad (y\geq x) \tag{{\cite[Equation (4.5.8)]{Le2018}}}. \]
Based on the theory of this equation they show that:
\begin{thm}[{\cite[Theorem 4.7.1]{Le2018}}]
Let two potentials $q$ and $\tilde{q}$ satisfy $\int_0^\infty x|q(x)|\dd x<+\infty$ and assume that the corresponding functions $F_q,F_{\tilde{q}}$ are equal then:
$q=\tilde{q}$.
\end{thm}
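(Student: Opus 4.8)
The plan is to run the classical Gelfand--Levitan--Marchenko reconstruction backwards from the data: the function $F$ determines, via the basic integral equation, the transformation kernel $K$, and the diagonal of $K$ determines the potential. Uniqueness will then follow from the unique solvability of that integral equation together with a trace formula.

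Fix $x\geq 0$ and regard the basic integral equation
\[
  K(x,y) + F(x+y) + \int_x^{\infty} K(x,t)\,F(t+y)\,\dd t = 0, \qquad y\geq x,
\]
as a Fredholm equation of the second kind for the function $y\mapsto K(x,y)$ on $L^2(x,+\infty)$. Writing $\mathcal{F}_x$ for the integral operator $(\mathcal{F}_x h)(y)=\int_x^\infty F(t+y)h(t)\,\dd t$, it reads $(I+\mathcal{F}_x)K(x,\cdot)=-F(x+\cdot)$. Under the hypothesis $\int_0^\infty t\,|q(t)|\,\dd t<\infty$ the function $F$ decays rapidly enough at $+\infty$ that $\mathcal{F}_x$ is Hilbert--Schmidt, hence compact, on $L^2(x,+\infty)$; by the Fredholm alternative, solvability of the equation reduces to the triviality of the homogeneous equation $(I+\mathcal{F}_x)h=0$.

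The main obstacle is precisely this injectivity of $I+\mathcal{F}_x$, which I would obtain from a positivity estimate $\langle(I+\mathcal{F}_x)h,h\rangle>0$ for every nonzero $h\in L^2(x,+\infty)$. Here one must use the spectral provenance of $F=\sum_{k=1}^N m_k^{-1}e^{-\alpha_k\,\cdot}+F_S$ rather than mere decay. Splitting the quadratic form, the discrete part contributes $\sum_{k}m_k^{-1}\big|\int_x^\infty e^{-\alpha_k t}h(t)\,\dd t\big|^2\geq 0$ since the norming constants $m_k$ are positive, while the continuous part combines with $\|h\|^2$ after Plancherel: using that the $S$-function is unimodular on the real axis, $|S(k)|=1$, and satisfies $S(-k)=\overline{S(k)}$, this contribution is again nonnegative and the total vanishes only for $h=0$. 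Positivity yields injectivity, hence invertibility of $I+\mathcal{F}_x$, so the basic integral equation determines $K$ uniquely from $F$.

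It remains to pass from $K$ to the potential. Since $F_q=F_{\tilde q}$ by hypothesis, the basic integral equations for the two problems coincide for every $x$, and by the uniqueness just established their solutions agree: $K=\tilde K$. Restricting the fixed-point formula of Section~\ref{sec:transfo} to the diagonal $t=x$ makes the double integral vanish and gives $K(x,x)=\tfrac12\int_x^\infty Q_f(s)\,\dd s$, whence the trace formula
\[
  Q_f(x) = -2\,\frac{\dd}{\dd x}K(x,x).
\]
As the reference potential $q_0=-\lambda e^{-2x}$ is common to both problems and $q=q_0+Q_f$, the identity $K=\tilde K$ forces $Q_f=\widetilde{Q_f}$ and therefore $q=\tilde q$.
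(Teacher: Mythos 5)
The paper does not actually prove this statement: it is imported as a black box from Levitan's book \cite[Theorem 4.7.1]{Le2018}, so the only benchmark is the textbook argument, which is indeed the one you are reconstructing (unique solvability of the Marchenko equation via positivity of $I+\mathcal{F}_x$, then recovery of the potential from the diagonal of the kernel). Your architecture is the right one, but the central analytic step is missing. Your positivity computation --- the discrete part contributes $\sum_k m_k^{-1}\bigl|\int_x^\infty e^{-\alpha_k t}h(t)\,\dd t\bigr|^2\ge 0$, and for the continuous part Parseval together with $|S(k)|=1$ gives $\bigl|\frac{1}{2\pi}\int_\R S(k)\hat h(k)\overline{\hat h(-k)}\,\dd k\bigr|\le\|h\|^2$ by Cauchy--Schwarz --- only yields $\langle(I+\mathcal{F}_x)h,h\rangle\ge 0$. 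Nonnegativity is not enough: a compact self-adjoint $\mathcal{F}_x$ with $I+\mathcal{F}_x\ge 0$ may still have $-1$ as an eigenvalue, in which case the Fredholm alternative gives no unique solvability. The whole content of the lemma is the \emph{strict} positivity, i.e.\ that $\langle(I+\mathcal{F}_x)h,h\rangle=0$ forces $h=0$. Establishing it requires analysing the equality case of Cauchy--Schwarz, which produces a functional relation of the form $\hat h(-k)=c\,S(k)\hat h(k)$ on $\R$, and then invoking the Paley--Wiener theorem (the Fourier transform of $h\in L^2(x,+\infty)$ extends holomorphically to a half-plane, while $k\mapsto\hat h(-k)$ extends to the opposite one) together with the analytic continuation properties of $S$ to conclude $\hat h\equiv 0$. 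This is the heart of \cite[\S 4.5]{Le2018} and cannot be replaced by the assertion that the total vanishes only for $h=0$.

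A second, more easily repaired, defect is the identification of the kernel at the end. The solution of the basic integral equation built from the scattering data $\{S,\alpha_k,m_k\}$ is the transformation kernel relative to the \emph{free} equation, namely the $A(x,t)$ appearing in $\tilde\psi(x,ik)=e^{ikx}+\int_x^\infty A(x,t)e^{ikt}\,\dd t$, whose diagonal satisfies $A(x,x)=\frac{1}{2}\int_x^\infty q(s)\,\dd s$ and hence recovers the full potential through $q(x)=-2\frac{\dd}{\dd x}A(x,x)$. The fixed-point kernel of Section~\ref{sec:transfo} is a different object: it intertwines $H$ with the reference operator of potential $q_0=-\lambda e^{-2x}$ (its representation of the Jost solution involves Bessel functions, not exponentials) and it does not satisfy the Marchenko equation with this $F$; moreover the theorem is stated for arbitrary $q$ with $\int_0^\infty x|q(x)|\,\dd x<\infty$, where no splitting $q=q_0+Q_f$ is available. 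Your conclusion survives once the correct trace formula is used, since $q=\tilde q$ is then read off directly from the diagonal of the uniquely determined kernel.
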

The uniqueness result, Theorem~\ref{MainInverse} which we restate here, then follows directly:
\begin{thm} If two warped balls have the same Regge poles then the potentials $Q_f$ are identical. Assuming that the boundary values $f(0), f'(0)$ are also equal, it follows that the warped balls are defined by the same conformal factor $f$. \end{thm}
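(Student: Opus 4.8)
The plan is to assemble the pieces already proved in the excerpt and feed them into the Gelfand--Levitan--Marchenko machinery. The final theorem asserts two things: first, that sharing the Regge poles forces $Q_f = Q_{\tilde f}$; and second, that with the additional boundary normalisations $f(0)=\tilde f(0)$ and $f'(0)=\tilde f'(0)$ one recovers $f = \tilde f$. I would treat these separately, since the first is the genuine inverse-spectral content and the second is essentially an ODE uniqueness statement.

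For the first part, I would chain together the results already in hand. By Proposition~\ref{prop:jost_unique}, equality of the Regge poles (the zeros of the entire, order-$1$ modified Jost function $\psi(0,\cdot)$) forces the modified Jost functions to coincide; the argument there rests on the Hadamard factorisation and the \emph{universal} asymptotics in the right half-plane from Lemma~\ref{lemm:asymp_pos_hp}, which pin down the exponential type factor $e^{\alpha + z\beta}$ from the canonical product alone. From the relation~\eqref{eq:mjost_jost} the ordinary Jost functions $\tilde\psi(0,z)$ then agree as well. The preceding lemma and its corollary upgrade this to equality of the full scattering data: the $S$-function $S(k)$, hence $F_S$, and the norming constants $m_k$ all coincide, where the crucial point is that the Wronskian identity $W(\tilde\psi(0,z),\tilde\psi(0,-z)) = 2z$ lets one recover $\tilde\psi'(0,\alpha_k)$ from the Jost function at the eigenvalue. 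Consequently the spectral function $F$ is identical for the two warped balls. Since our potential $q = -\lambda e^{-2x} + Q_f$ satisfies the Faddeev-type moment condition $\int_0^\infty x|q(x)|\,\dd x < \infty$, the Marchenko uniqueness theorem (\cite[Theorem 4.7.1]{Le2018}, quoted above) applies verbatim and yields $q = \tilde q$, whence $Q_f = Q_{\tilde f}$.

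For the second part I would unwind the definition of $Q_f$ to reconstruct $f$. Recall
\[
  Q_f = \frac{(f^{\frac n2 -1})''}{f^{\frac n2 -1}} - \frac{(n-2)^2}{4} - \lambda(f - e^{-x})(f + e^{-x}).
\]
Equality $Q_f = Q_{\tilde f}$ is therefore a second-order nonlinear ODE relating $f$ and $\tilde f$; setting $w = f^{\frac n2 -1}$, the map $f \mapsto w$ is a diffeomorphism onto positive functions (as $f>0$), so it suffices to show $w = \tilde w$. Writing out the difference of the two equations produces a relation of the schematic form $w''/w - \tilde w''/\tilde w = \lambda(f^2 - \tilde f^2)$, which I would convert into a linear homogeneous second-order ODE satisfied by $w$ and $\tilde w$ \emph{with the same coefficient function} $Q_f + (n-2)^2/4 + \lambda e^{-2x}$, namely $w'' = (Q_f + \tfrac{(n-2)^2}4 + \lambda e^{-2x} - \lambda f^2)w$. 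The subtlety is that the coefficient itself depends on $f$ through the $\lambda f^2$ term, so the two functions do not a priori solve the same linear equation. I would handle this by noting that $w$ and $\tilde w$ are the unique (up to scale) decaying solutions at $+\infty$ of their respective equations, and the normalisations $f(0)=\tilde f(0)$, $f'(0)=\tilde f'(0)$ fix the Cauchy data of $w$ and $w'$ at $x=0$; standard uniqueness for the resulting nonlinear Cauchy problem then forces $f = \tilde f$ on all of $\R_+$.

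The main obstacle is the last step: turning $Q_f = Q_{\tilde f}$ into $f = \tilde f$ is not purely linear because of the $\lambda f^2$ term, so one cannot simply invoke linear Wronskian uniqueness. I expect the cleanest route is to rewrite the identity as a first-order system for the pair $(f, f')$ driven by the \emph{known} function $Q_f$ and then apply a Cauchy--Lipschitz/Gr\"onwall argument using the prescribed initial data at $x=0$; the regularity hypotheses~\eqref{hyp:cf2} guarantee the coefficients are Lipschitz, so uniqueness holds. The rest of the proof is essentially bookkeeping, citing results already established in the excerpt.
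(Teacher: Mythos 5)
Your proposal is correct and follows essentially the same route as the paper: equality of Regge poles gives equality of the (modified) Jost functions via Proposition~\ref{prop:jost_unique}, hence of the $S$-function, norming constants and spectral function $F$, and Marchenko's uniqueness theorem then yields $q=\tilde q$, i.e.\ $Q_f=Q_{\tilde f}$. Your final step, recasting $Q_f=Q_{\tilde f}$ as a nonlinear Cauchy problem for $w=f^{\frac n2-1}$ with prescribed data $f(0),f'(0)$ and invoking Cauchy--Lipschitz uniqueness, is exactly the ``simple argument using uniqueness of Cauchy for second order ODEs'' that the paper leaves implicit, spelled out in slightly more detail.
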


\subsection{A formula for the Weyl-Titchmarsh function}
We will now solve the inverse problem again following the strategy of~\cite{BKW2003}, under the Assumption~\ref{cf3}. Our aim is to choose a sequence $\gamma_n$ of contours that grow in the limit $n\to +\infty$ and such that for fixed $z$:
\[ \lim_{n\to+\infty} \frac{1}{2i\pi} \int_{\gamma_n}h_z(\mu)m(\mu)\dd \mu =0. \]
In the above, $h_z$ is an auxiliary function that we will choose after finding such contours. This will provide the formula stated in Theorem~\ref{thm:wt_expression}. 
The choice of contour is based on the study of the asymptotics of $m$ using the results of the previous section and the relation:
\begin{lemm}
\[ m(z)= m(-z) -\frac{2\sin(z\pi)}{\pi \psi(0,z)\psi(0,-z)}.\]
\end{lemm}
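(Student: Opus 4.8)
The plan is to exploit the fact that the spectral parameter in~\eqref{eq:schro_v} is $-z^2$, which is invariant under $z \mapsto -z$. Hence the two modified Jost solutions $\psi(\cdot, z)$ and $\psi(\cdot, -z)$ satisfy the \emph{same} second-order linear ODE $-u'' + q u = -z^2 u$, which has no first-order term, so their Wronskian
\[ W(x) := \psi(x,z)\,\psi'(x,-z) - \psi'(x,z)\,\psi(x,-z) \]
is independent of $x$ by Abel's formula. The claimed identity is exactly this Wronskian relation evaluated at $x=0$ and divided by $\psi(0,z)\psi(0,-z)$: since $m(z) = \psi'(0,z)/\psi(0,z)$, dividing $W(0)$ by $\psi(0,z)\psi(0,-z)$ yields $m(-z) - m(z) = W/(\psi(0,z)\psi(0,-z))$. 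Everything therefore reduces to showing that $W = 2\sin(\pi z)/\pi$.

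To compute the constant $W$ I would evaluate it at any point $x_0 > a$. Since the transformation kernel $K$ is supported in $\{x + t \le 2a\}$ (see the Remark in Section~\ref{sec:transfo}), the integral term in~\eqref{eq:modified_jost} vanishes for $x > a$, so that $\psi(x, z) = J_z(\sqrt{\lambda}\,e^{-x})$ and $\psi(x,-z) = J_{-z}(\sqrt{\lambda}\,e^{-x})$ there. After the change of variable $t = \sqrt{\lambda}\,e^{-x}$, for which $\partial_x = -t\,\partial_t$, the $x$-Wronskian of $J_z$ and $J_{-z}$ equals $-t$ times their standard $t$-Wronskian $W_t\{J_z, J_{-z}\} = -2\sin(\pi z)/(\pi t)$, giving $W = 2\sin(\pi z)/\pi$, independent of $x_0$ as it must be.

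Alternatively, and perhaps more transparently, one may bypass the Bessel Wronskian by using the asymptotics $\psi(x, z) \sim c(z) e^{-xz}$ as $x \to +\infty$, with $c(z) = \lambda^{z/2} / (2^z \Gamma(z+1))$, which follow from the small-argument behaviour of the Bessel function and Lemma~\ref{lemm:asymptotic_bessel}. The Wronskian of $c(z)e^{-xz}$ and $c(-z)e^{xz}$ is $2z\,c(z)c(-z)$, and the Euler reflection formula $\Gamma(z+1)\Gamma(1-z) = \pi z/\sin(\pi z)$ reduces $c(z)c(-z) = 1/(\Gamma(z+1)\Gamma(1-z))$ to $\sin(\pi z)/(\pi z)$, producing $W = 2\sin(\pi z)/\pi$ once more. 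This is consistent with the normalisation $W(\tilde\psi(\cdot,z), \tilde\psi(\cdot,-z)) = 2z$ recorded in the previous subsection, through the relation $\tilde\psi = \Gamma(z+1)2^z\lambda^{-z/2}\psi$.

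I do not expect a serious obstacle: this is a standard constant-Wronskian computation. The only points requiring care are, first, justifying that the two Jost solutions solve literally the same equation (immediate from the dependence on $z^2$), and, second, reading the final identity as one between meromorphic functions on $\mathbb{C}$. It is first established on the open half-plane $\Re z > 0$ where the Jost solutions and $m$ are directly defined, and then extends to all of $\mathbb{C}$ by the meromorphic continuation established in Section~\ref{Perturbed}; the poles of $\psi(0,\pm z)$ (the Regge poles) and the zeros of $\sin(\pi z)$ at integers are consistently accounted for on both sides.
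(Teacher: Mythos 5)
Your proposal is correct and follows essentially the same route as the paper: both reduce to the constancy of the $x$-Wronskian of $\psi(\cdot,z)$ and $\psi(\cdot,-z)$, evaluate it for $x>a$ where the transformation kernel vanishes so that $\psi=\psi_0=J_{\pm z}(\sqrt{\lambda}e^{-x})$, and invoke Watson's formula $W_t(J_z,J_{-z})=-2\sin(\pi z)/(\pi t)$ together with the change of variable $t=\sqrt{\lambda}e^{-x}$ to get the constant $2\sin(\pi z)/\pi$. The sign bookkeeping and the division by $\psi(0,z)\psi(0,-z)$ are handled correctly, so no changes are needed.
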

\begin{proof}
Recall from~\cite[\S 3.12,p.43, Equation~(2)]{watson1995treatise} that:
\[W(J_z(t), J_{-z}(t))=-\frac{2\sin{z\pi}}{\pi t}.\]
Thus: 
\[ W(\psi_0(x,z),\psi_0(x,-z))=W(\psi_0(0,z),\psi_0(0,-z))=\frac{2\sin z\pi}{\pi}. \]
Since $Q_f$ has support $[0,a]$, for any $x\geq a, z\in \C$, $\psi(x,z)=\psi_0(x,z)$, so that:
\[ W(\psi(0,z),\psi(0,-z))=W(\psi(x,z),\psi(x,-z))=\frac{2\sin z\pi}{\pi}.\]
The desired equality then follows from the definition of the Wronskian.
\end{proof}

The asymptotics of $m$ in the half-plane $\Re z \geq 0$ is well-known for a wide class of potentials, we can nevertheless find these asymptotics here directly.
Note that:
\[ \psi'(0,z)=-\sqrt{\lambda}J_z'(\sqrt{\lambda}) - K(0,0)J_z(\sqrt{\lambda}) +\int_0^\infty \partial_xK(0,s)J_z(\sqrt{\lambda}e^{-s})\textrm{d}s.\]
The integral is estimated as before and using Lemma~\ref{lemm:wt_bessel_asymp} we find that, on the half-plane $\Re z\geq 0$:
\[\psi'(0,z)=\gO{z\frac{\lambda^{\frac{z}{2}}}{2^{z}\Gamma(z+1)}}{|z|}{\infty}.\]
Observe now that it follows from Equation~\eqref{eq:jost_re_pos} combined with Rouché's lemma that, for large enough $|z|$, $\psi(0,z)$ does not vanish and hence:
\begin{equation} \label{eq:asymp_M_php} m(z) = \gO{|z|}{|z|}{\infty}, \ \Re z \geq 0.\end{equation}

Lemma~\ref{lemm:asymp_sector} enables us to extend this to sectors of the form: \[S_{\varepsilon,\delta}=\{\Re z < -\varepsilon |\Im z| \}\cap U_\delta, \varepsilon >0.\] 
Indeed, appealing once more to Rouché's lemma, Lemma~\ref{lemm:asymp_sector} shows that asymptotically there are no zeros in this region. 
Moreover, combining the asymptotics, one has:
\[ \begin{aligned} \psi(0,z)\psi(0,-z)&=\frac{\partial^p_xK(0,2a^{-})e^{-2az}}{z\Gamma(z)\Gamma(1-z)}\left(1+ \po{1}{|z|}{\infty} \right),\\&= \frac{\sin(\pi z)\partial^p_xK(0,2a^{-})e^{-2az}}{z\pi}\left(1+ \po{1}{|z|}{\infty} \right),\end{aligned} \]
where in the last equality we have used the Complement Formula for the Gamma function. Therefore, for $|z|$ large enough and $z\in S_{\varepsilon,\delta}$:

\begin{equation}\label{eq:asymp_M_sector} m(z) = m(-z) - \frac{2z^{p+3}e^{2az}}{\partial^p_xK(0,2a^{-})}\left(1+\po{1}{|z|}{\infty} \right)= m(-z)+ \po{1}{|z|}{\infty}. \end{equation}

We must now treat the regions near the imaginary axis defined by the inequalities:
\[ 0 \geq \Re z \geq -\varepsilon|\Im z|.\]
This will not be achieved uniformly but instead on specific circular arcs with increasing radius. Thanks to the asymptotics in Lemma~\ref{lemm:asymp_saut}, we can refer to~\cite[Lemma 6]{BKW2003}, which can be applied directly to show:
\begin{lemm}
\label{lemm:lemm6brown}
Let $\varepsilon >0$. There is a real $\tau$, such that for all sufficiently large $n\in \mathbb{N}$, \[ \psi(z,0)= \frac{\lambda^{\frac{z}{2}}}{2^z\Gamma(z+1)}g(z),\]
where $|g(z)| \geq \frac{1}{3}$ on all circular arcs given by $|z|=\frac{(2n\pi + \tau)}{2a}$ and $-\varepsilon |\Im z| \leq \Re z\leq 0$.
\end{lemm}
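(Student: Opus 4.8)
The plan is to strip off the explicit prefactor and reduce the statement to a phase-avoidance argument for the exponential $e^{-2az}$, exactly as in \cite[Lemma 6]{BKW2003}. First I would define
\[ g(z) = \frac{2^z\Gamma(z+1)}{\lambda^{z/2}}\,\psi(0,z), \]
so that the factorisation $\psi(0,z)=\frac{\lambda^{z/2}}{2^z\Gamma(z+1)}g(z)$ holds by construction and the whole content of the lemma becomes the lower bound $|g(z)|\geq\frac13$. By Lemma~\ref{lemm:asymp_saut}, in the open half-plane $\Re z<0$ one has
\[ g(z) = 1 + \gO{\tfrac{1}{|z|}}{|z|}{\infty} + T(z)\left(1+\po{1}{|z|}{\infty}\right), \qquad T(z):=-c_0\,\frac{e^{-2az}}{z^{p+1}}, \]
with $c_0:=\partial^p_sK(0,2a^-)\neq 0$. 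The task is thus to bound $|1+T(z)|$ from below uniformly on the arcs while absorbing the $o(1)$ and $O(1/|z|)$ errors, which is harmless precisely because the radii $R_n:=\frac{2n\pi+\tau}{2a}$ tend to infinity.

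Next I would split the relevant part of each arc according to the size of $|T(z)|=|c_0|\,e^{-2a\Re z}/|z|^{p+1}$. In the regime $|T(z)|\leq\frac12$ the triangle inequality gives $|g(z)|\geq 1-\frac12-o(1)\geq\frac13$ for $|z|$ large; in the regime $|T(z)|\geq 2$ one gets $|g(z)|\geq |T(z)|(1-o(1))-1-O(1/|z|)\geq\frac13$, again for $|z|$ large. Since along $|z|=R_n$ the real part runs through an interval of length $\sim\varepsilon R_n$, the quantity $|T(z)|$ sweeps from $\sim |c_0|/R_n^{p+1}$ (tiny, near $\Re z=0$) up to exponentially large values, so the only delicate zone is the thin band $\tfrac12\leq|T(z)|\leq 2$. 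That band sits near the curve $\Re z\approx-\tfrac{p+1}{2a}\log|z|$, hence, relative to $R_n$, asymptotically close to the imaginary axis, i.e. near $z\approx\pm iR_n$.

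The heart of the matter, and the main obstacle, is to keep $T(z)$ bounded away from $-1$ on this intermediate band. There,
\[ \arg T(z) = \arg(-c_0) - 2a\,\Im z - (p+1)\arg z, \]
and since the band lies near $z\approx\pm iR_n$ we have $\Im z\approx\pm R_n$ and $\arg z\approx\pm\tfrac{\pi}{2}$. Choosing the radii so that $2aR_n=2n\pi+\tau$ pins $-2a\,\Im z$ to $\mp\tau\pmod{2\pi}$ up to a vanishing error, so that on the two components of the band $\arg T(z)$ converges uniformly to the two limiting values $\arg(-c_0)\mp\tau\mp(p+1)\tfrac{\pi}{2}$. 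Because these two limits depend linearly on $\tau$ and move in opposite directions, it suffices to fix a real $\tau$ for which both avoid $\pi\pmod{2\pi}$ by a definite margin (say $\tfrac{\pi}{4}$); this merely excludes $\tau$ from two arithmetic progressions, so generic $\tau$ works.

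With such a $\tau$ fixed, a short computation with $|1+T|^2=1+|T|^2+2|T|\cos(\arg T)$ shows $|1+T(z)|\geq\frac{1}{\sqrt2}$ on the band for all large $n$, whence $|g(z)|\geq\frac13$ there after absorbing the vanishing errors. Combining the three regimes gives $|g(z)|\geq\frac13$ on the whole arc $|z|=R_n$, $-\varepsilon|\Im z|\leq\Re z\leq 0$, once $n$ is large enough. The residual $o(1)$ and $O(1/|z|)$ terms are controlled uniformly precisely because $R_n\to\infty$, and the scheme is exactly \cite[Lemma 6]{BKW2003} applied to the asymptotics of Lemma~\ref{lemm:asymp_saut}, which is why it can be invoked directly.
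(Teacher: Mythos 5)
Your argument is correct and is, in substance, the paper's own proof: the paper simply invokes \cite[Lemma 6]{BKW2003} applied to the asymptotics of Lemma~\ref{lemm:asymp_saut}, and what you have written is a faithful, self-contained reconstruction of that cited lemma (splitting the arc by the size of $|T|$, locating the critical band near $\Re z\approx -\frac{p+1}{2a}\log R_n$, and choosing $\tau$ so that the phase of $T$ stays away from $\pi$ there). The constants check out ($|1+T|\geq 1/\sqrt2$ on the band, hence $|g|\geq 1/3$ after absorbing the errors), so nothing further is needed.
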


The previous lemmata give the required information to construct contours for our Cauchy integral argument, however first we shall pause to make a remark about non-positive integer zeros.
\begin{lemm}
\label{lemm:positive_integers_simple_zeros}
 Integers $n\in \mathbb{Z}$ are at most simple zeros of $z\mapsto \psi(0,z)$. Furthermore, if $z=n$ is a Regge pole, the residue of $m$ is then given by:
\[ \textrm{res}_{n}(m)=\frac{(-1)^n}{2\left(\frac{\dd}{\dd z}\psi(0,n)\right)^2} \]
\end{lemm}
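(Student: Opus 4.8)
The plan is to exploit the special structure of the modified Jost solution at integer orders, where the two solutions $\psi(\cdot,z)$ and $\psi(\cdot,-z)$ become linearly dependent. Throughout I write $A(z)=\psi(0,z)$, keep $\psi'$ for the $x$-derivative, and reserve $\frac{\dd}{\dd z}$ for differentiation in the order. \textbf{Step 1 (reflection at integers).} I would first record that for every $n\in\Z$ one has $\psi(x,-n)=(-1)^n\psi(x,n)$ for all $x\in\R_+$, which follows immediately from the integral representation~\eqref{eq:modified_jost} together with the classical identity $J_{-n}=(-1)^nJ_n$ valid for integer order. In particular $A(-n)=(-1)^nA(n)$ and $\psi'(0,-n)=(-1)^n\psi'(0,n)$, so $n$ is a Regge pole if and only if $-n$ is, and the corresponding Cauchy data are proportional.

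\textbf{Step 2 (at most simple).} Since $\psi(\cdot,z)$ is a nontrivial solution, $A(z)$ and $\psi'(0,z)$ never vanish simultaneously, so $\psi'(0,n)\neq0$ at any Regge pole. To control the order of the zero I would differentiate~\eqref{eq:schro_v} in $z$: setting $\dot\psi=\frac{\dd}{\dd z}\psi$, one obtains the Green-type identity
\[
  \partial_x\bigl(\psi\,\partial_x\dot\psi-\psi'\,\dot\psi\bigr)=2z\,\psi^2 .
\]
For a nonzero integer pole $n$ the solution $\psi(\cdot,n)$ is real and square integrable on $\R_+$ (directly for $n>0$; for $n<0$ the growing mode has vanishing coefficient because of the factor $1/\Gamma(z+1)$, equivalently by Step~1). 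Now suppose $n$ were a zero of $A$ of order $\geq2$, so that $\frac{\dd}{\dd z}\psi(0,n)=0$ as well. Then $\psi\,\partial_x\dot\psi-\psi'\dot\psi$ vanishes at $x=0$ (both factors $\psi(0,n)$ and $\dot\psi(0,n)$ vanish) and at $x=+\infty$ (by decay), so integrating the identity over $\R_+$ forces $2n\int_0^\infty\psi(x,n)^2\,\dd x=0$, which is impossible for $n\neq0$. Hence $\frac{\dd}{\dd z}\psi(0,n)\neq0$ and the zero is simple. (The threshold case $n=0$, where $\psi(\cdot,0)$ fails to be $L^2$, is degenerate and must be excluded or handled separately.)

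\textbf{Step 3 (the residue).} Simplicity gives at once $\mathrm{res}_n(m)=\psi'(0,n)\big/\frac{\dd}{\dd z}\psi(0,n)$, so everything reduces to evaluating the product $\psi'(0,n)\,\frac{\dd}{\dd z}\psi(0,n)$ in closed form. Here I would start from the identity of the preceding lemma, $m(z)=m(-z)-\frac{2\sin(\pi z)}{\pi\,\psi(0,z)\psi(0,-z)}$: because $A(-n)=0$ as well, \emph{both} terms on the right carry a simple pole at $z=n$, and each residue can be computed from Step~1 and the simplicity of $A$ at $\pm n$. Differentiating the Wronskian relation $W(\psi(\cdot,z),\psi(\cdot,-z))=\frac{2\sin(\pi z)}{\pi}$ once in $z$ and evaluating at $z=n$ yields the relation
\[
  \psi'(0,n)\Bigl[(-1)^n\tfrac{\dd}{\dd z}\psi(0,n)+\tfrac{\dd}{\dd z}\psi(0,-n)\Bigr]=2(-1)^n ,
\]
and to eliminate $\frac{\dd}{\dd z}\psi(0,-n)$ I would bring in the $Y_n$-type second solution $\chi(\cdot,n)$ together with the Bessel cross-Wronskian $W(J_n,Y_n)(t)=\frac{2}{\pi t}$, which fixes $\psi'(0,n)\chi(0,n)$ and the link between $\frac{\dd}{\dd z}\psi(0,\pm n)$. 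Feeding these back into $\mathrm{res}_n(m)=\psi'(0,n)/\frac{\dd}{\dd z}\psi(0,n)$ and tidying the $(-1)^n$ factor (from $\cos\pi n$) and the $\tfrac12$ (from the residue of $\Gamma$ at a negative integer) is expected to produce the announced expression.

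\textbf{Main obstacle.} The delicate point is Step~3. The differentiated Wronskian by itself only ties $\frac{\dd}{\dd z}\psi(0,n)$ to $\frac{\dd}{\dd z}\psi(0,-n)$, so the genuine difficulty is to show that $\psi'(0,n)\frac{\dd}{\dd z}\psi(0,n)$ collapses to a pure constant, making the residue depend \emph{only} on $\frac{\dd}{\dd z}\psi(0,n)$; this is exactly where the integer-order Bessel structure (the $Y_n$ second solution and the Gamma residue at negative integers) must be used, and where the sign and factor bookkeeping is most error-prone. I would treat the precise normalising constant with particular care, cross-checking it against the unperturbed case $Q_f=0$ where $A(z)=J_z(\sqrt{\lambda})$ and all quantities are explicit.
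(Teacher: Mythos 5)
Your Step~3 is where the lemma actually lives, and it is precisely the step you leave unfinished. The paper's entire proof \emph{is} the computation you defer: differentiate the identity $W(\psi(0,z),\psi(0,-z))=\tfrac{2\sin \pi z}{\pi}$ once in $z$ and evaluate at $z=n$, where $\psi(0,n)=\psi(0,-n)=0$ kills two of the four terms and leaves $\psi'(0,-n)\tfrac{\dd}{\dd z}\psi(0,n)+\psi'(0,n)\tfrac{\dd}{\dd z}\psi(0,-n)=2\cos\pi n$; using the proportionality of the Cauchy data at $\pm n$ (your Step~1) the paper collapses this to $2\psi'(0,-n)\tfrac{\dd}{\dd z}\psi(0,n)=(-1)^n$ up to normalisation. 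This single line yields \emph{both} conclusions at once: the right-hand side is nonzero, so $\tfrac{\dd}{\dd z}\psi(0,n)\neq 0$ for every integer (positive, negative or zero), and it evaluates the product $\psi'(0,n)\,\tfrac{\dd}{\dd z}\psi(0,n)$, which is exactly what turns $\mathrm{res}_n(m)=\psi'(0,n)\big/\tfrac{\dd}{\dd z}\psi(0,n)$ into the stated closed form. Writing that the bookkeeping ``is expected to produce the announced expression'' concedes the residue formula rather than proving it; the $Y_n$/cross-Wronskian detour you sketch is not what the paper does and is not needed once the surviving Wronskian terms are combined.

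Your Step~2 also has a genuine gap at non-positive integers. The norming-constant argument $\bigl[\psi\,\partial_x\dot\psi-\psi'\dot\psi\bigr]_0^{\infty}=2n\int_0^\infty\psi^2\,\dd x$ requires the boundary term at $+\infty$ to vanish, hence decay of $\dot\psi(\cdot,n)=\tfrac{\dd}{\dd z}\psi(\cdot,z)\big|_{z=n}$. For $n>0$ this holds ($\dot\psi\sim x e^{-nx}$), but for $n=-|n|<0$ the $z$-derivative of $J_z(\sqrt{\lambda}e^{-x})$ at $z=-|n|$ acquires a $Y_{|n|}$-type component growing like $e^{|n|x}$, so $\psi\,\partial_x\dot\psi-\psi'\dot\psi$ tends to a nonzero constant at infinity and the integration argument collapses. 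Nor can you transfer simplicity from $+|n|$ to $-|n|$ via Step~1: the reflection $\psi(x,-n)=(-1)^n\psi(x,n)$ is an identity \emph{at} the integer only, not in a $z$-neighbourhood, so it says nothing about $\tfrac{\dd}{\dd z}\psi(0,-|n|)$. Finally you explicitly set aside $n=0$, which the lemma covers. All of these cases are handled uniformly by the differentiated Wronskian identity above, which is why the paper takes that route for simplicity as well as for the residue.
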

\begin{proof}
Let us assume that $\psi(0,n)=0$ for some $n\in \mathbb{Z}$. Observe first that the Wronskian, $W(\psi(0,z),\psi(0,-z))$ is given by:
\[ W(\psi(0,z),\psi(0,-z)) =\frac{\sin{\pi z}}{\pi}.\]
Differentiating this identity respect to $z$ and evaluating in $z=n$, it is easily seen that:
\[ 2\psi'(0,-n)\frac{\dd}{\dd z}\psi(0,n) = \cos(\pi n)=(-1)^n.\]
So that: $\frac{\dd}{\dd z}\psi(0,n)\neq 0.$
\end{proof}

In the choice of contour we must of course pay attention to the domain on which we derived our asymptotics: $S_{\varepsilon,\delta}$. Let us fix $\varepsilon >0$, set $\delta = \frac{1}{4}$ and choose $\varepsilon > \varepsilon' > 0$.
 Let: \begin{itemize}\item  $\tau$ be the real given by Lemma~\ref{lemm:lemm6brown} and $N_0 \in \mathbb{N}$, be large enough such that the conclusion holds for any $n\geq N_0$. 
 \item $R_0 \in \mathbb{R}_+^*$, such that if $z \in \C_+\cap(\C \setminus \overline{B(0,R_0)})$, then:
 \[ \left \lvert \psi(0,z) - \frac{\lambda^{\frac{z}{2}}}{2^z\Gamma(z+1)}\right \rvert < \left \lvert \frac{\lambda^{\frac{z}{2}}}{2^z\Gamma(z+1)}\right \rvert.\]
 This is possible by Lemma~\ref{lemm:asymp_pos_hp}.
 \item $R_1 \in \mathbb{R}_+^*$, such that if $z\in S_{\varepsilon',\delta}$, then:
 \[\left\lvert \psi(0,z) - \frac{\partial^p_sK(0,2a^{-})\lambda^{\frac{z}{2}}e^{-2az}}{2^z\Gamma(z+1)z^{p+2}} \right\rvert < \left\lvert \frac{\partial^p_sK(0,2a^{-})\lambda^{\frac{z}{2}}e^{-2az}}{2^z\Gamma(z+1)z^{p+2}} \right\rvert.\]
 This is possible by Lemma~\ref{lemm:asymp_sector}.
\end{itemize}
\begin{figure}[h!]
\centering
\includegraphics[scale=.5]{./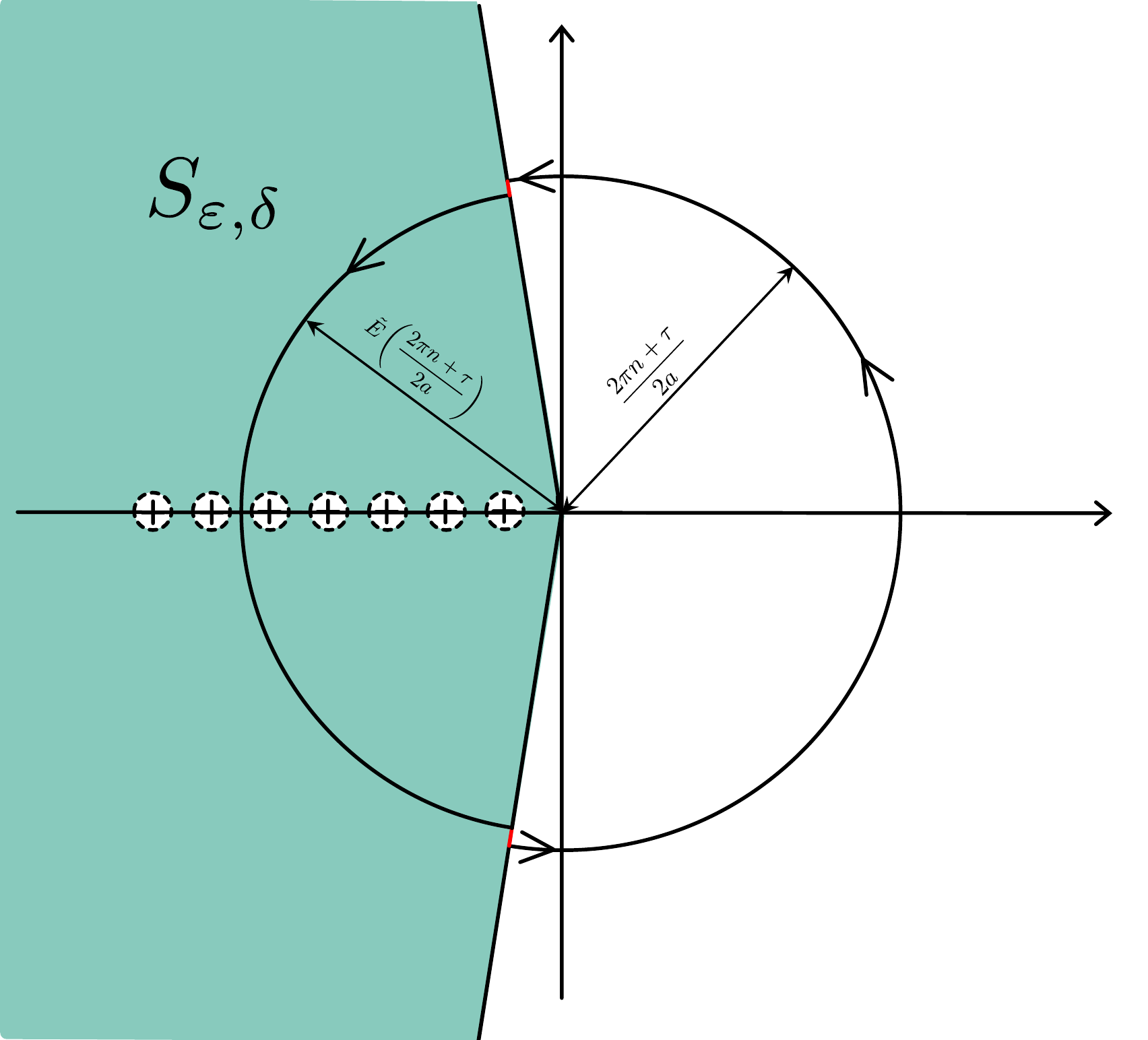}
\caption{\label{fig:contour}Sketch of $S_{\varepsilon,\delta}$ and a generic curve $\gamma_n$.}
\end{figure}

Let us denote $\tilde{E}(x)$ the half-integer closest to $x\in \mathbb{R}$; if $x$ is integer we choose $\tilde{E}(x)=x+\frac{1}{2}$.
 Now choose $N_1 \in \mathbb{N}$ such that for every $n\geq N_1$, we have: \[\min\left(\tilde{E}\left( \frac{2\pi n + \tau}{2a}\right),\frac{2\pi n + \tau}{2a}\right)\geq \max(N_0,R_0,R_1).\]
 
For any $n\geq N_1$, $\psi(0,z)$ none of the zeros of $\psi(0,z)$ are located on the circular arcs defined by:
\[ \Gamma_n^1=C\left(0,\tilde{E}\left(\frac{2\pi n + \tau}{2a} \right)\right)\cap S_{\varepsilon,\delta}, \quad \Gamma_n^2=C\left(0,\frac{2\pi n + \tau}{2a}\right)\cap S_{\varepsilon,\delta}^c.\]

Increasing if necessary $N_1$, we can also assume that there are no zeros on the line segments of $\Re z = -\varepsilon\Im z$ that can be used to join these two circular arcs as in Figure~\ref{fig:contour}, let us denote by $\gamma_n$ the obtained closed contour.

Based on our asymptotics for $z\mapsto m(z)$, the appropriate auxiliary function is: \[h_z(\mu)=\left(\frac{z}{\mu}\right)^{2}\frac{1}{z-\mu}.\] We shall now show that, as desired, for fixed $z\in \mathbb{C}$ such that $z$ is not a zero of $\psi(0,z)$:
 \[ \lim_{n\to+\infty} \frac{1}{2i\pi} \int_{\gamma_n}h_z(\mu)m(\mu)\dd \mu =0. \]
For large enough $n$, $z$ is completely enclosed by the contour. Denote by $s_n$ the union of the two line segments that join the circular arcs and write:
 \[\int_{\gamma_n}h_z(\mu)m(\mu)\dd \mu = \int_{s_n} h_z(\mu) m(\mu) \dd \mu + \int_{\Gamma^1_n}h_z(\mu)m(\mu)\dd \mu + \int_{\Gamma^2_n}h_z(\mu)m(\mu)\dd\mu.\]
$\Gamma^1_n$ and $s_n$ are completed contained in $S_{\varepsilon',\delta}$, hence using Equation~\eqref{eq:asymp_M_sector}, we see that their contribution vanishes when $n\to \infty$ by the mean value theorem.
Appealing to Lemma~\ref{lemm:lemm6brown}, it follows that on the circular arcs given by $|z|=\frac{2n\pi + \tau}{2a}$ and $-\varepsilon |\Im z| \leq \Re z \leq 0$, one has: 
\[m(z)= m(-z) + \po{1}{n}{\infty},\]
combining this with Equation~\eqref{eq:asymp_M_php} and the mean value theorem, it follows that the contribution along $\Gamma^2_n$ vanishes also in the limit $n\to \infty$.

Let us now assume for the moment that $\psi(0,0)\neq 0$. The Cauchy integral theorem applied for large enough $n$ such that $\gamma_n$ encloses $z$ leads to:
\[ \frac{1}{2\pi i}\int_{\gamma_n}h_z(\mu)m(\mu)\dd \mu = -m(z) + m(0)+zm'(0) + \sum_{\text{\parbox{2cm}{\center $z_i$ Regge pole enclosed by $\gamma_n$}}}\textrm{res}_{z_i}(h_zm). \]
Since the limit on the left exists and vanishes, it follows that the sum over all Regge poles converges and:
\begin{equation} \label{WT} 
m(z)= m(0) +zm'(0) + \sum_{\textrm{$z_i$ Regge pole}} \textrm{res}_{z_i}(h_zm). 
\end{equation}

\noindent If it so happens that $\psi(0,0)=0$, then this formula should be adjusted to:
\begin{equation} \label{WT0}
m(z) = g'(0) + z\frac{g''(0)}{2}+\frac{\textrm{res}_0(m)}{z} +\sum_{\underset{\textrm{Regge pole}}{z_i\neq0}} \textrm{res}_{z_i}(h_zm), 
\end{equation}
with $g(\mu)=\mu m(\mu).$ \\

This is in essence the content of Theorem~\ref{thm:wt_expression}:
\begin{proof}[Proof of Theorem~\ref{thm:wt_expression}]
If we assume, for simplicity, that the Regge poles are all simple, using the well-known asymptotics \cite{Si1999}:
$$
 M(-z^2) = -z  + o(1), \quad z \to +\infty, \ z \in \R,   
$$
we obtain from either Equation~\eqref{WT0} or \eqref{WT}  a synthetic expression for the Weyl-Titchmarsh function:
$$
M(-z^2) = -z + \sum_{\underset{\textrm{Regge pole}}{z_i}}  \frac{a_i}{z - z_i}. 
$$
\end{proof}

In order to obtain the desired uniqueness result from our current strategy, we must now argue that the terms in this expansions can all be determined either by known asymptotics for the Weyl-Titchmarsh function $m$, which is the case for the first order polynomial, or directly from the Regge poles. As an intermediate step, let us first explain that the residues can all be determined from the Jost function. If $z_k \neq 0$ is a pole of order $j$ then:
\[ \textrm{res}_{z_k}(h_zm)=\left.\frac{\dd^{j-1}}{\dd \mu^{j-1}} \left(\frac{(\mu-z_k)^j}{z-\mu}\left(\frac{z}{\mu}\right)^2m(\mu) \right)\right|_{\mu=z_k}. \]
Since $m(\mu)=\frac{\psi'(0,\mu)}{\psi(0,\mu)}$, the only quantities appearing in developing this expression that have yet to be determined are that of $\psi'(0,z)$ and its derivatives with respect to $z$, evaluated at the Regge pole $z_j$.  For any possible non-zero integer poles, we have already seen in Lemma~\ref{lemm:positive_integers_simple_zeros} that they are at most simples poles and that the residue of $m$ is completely determined if we know the Jost function. For non-integer poles, the required values can be determined iteratively in the same way by differentiating the Wronskian as many times as necessary. This method works since $\sin$ only vanishes on the integers, so that $\psi(0,z)$ and $\psi(0,-z)$ cannot vanish simultaneously for non-integer values.
For definiteness let us illustrate this for a non-integer Regge pole $z_k$ of order $2$. Recall the following identity:
\[ W(\psi(0,z),\psi(0,-z))=\psi(0,z)\psi'(0,-z)-\psi'(0,z)\psi(0,-z)=\frac{2\sin z\pi}{\pi},\]
When evaluating at $z_k$, we obtain:
\[-\psi'(0,z_k)\psi(0,-z_k)=\frac{2\sin z_k\pi}{\pi}\neq 0, \]
from which we can determine the value of $\psi'(0,z_k)$. Proceeding identically after differentiating with respect to $z$ (denoted by $\dot{}$~) we obtain:
\[ -\dot{\psi}'(0,z_k)\psi(0,-z_k) +\psi'(0,z_k)\dot{\psi}(0,-z_k)=\cos z_k\pi.\]
Since $\psi(0,-z_k)$ does not vanish and $\dot{\psi}(0,-z_k)$ can be calculated if the Jost function is known, we can infer the value of $\dot{\psi}'(0,z_k)$.
When the pole is of higher order, we can repeat this procedure iteratively to calculate all the $z$-derivatives of $\psi'(0,z)$ up to the required order.

This argument show that $m$ is completely determined by the Regge poles if they determine uniquely the Jost function. However, this was established in Proposition~\ref{prop:jost_unique}. We can now give the second proof of Theorem~\ref{MainInverse} :
\begin{proof}[Proof of Theorem~\ref{MainInverse}]
It follows from formulae \eqref{WT} and \eqref{WT0} and the above discussion that the Weyl-Titchmarsh function $M$ is uniquely determined by the set of Regge poles $z_i$. Appealing to the well-known Borg-Marchenko theorem~\cite{Si1999, GeSi2000, Be2001}, it follows that the potential $-\lambda e^{-2x} + Q_f(x)$ is uniquely determined on $\R^+$. Finally, using the explicit form of the potential $Q_f$, and assuming $f(0),f'(0)$ known, a simple argument using uniqueness of Cauchy for second order ODEs shows that the warping function $f$ is also uniquely determined.  	
\end{proof}




\printbibliography[title=Bibliography]

 \end{document}